%% file: analysis.tex
\documentclass[11pt]{article}

\usepackage[margin=1in]{geometry}
\usepackage{amsmath,amssymb,amsthm,mathtools}
\usepackage{graphicx}
\usepackage{tikz}
\usetikzlibrary{arrows.meta,backgrounds,calc,fit,positioning,decorations.pathreplacing}
\definecolor{hc}{HTML}{297F91}
\definecolor{ct}{HTML}{7155A1}
\usepackage{xcolor}
\usepackage{microtype}
\usepackage[colorlinks=true,linkcolor=blue,citecolor=blue,urlcolor=blue]{hyperref}
\usepackage{cleveref}
\usepackage{braket}
\usepackage{physics}
\usepackage{booktabs}
\usepackage{subcaption}
\usepackage[backend=bibtex8,doi=false,isbn=false,url=false,maxbibnames=5,citestyle=numeric-comp,sorting=none,
sortcites=true,
style=numeric-comp]{biblatex}
\bibliography{joel}

\newtheorem{theorem}{Theorem}[section]
\newtheorem{lemma}[theorem]{Lemma}
\newtheorem{corollary}[theorem]{Corollary}
\newtheorem{proposition}[theorem]{Proposition}
\newtheorem{definition}[theorem]{Definition}

\newtheorem{fact}[theorem]{Fact}
\numberwithin{equation}{section}
\usepackage{xcolor}
\usepackage[linesnumbered,ruled,vlined]{algorithm2e}

\definecolor{hard}{HTML}{8B0A50}
\definecolor{easy}{HTML}{1F3F8F}
\definecolor{wire}{HTML}{52616B}
\newcommand{\hard}[2]{\textcolor{hard!#1}{#2}}   
\newcommand{\easy}[2]{\textcolor{easy!#1}{#2}}

\newcommand{\mc}{\mathcal}

\begin{document}

\title{A polynomial-time classical sampler for noisy quantum circuits from statistical mechanics}



\author{%
  Jon Nelson\textsuperscript{1}%
  \thanks{Email: \texttt{nelson1@umd.edu}}
  \footnotemark[3]
  \quad Joel Rajakumar\textsuperscript{1,2} %
    \thanks{Email: \texttt{jrajakum@ucla.edu}}
  %
  \thanks{These authors contributed equally to this work.}
  \\
  Chao Yin\textsuperscript{4}%
    \thanks{Email: \texttt{chaoyin@stanford.edu}}
  \quad Yifan F. Zhang\textsuperscript{5}%
    \thanks{Email: \texttt{yz4281@princeton.edu}}\\
  \quad Michael J. Gullans\textsuperscript{1,5} \thanks{Email: \texttt{mgullans@umd.edu}}\\\\
  {\small \textsuperscript{1}University of Maryland,
    College Park, Maryland, USA}\\
  {\small \textsuperscript{2}Electrical \& Computer Engineering Department, UCLA, Los Angeles,
    California, USA}\\
  {\small \textsuperscript{3}Department of Physics,
    Stanford University, Stanford, California, USA}\\
  {\small \textsuperscript{4}Department of Electrical and
    Computer Engineering, Princeton University,  Princeton, New Jersey, USA}\\
    {\small \textsuperscript{5}QuEra Computing Inc.,
    Boston, Massachusetts, USA}\\
}
\date{}
\maketitle

\begin{abstract}
Developing classical simulation algorithms for noisy quantum circuits is essential to delineating the limits of quantum advantage. Existing classical sampling approaches for general circuits require circuit depths to grow logarithmically with system size, so that noise drives the global output state close to a trivial state. Here we show that local accumulation of noise at a depth independent of system size is already sufficient. Specifically, we prove that any geometrically local circuit composed of unital operations and interspersed with single-qubit depolarizing noise of strength $p$ after $\Theta(p^{-1} \log p^{-1})$ depth can be approximately sampled from a polynomial-time classical computer. This generalizes existing results that impose anticoncentration assumptions or non-universal gate sets in order to obtain a classical sampler at such shallow depths. Our proof maps the output state of the circuit to a polymer model in statistical mechanics, and combines a convergent cluster expansion with hypercontractivity of the depolarizing channel.
\end{abstract}


\section{Introduction} \label{sec:intro}

It is well-established that ideal quantum computers can exponentially outperform classical computers on tasks like factoring and quantum simulation \cite{shor_factoring_1997,lloyd_quantum_simulators_1996}. However, in reality, experimental quantum computers constantly experience noise, which disrupts the computation. Fortunately, the celebrated `threshold theorem' \cite{aharonov_ben-or,knill_noisy_devices_2005}  suggests that if the noise strength remains below a threshold, a noisy quantum computer can reproduce any computation performed by an ideal quantum computer. An essential feature of these fault tolerance protocols is \textit{non-unital} operations enabled by mid-circuit measurements or the ability to reset/reload qubits. Experimental progress toward fault tolerant quantum computers has been promising \cite{bluvstein_logical,google_threshold,quantinuum_arbitrary_geometries,ibm_evidence_for_utility}, but most devices can only natively support \textit{unital} operations on a plane with sparse or costly access to
non-unital operations
\cite{geher_reset_2025,decross_qubit_reuse_2023,chiu_continuous_operation_3000-qubit_system}. Nevertheless, there is evidence that such devices can perform finite-size computations which are out of reach for current classical algorithms \cite{google_phase_transitions,quantinuum_arbitrary_geometries,hangleiter_fault-tolerant}. An important forward-looking question is:\\

\textit{When and how will quantum advantages break down if experimental quantum devices are scaled in size and runtime without also scaling non-unital/non-local resources?} \\

The only known mechanism by which noise (e.g. local depolarizing channels) removes quantum advantage from arbitrary unital quantum circuits is the gradual converence of the output distribution to uniform \cite{aharonov_limitations,muller-hermes_stilck-franca_wolf_relative_entropy_convergence,nelson_rajakumar_limitations}. This establishes classical simulability when circuit depth grows \textit{superlogarithmically} with system size, but leaves open a shallow depth regime with the possibility of quantum advantage, which has been studied extensively in prior work \cite{fujii_noise_threshold,bouland_fefferman_landau_liu_noise,bremner_achieving,rajakumar_watson,anshu_liu_nguyen_pattison_loglogn_iqp,aharonov_limitations}. Further negative results in this shallow regime have only managed to rule out certain \textit{specific} candidates for quantum advantage such as estimation of observables \cite{schuster_polynomial,garcia_cirac_trivedi_pauli,fontana,angrisani_mele_simulating,mele_noise-induced,martinez_simulation,stilck-franca_garcia-patron,de-palma_marvian_rouze_stilck-franca_optimal_transport}, sampling from anti-concentrated output distributions \cite{aharonov_polynomial_2023,bremner_achieving,gao_duan,takahashi_ct-ecs,schuster_polynomial,lee_ghosh_approx_markov}, or executing circuits with non-universal gate sets \cite{nelson_rajakumar_naturally,rajakumar_watson_liu,oh_classical_simulability_linear_optical,park_oh_mps_noisy_iqp,shravan_iqp_damping}. Our work instead provides a definite no-go result for quantum advantage by introducing techniques to classically sample from so-called `worst-case' noisy quantum circuits at low depths, before convergence to uniformity. We show that for \textit{any} quantum circuit with depth greater than a \textit{constant} critical threshold, its output distribution admits an efficient classical sampling algorithm. Thus, we close an existing gap, showing rigorously that building larger and larger quantum devices will not enable any additional quantum computational depth, if non-unital/non-local resources are not also scaled in turn. We also remark that our bounds may apply even to future fault tolerant devices that implement logical unitary gates with logical depolarizing-like noise and some notion of geometric locality.

\begin{theorem}[Informal]\label{thm:informal}
Given the description of any quantum circuit with $d$ layers of $2$-qubit unital operations applied in parallel to qubits that are adjacent in an $O(1)$-dimensional lattice and initialized to $\ket{0}$, if single-qubit depolarizing channels
$\mathcal{N}^{p}(\rho) = (1-p)\rho + p\,\tfrac{\mathbb{I}}{2}$ are introduced to each qubit after each
layer, a polynomial-time classical algorithm can sample from its output distribution with inverse-polynomial total variation distance when $d > d_{crit} = O(p^{-1}\log p^{-1})$.
\end{theorem}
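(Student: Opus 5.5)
The plan follows the abstract's outline: map the output state to a polymer model, control it with a convergent cluster expansion, and feed the resulting marginals into a sequential sampler.

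\textbf{Step 1: Pauli-path expansion.} Write the depolarizing channel as $\mathcal{N}^p = (1-p)\,\mathrm{id} + p\,\mathcal{D}$, where $\mathcal{D}$ is full depolarization. Expanding the last $\ell = \Theta(p^{-1}\log p^{-1})$ layers in the Pauli basis, each Pauli path contributes with weight $(1-p)^{\text{(support size)}}$. Hypercontractivity of $\mathcal{N}^p$ then gives that a non-identity Pauli string surviving backward through $\ell$ noisy layers carries total weight decaying like $e^{-\Omega(p\ell)}$ per unit of spacetime support. With the chosen $\ell$ this is smaller than any fixed constant, and in particular beats the lattice entropy factor $e^{O(1)}$ per site.

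\textbf{Step 2: Polymer representation of the output distribution.} Write the output distribution as
\[
P(x) = 2^{-n}\sum_{S} \widehat{P}(S)\,\chi_S(x).
\]
Group the Pauli paths contributing to a $Z$-string into connected spacetime clusters, which are the polymers. Paths whose backward lightcones in the final $\ell$ layers are disjoint factorize, because of geometric locality and unitality. Concretely, at depth $d\ge \ell$ the circuit's final $\ell$ layers act on a fixed but arbitrary input state. By unitality, any path that reaches back to the beginning of the window with identity support contributes trivially. So $P(x)$, and each marginal $P(x_A)$, becomes $2^{-|A|}\,\Xi_A(x)$, where $\Xi_A$ is a polymer partition function. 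The activities are bounded by $|w(\gamma)|\le (C e^{-cp\ell})^{|\gamma|}$. The input state is arbitrary, so correlations from earlier layers are absorbed into activities that are still bounded in magnitude by hypercontractive decay.

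\textbf{Step 3: Convergent cluster expansion.} Verify the Koteck\'y--Preiss condition
\[
\sum_{\gamma\sim\gamma'}|w(\gamma)|e^{a|\gamma|}\le a|\gamma'|.
\]
This holds once $Ce^{-cp\ell}e^{O(1)}$ is small, which is exactly where $d_{crit}=O(p^{-1}\log p^{-1})$ enters. The condition yields $\log\Xi_A = \sum_{\text{clusters}} \phi$, which converges absolutely. Truncating to clusters of size $O(\log(n/\epsilon))$ gives $\log P(x_{1..k}\mid\cdot)$ to additive error $\epsilon/n$. There are polynomially many such clusters, enumerable in $\mathrm{poly}(n)$ time on an $O(1)$-dimensional lattice.

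\textbf{Step 4: Sampling.} Sample sequentially bit by bit. Compute the conditional probabilities $P(x_k\mid x_{<k}) = \Xi_{[k]}(x_{\le k})/\Xi_{[k-1]}(x_{<k})$ to multiplicative $1\pm\epsilon/n$ accuracy using the truncated expansion. Errors then accumulate to total variation distance at most $\epsilon$.

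\textbf{Main obstacle.} The hardest part is Steps 1--2. The marginal $P(x)$ is a nonnegative but possibly highly concentrated, worst-case distribution, so the cluster expansion must be for $\log$ of probabilities, not for probabilities near uniform. Near-uniformity fails here because anticoncentration is not assumed. My first attempt would be to expand only the last $\ell$ noisy layers and treat the state before them as an arbitrary input. That input has no small parameter, so I would try to show that, conditioned on any bitstring prefix, the hypercontractive contraction from $\ell$ layers makes every polymer activity small regardless of the input. This needs a bound of the form $\|\mathcal{N}^{\otimes}\|_{1\to q}$ that is strong enough to keep the partition function bounded away from zero, which is the delicate point.
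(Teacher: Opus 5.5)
Your Steps 3--4 (Koteck\'y--Preiss, truncation at cluster size $O(\log(n/\epsilon))$, sequential sampling from log-marginals) match the paper. Step 2, however, breaks down. You cut the circuit at depth $d-\ell$, with $\ell=\Theta(p^{-1}\log p^{-1})$ independent of $d$, and treat the incoming state $\sigma$ as arbitrary. A polymer expansion needs the exact factorization $\Tr[O_1O_2\sigma]=\Tr[O_1\sigma]\Tr[O_2\sigma]$ for observables with disjoint backward lightcones. Unitality does not give this; only a product $\sigma$ does, and correlations in $\sigma$ cannot be ``absorbed into activities''. Concretely, let the window consist of identity gates and let $\sigma$ be a GHZ state. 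With $\lambda=(1-p)^{\ell}$ and $\tanh\beta=\lambda$ one finds $P(x_S)=2^{-|S|}(1-\lambda^2)^{|S|/2}\cosh(\beta\sum_{i\in S}(-1)^{x_i})$. Near zero magnetization, its logarithm has a mixed second difference of order $\beta^2$ in two arbitrarily distant bits. No sum of local cluster terms can approximate this to inverse-polynomial accuracy. So the strategy in your last paragraph (small activities ``regardless of the input'') cannot succeed; you must use that $\sigma$ was itself produced from the product state $\rho_0$ by noisy local layers. The paper never cuts the circuit. It coarse-grains into blocks of side $2d$ (the full depth), so non-adjacent blocks have disjoint backward lightcones all the way down to $\rho_0$, and $w(\gamma)=\Tr[\Phi^\dagger(M_\gamma)\rho_0]$ factorizes exactly.

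With blocks of volume $(2d)^D$, the weight bound becomes the real content, and your Step 1 does not supply it. Absolute-value Pauli-path counting fails at small $p$. A $\pi/4$ rotation maps $Z\mapsto(Z+X)/\sqrt2$, so the total absolute path amplitude can grow by $\sqrt2$ per gate, while noise contributes only $1-p$ per spacetime point. Moreover, $M_\gamma$ has operator norm $\prod_{B\in\gamma}(2^{|B\cap S|}-1)$, exponential in block volume, which a bound like $(Ce^{-cp\ell})^{|\gamma|}$ ignores. The missing lemma is the paper's norm chain $\|\cdot\|_\infty\le\|\cdot\|_{1+(4d)^D}\to\|\cdot\|_2\to\|\cdot\|_2\to\|\cdot\|_{1+(4d)^{-D}}\le\|\cdot\|_1$, which works in two parts. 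First, hypercontractivity over $d^*=\lceil\frac{D}{2p}\log 4d\rceil$ layers at each end produces factors $2^{-n_\gamma(\cdots)}$ that cancel the $2^{n_\gamma}$ in $\|M_\gamma\|_1\le 2^{n_\gamma+|\gamma|}$. Second, Hilbert--Schmidt contraction over the middle $d-2d^*$ layers gives $(1-p)^{|\gamma|(d-2d^*)/3^D}$. This holds because each block factor of $M_\gamma$ is traceless, and the unital layers preserve this inside each of at least $|\gamma|/3^D$ disjoint lightcones of non-adjacent blocks. Together these yield $|w(\gamma)|\le 2^{3|\gamma|}(1-p)^{|\gamma|(d-2d^*)/3^D}$, and the $\log p^{-1}$ in $d_{crit}$ comes from $d^*$. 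Finally, since $d$ may grow with $n$, evaluating these weights in polynomial time requires the appendix's truncated Pauli paths plus tensor-network contraction. Brute force over a polymer's lightcone is only quasipolynomial, an issue your Step 3 does not address.
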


%
%
%

Beyond its practical relevance, a further motivation for our work is that noisy quantum circuits broadly encompass much of the dynamics that arise in nature. Noise naturally models the gradual growth of entropy (heat) in subsystems that are not completely isolated from their environment, and non-unital processes in this subsystem require further coupling to a low-temperature bath (e.g. a classical computer performing error correction) which actively or passively drains heat out~\cite{ben-or_gottesman_refrigerator,dennis_topological_quantum_memory_2002,alicki_thermal_stability_2010,shtanko_sharma_nonunital_1D}. It is philosophically interesting to wonder what is the computational complexity of such open quantum systems `on their own,' without such a low-temperature bath \cite{yunchao_liu_thesis}. This gets at the well-studied question, \\

\textit{When do noisy many-body quantum dynamics in nature exhibit a quantum-to-classical transition?} \\

The transition shown in our work can be understood as a low-to-high-temperature transition that occurs at a critical point \cite{bakshi_high-temperature,zlokapa_quasipolynomial_syk_2026,yin_polynomial-time_classical_sampling}.
Similar ideas have appeared in earlier work~\cite{kempe_upper_bounds,aharonov_polynomial_1996,aharonov_quantum_to_classical,fujii_tamate,cheng_ippoliti,seddon_magic_monotones,plenio_upper_bounds,burhman_upper_bounds,razborov_upper_bounds,zhang_lee_approx_markov,trivedi_cirac_simulability_transitions}, but the effective temperature is characterized only by the noise strength, 
where the high-temperature regime requires
very high noise strengths, e.g. $p > p_{crit} = 0.357$ \cite{kempe_upper_bounds}. When examining how depth affects this critical point, there are two competing effects: heat locally accumulates with circuit depth, introducing an additional factor of $d$ to the effective temperature, but entanglement also grows with circuit depth, so more and more local heat is required to cross the critical point. 
We show that noise ends up winning this competition against entanglement,
establishing a classical phase when $\tilde{\Theta}(pd)$ crosses a system-size independent critical threshold. As mentioned earlier, prior results only
establish a high-temperature regime only after $\tilde{\Theta}(pd) > \log (n)$ \cite{aharonov_limitations,muller-hermes_stilck-franca_wolf_relative_entropy_convergence}\footnote{These results apply broadly to systems with long-range interactions, where the low-temperature regime can be protected by error correction up until $pd = O(\log n)$  \cite{aharonov_limitations}. It has been observed that such error correction fails under geometric locality, but connecting this to classical simulability was left open \cite{nelson_rajakumar_limitations}.}; by getting rid of the dependence on $n$, our work establishes for the first time that critical phenomena occur here at arbitrarily low noise strengths in the thermodynamic limit.


In fact, our classical algorithm is much stronger than an additive-error sampler: it is a fully polynomial-time approximation scheme (FPTAS) for arbitrary marginals of the output distribution. This produces a relative-error sampler, i.e. each output probability is within $(1\pm\epsilon)$ times the true output probability. Relative-error sampling is the only sampling task that is proven to be classically hard up to non-collapse of the polynomial hierarchy, without further unproven conjectures\footnote{Even assuming these conjectures, additive-error hardness in constant-depth noisy geometrically local circuits is an open question with no existing proposals (see \cref{sec:open-questions})} \cite{terhal_divincenzo, bremner_jozsa_shepherd, aaronson_arkhipov}. Notably, this task is classically hard for noisy circuits with depth below an $O(p^{-1})$ threshold via a postselection argument using cluster states \cite{fujii_tamate}\footnote{This result shows hardness in 3D depth-4 circuits with noise below a fixed threshold. One can pad $\Omega(p^{-1})$ noisy identity layers to this circuit and still keep the accumulated noise under this threshold.}, showing that our result is tight up to logarithmic factors.

\begin{figure}[ht]
    \centering
    \begin{tikzpicture}[
        x=1cm,
        y=1cm,
        axis/.style={line width=0.85pt},
        tick/.style={line width=0.85pt},
        depthlabel/.style={align=center, font=\scriptsize, inner sep=1pt},
        coverage/.style={draw=easy, fill=easy!18, line width=0.65pt,
                        rounded corners=0.5pt},
        ourwork/.style={draw=easy, fill=easy, line width=0.65pt,
                       rounded corners=0.5pt},
        hardness/.style={draw=hard, fill=hard!18, line width=0.65pt,
                        rounded corners=0.5pt}
    ]
        \path[hardness] (0,1.02) rectangle (2.06,1.30);
        \path[coverage] (8.54,1.02) rectangle (10.76,1.30);
        \path[coverage] (10.84,1.02) rectangle (13.30,1.30);

        \path[ourwork] (2.94,0.48) rectangle (13.30,0.76);

        \node[align=center,font=\scriptsize,text=hard] at (1.03,1.55)
            {\cite{fujii_tamate}};
        \node[align=center,font=\scriptsize\bfseries,text=easy] at (5.25,0.91)
            {this work};
        \node[align=center,font=\scriptsize,text=easy] at (9.65,1.55)
            {\cite{nelson_rajakumar_limitations}};
        \node[align=center,font=\scriptsize,text=easy] at (12.07,1.55)
            {\cite{muller-hermes_stilck-franca_wolf_relative_entropy_convergence,
             aharonov_limitations}};

        \path[hardness] (7.35,1.87) rectangle (7.68,2.07);
        \node[anchor=west,font=\scriptsize] at (7.80,1.97)
            {hard (rel. err.)};
        \path[ourwork] (10.05,1.87) rectangle (10.38,2.07);
        \node[anchor=west,font=\scriptsize] at (10.50,1.97)
            {easy (rel. err.)};
        \path[coverage] (12.80,1.87) rectangle (13.13,2.07);
        \node[anchor=west,font=\scriptsize] at (13.25,1.97)
            {easy (TVD err.)};

        \draw[axis,-{Latex[length=2.2mm]}] (0,0) -- (13.30,0)
            node[right, font=\small] {circuit depth $d$};
        \foreach \x in {2.10,2.90,8.50,10.80}
            \draw[tick] (\x,-0.10) -- (\x,0.10);

        \node[depthlabel,anchor=north east] at (2.06,-0.14)
            {$\Theta(p^{-1})$};
        \node[depthlabel,anchor=north west] at (2.94,-0.14)
            {$\Theta\!\left(p^{-1}\log p^{-1}\right)$};
        \node[depthlabel,below=3pt] at (8.50,0)
            {$\Theta\!\left(p^{-1}\log n\right)$};
        \node[depthlabel,below=3pt] at (10.80,0)
            {$\omega(\log n)$};
    \end{tikzpicture}
    \caption{Known complexity regimes as a function of noisy
    circuit depth.  Prior work appears in the top row and our result is in the
    lower row.  Light red denotes relative-error hardness, light blue denotes
    total-variation-distance classical simulability, and solid blue denotes
    relative-error classical simulability.}
    \label{fig:depth-regimes}
\end{figure}

Our sampling algorithm is novel and combines cluster expansion techniques with Pauli path techniques. 
%
The Pauli path framework is typically used in classically simulations of (often random) quantum circuits \cite{schuster_polynomial,garcia_cirac_trivedi_pauli,fontana,angrisani_mele_simulating,mele_noise-induced,martinez_simulation,stilck-franca_garcia-patron,de-palma_marvian_rouze_stilck-franca_optimal_transport,bremner_achieving,gao_duan,aharonov_polynomial_2023,schuster_polynomial,angrisani_schmidhuber_rudolph_cerezo_holmes_huang}, while
the cluster expansion framework is an established tool for rigorously controlling statistical-mechanical systems perturbed away from exactly solvable points \cite{CE_Mayer1937,kotecky_preiss_1986,
netocny_redig_2004,friedli_velenik}, 
which has gained tremendous interest in quantum systems including high-temperature or weakly interacting systems~\cite{
harrow_classical_algorithms_correlation,
mann_efficient_algorithms_approximating,
mann_minko_2024,
chen_rouze_chen_fermions_2025,bakshi_high-temperature,yin_polynomial-time_classical_sampling,zlokapa_quasipolynomial_syk_2026,zlokapa_syk,midha2026beyond,midha2026algorithmic,CE_shorttimeLRB23,CE_metastab25,CE_ldpcgap25}. We develop a novel mapping between marginals of the noisy quantum circuit and partition functions of \textit{abstract polymer models}, allowing us to apply the two in conjunction. Our classical algorithm groups together spatially adjacent `Pauli paths' in the quantum circuit together, and interprets them as \textit{polymers}. We then compute a truncated cluster expansion of the log-marginal, interpreted as the log-partition function of the abstract polymer model, using a combination of Pauli path and tensor network techniques. We prove that this becomes efficient and accurate when the circuit depth exceeds a constant threshold.

Our analysis is inspired by prior work \cite{zhang_lee_approx_markov} which introduces cluster expansion as an alternative route to prove that classical simulability occurs at high \textit{noise strength} greater than a critical threshold. We introduce a method to account for the local accumulation of noise via Pauli paths, within a similar framework of cluster expansion.  A key analytical tool we use is quantum hypercontractivity for depolarizing channels \cite{king_hypercontractivity} and $\ell_2$-norm contraction bounds. At a high level, hypercontractivity allows us to use a few layers of noise to incur a contraction of $2^{-O(n)}$ in Schatten-norms. To our knowledge, the application of this tool in the context of classical simulation is relatively new (e.g. see recent work \cite{shao_zhao_cheng_liu_tensor_network}), and we suspect that this will have broader applications. Finally, our results also build off a line of work \cite{rajakumar_watson_liu,nelson_rajakumar_naturally,oh_classical_simulability_linear_optical,park_oh_mps_noisy_iqp} which first exhibited this constant depth phase transition in noisy quantum circuits with non-universal gate sets. These classical algorithms sample from the output distribution without computing marginals by using randomness in the noise, i.e. simulating noise channels by stochastically sampling errors, and inducing a `percolation' effect in the circuit. In our work, we obtain a deterministic FPTAS for marginals (and can handle universal gate sets) by instead using the contractive effects of noise.

\begin{table}[t] 
\centering\small
\renewcommand{\arraystretch}{1.1}
\begin{tabular}{@{}lllll@{}}
\toprule
\textbf{Work} & \textbf{Circuit family} & \textbf{Depth} & \textbf{Complexity of Sampling} \\
\midrule
\cite{aharonov_ben-or} & geo.\ local, reset gates & any & \hard{100}{Hard (with TVD error)} \\
\cite{ben-or_gottesman_refrigerator,shtanko_sharma_nonunital_1D} & geo.\ local, biased noise & any & \hard{100}{Hard (with TVD error)} \\
\cite{aharonov_limitations,anshu_liu_nguyen_pattison_loglogn_iqp} & non-local & $O(\log n),\geq O(\log\log n)$ & \hard{100}{Hard (with TVD error)} \\
\cite{bravyi_gosset_koenig_tomamichel} & geo. local & $ \leq \Omega(p^{-1})$ & \hard{50}{Hard (for shallow circuits)}\\
\cite{fujii_tamate} & geo.\ local & $\leq \Omega(p^{-1})$ & \hard{50}{Hard (with relative error)} \\
\midrule
This work: & geo.\ local & $\geq O(p^{-1}\log p^{-1})$ & \easy{100}{Easy (with relative error)} \\
\midrule
\cite{rajakumar_watson_liu,nelson_rajakumar_naturally,oh_classical_simulability_linear_optical,park_oh_mps_noisy_iqp} & restricted gate sets & $\geq O(p^{-1}\log p^{-1})$ & \easy{50}{Easy (with TVD error)} \\
\cite{bremner_achieving,gao_duan,aharonov_polynomial_2023,schuster_polynomial} & random gates or inputs & $\ge$ anticoncentration depth & \easy{50}{Easy (with TVD error)} \\
\cite{nelson_rajakumar_limitations} & geo.\ local & $\geq O(p^{-1}\log n)$ & \easy{50}{Easy (with TVD error)} \\
\cite{aharonov_limitations} & any & $\omega(p^{-1}\log n)$ & \easy{50}{Easy (with TVD error)} \\
\bottomrule
\end{tabular}
\caption{Landscape of noisy quantum circuit families in different depth regimes, ordered in roughly decreasing order of computational power. Color thickness encodes strength of the result. Hardness means an efficient classical sampler would imply collapse of the polynomial hierarchy, sometimes under additional conjectures. Our result holds for arbitrary geometrically local circuits, so it subsumes/improves the restricted-gate set and $\Omega(\log n)$-depth simulability results below it.}
\label{table:1}
\end{table}

\subsection{Intuition for Result} \label{sec:intuition}
Why should one expect noisy geometrically local quantum circuits to become classically simulable above a constant depth threshold? Here, we provide an intuitive explanation of the phenomenon and show how cluster expansion arises as a natural proof technique. As a toy example, suppose that all interspersed noise can be shifted to the input product state. In this case, the noisy input state is exactly the thermal state of a 1-local Hamiltonian, where \textit{high depth of noise layers corresponds to high temperature}. Then, the unitary gates in the quantum circuit simply conjugate the state's parent Hamiltonian, keeping locality bounded by the `lightcone,' and at temperatures above a threshold, it is efficiently classically sampleable via cluster expansions \cite{yin_polynomial-time_classical_sampling,bakshi_high-temperature,putterman_zlokapa_cotler_thermal_states}. Note that this example is a significant simplification of how noise behaves in most circuits. However, it still suggests that truncated cluster expansions may be the right classical algorithm. In our proof, we map the output of a noisy quantum circuit to a different statistical-physics model than the thermal distribution of a local Hamiltonian (i.e. abstract polymer models), and we show that a similar high-temperature-like regime is entered after a critical depth threshold, enabling cluster expansion to be successful.

We also point out that geometric locality is fundamentally necessary for the result. Non-local connectivity can allow the circuit to transfer noise to other qubits faster than it accumulates, potentially preserving a subsystem of pure qubits for use in long computations \cite{aharonov_limitations}. However, geometrical locality restricts how fast the gates can move noise out of a subsystem. At each time step, the amount of entropy that is introduced into a subsystem from the noise is roughly $p$ times its volume whereas the amount that can escape is constrained by its boundary, which grows slower than volume in geometrically local systems. This discrepancy enables the high-temperature-like phase to set in after constant depth.


\subsection{Proof Strategy} \label{sec:techniques}
 
At a high level, we will be classically simulating the Heisenberg evolution of the desired observable (i.e. a projector onto a marginal) backwards through the circuit. Our first step is to decompose this observable in the \textit{Pauli basis}, coarse-grain the lattice into `blocks' of side length $2d$, and group the Pauli observables based on which blocks they have support on. The side length of $2d$ ensures that the expectation value of any observable supported only on non-adjacent blocks factorizes into the \textit{product} of expectation values of observables supported on each block. This is because the geometrically local circuit does not have enough depth to cause interactions between two non-adjacent blocks during the Heisenberg evolution. Note this also holds for observables supported on non-adjacent `connected components' in the adjacency graph of blocks.

The key observation is that a similar product structure among connected components on a lattice has been studied before and exploited by classical algorithms in the context of \textit{abstract polymer models}. Informally, a polymer model is defined by a set of weighted objects, called polymers, and an incompatibility relation among them that
describes which polymers factorize. The canonical task is to sample sets of pairwise compatible polymers with probability
proportional to the \textit{product} of their weights, and this often requires computing the normalization of the distribution, i.e. a partition function. In our setting, we identify a set of polymers with connected components of blocks and we say two polymers are incompatible if they overlap or are adjacent. We then carefully define the weight of a polymer $\gamma_i$ as a sum of Pauli observables on $\rho$ that are supported only on $\gamma_i$, so that the partition function of this abstract polymer model is exactly the target marginal probability. This mapping allows us to import over many existing algorithms and analysis tools used for estimating partition functions. It also means that the weight of small polymers can be computed efficiently since only a portion of the circuit needs to be classically simulated (e.g. by brute force).

Finally, we apply cluster expansion, which expands the \textit{logarithm} of the partition function as a sum of products of the weights of various polymers arranged in `clusters'. We show that the noise causes these cluster contributions to decay exponentially in the number of blocks the polymers are supported on, which also ensures that this cluster expansion converges. This is proved via a chain of hypercontractive and contractive inequalities for depolarizing channels that take advantage of geometric locality to gain independent decay factors from different blocks. The classical algorithm simply enumerates all clusters of polymers that are supported on $O(\log n)$ blocks and sums up their weights. This produces an additive approximation to the logarithm of the marginal.

\begin{figure}[ht]
    \centering
    \resizebox{0.55\linewidth}{!}{\input{figures/proof-strategy-flowchart.tikz}}
    \caption{Overview of the proof strategy. The sampling problem reduces to
    computing log-marginals (\cref{sec:problem-statement}, \cref{app:omitted}), which are
    approximated by a truncated cluster expansion
    (\cref{sec:cluster-expansion-overview}). Hypercontractivity proves
    convergence (\cref{sec:hypercontractivity}), while tensor-network and
    Pauli-path methods compute the individual cluster contributions
    (\cref{app:pauli-paths}).}
    \label{fig:proof-strategy}
\end{figure}

\subsection{Open questions}\label{sec:open-questions}
We suspect that the output state of a noisy geometrically local quantum circuit at \textit{any} depth is approximately the thermal state of a local Hamiltonian with temperature that grows with $d$ and $p$. This would offer a more conceptually direct way to prove that classical simulability occurs after a critical circuit depth, however it would naively require studying a cluster expansion over non-commuting terms. Without noise, the output state is the ground state of a local Hamiltonian whose interaction range can grow with circuit depth, and the intuition is that interspersed noise may translate to thermal-like fluctuations on this state without blowing up the locality of the Hamiltonian significantly. It is possible to show that our approximation to the \textit{dephased} output state with output probabilities along the diagonal has a parent Hamiltonian with local terms, beyond the critical depth threshold (though we omit this formulation for clarity). We would like to understand the parent hamiltonian of the non-dephased state, at any depth and noise strength. This may also allow us to make the phase transitions more fine-grained, e.g. when is there separability/death of entanglement \cite{bakshi_high-temperature,putterman_zlokapa_cotler_thermal_states,scalet_entanglement_sudden_death,bakshi_spin_chains}?

Next, hardness of sampling with low \textit{total‑variation distance} requires strong but plausible average-case conjectures \cite{aaronson_arkhipov, bremner_average, bouland_fefferman_nirkhe_vazirani_complexity_rcs} that are believed to hold for certain \textit{noiseless} quantum circuits. Assuming such a conjecture, are there \textit{noisy} geometrically local quantum circuits which hard to sample with low total-variation distance (e.g. at constant depth)? There are a few reasons to be optimistic: (1) noiseless cluster states can encode deep computations that are thought to be hard to approximately sample \cite{hangleiter_anticoncentration,bermejo-vega_hangleiter_architectures}, (2) cluster states display robustness to noise in certain non-adaptive settings \cite{raussendorf_bravyi_harrington_cluster_states,bergamaschi_liu_single_shot,bravyi_gosset_koenig_tomamichel}, and (3) relative-error sampling from noisy cluster states is classically hard assuming the polynomial hierarchy does not collapse \cite{fujii_tamate}. However, there are a few subtleties: (1) the cluster states from \cite{hangleiter_anticoncentration,bermejo-vega_hangleiter_architectures} become easy under noise, (2) existing non-adaptive fault tolerance protocols for cluster states \cite{raussendorf_bravyi_harrington_cluster_states,bergamaschi_liu_single_shot,bravyi_gosset_koenig_tomamichel} cannot handle magic gates/states, and (3) worst-to-average reductions typically used to boost relative-error sampling hardness to approx. sampling hardness fail in constant-depth \cite{napp}, so are inapplicable to \cite{fujii_tamate}. We also remark that an easier problem is open --- noisy quantum circuits at constant depth with \textit{all-to-all connectivity} whose output distributions are hard-to-approximately-sample \cite{rajakumar_watson_liu,anshu_liu_nguyen_pattison_loglogn_iqp}, and an even easier problem is open --- Hamiltonians with constant locality and degree, whose constant-temperature thermal states are hard-to-approximately-sample\footnote{Given a constant-depth quantum cicuit $U$ with hardness that is robust to noise, the Hamiltonian $H =  \sum_i U\ketbra{0}_i U^\dagger$ would solve this open problem.} \cite{rajakumar_watson,bergamaschi_chen_liu}.

Finally, a recent line of work \cite{lee_ghosh_approx_markov,watts_gosset_liu_soleimanifar,zhang_lee_approx_markov,wei_nelwon_rajakumar_2d_clifford,wei_rajakumar_nelson_noise-induced_trunctation,liu_mcginley_schuster_gosset_scrooge} posits the classical simulability of noisy \textit{random} quantum circuits at \textit{any circuit depth} via a conjectured and numerically supported `approximate markov property' of the output distribution. Our convergent cluster expansion can be shown to imply such a property (albeit in the high-depth setting), and our techniques may provide a stepping stone towards the random setting. We remark however that any application towards random quantum circuits will need to intrinsically take advantage of the effects of disorder, and it is unclear how to adapt our `worst-case' techniques in such a direction. Relatedly, it would be interesting to extend our method to other types of noise, such as non-unital noise \cite{nonunital_fefferman_ghosh_soumik_gullans,shtanko_sharma_nonunital_1D}, for which the cluster expansion may be developed around fixed points other than the maximally mixed state.


\subsection{Outline}

The rest of the paper is organized as follows. We start with a formal description of the computational task and theorem statement in \cref{sec:problem-statement}. In \cref{sec:polymer-expansion}, we describe the exact mapping between a marginal and a partition function of a polymer model. In \cref{sec:cluster-expansion-overview}, we explain how a classical algorithm based on cluster expansion can estimate the logarithm of this partition function accurately, assuming a `weight decay' inequality. In \cref{sec:hypercontractivity}, we prove this inequality holds for noisy quantum circuits with depth greater than a critical threshold. Sections \ref{sec:cluster-expansion-overview} and \ref{sec:hypercontractivity} together establish that the algorithm is accurate, but its runtime is superpolynomial when $d$ grows with $n$. In Appendix \ref{app:pauli-paths}, we explain how a combination of `Pauli path' techniques and tensor networks brings the runtime down to polynomial in all depth regimes, while retaining accuracy. We also more formally specify the full polynomial-time algorithm in \cref{app:pauli-paths}.

\section{Problem statement}\label{sec:problem-statement}

We consider a quantum circuit on $n$ qubits with depth $d$. For each time step $t\in[d]$, let $\mathcal U^{(t)}$
be a tensor product of two-qubit unital CPTP maps
acting on disjoint nearest-neighbor pairs, with
identity channels on unmatched qubits.

We assume that the circuit is \emph{geometrically local}: the qubits occupy a
$D$-dimensional lattice, and every two-qubit operation acts on a pair of nearest
neighbors.  Here $D$ denotes the spatial dimension, which we take to be a
constant, $D=O(1)$.

For each qubit $i\in[n]$, define the single-qubit depolarizing channel of strength $p\in(0,1)$ by
\begin{equation}\label{eq:single-qubit-depolarizing-channel}
    \mc N_i(\rho)
    :=(1-p)\rho
      +p\frac{\mathbb I_i}{2}\otimes\Tr_i(\rho).
\end{equation}
We write
\begin{equation}
    \mc N_{[n]}:=\bigcirc_{i\in[n]}\mc N_i
\end{equation}
for one layer of independent depolarizing noise acting on all qubits.

\begin{definition}[Noisy quantum circuit]\label{def:noisy-quantum-circuit}
The noisy implementation of the depth-$d$ circuit is the channel
\begin{equation}\label{eq:noisy-circuit}
    \Phi
    :=\left(\mc N_{[n]}\circ\mc U^{(d)}\right)
      \circ\cdots\circ
      \left(\mc N_{[n]}\circ\mc U^{(1)}\right).
\end{equation}
The input is the all-zero state
\begin{equation}\label{eq:zero-input-state}
    \rho_0:=\ketbra{0}{0}^{\otimes n},
\end{equation}
and the output state is
\begin{equation}\label{eq:noisy-output-state}
    \rho_{\mathrm{out}}:=\Phi(\rho_0).
\end{equation}
\end{definition}

At the end of the circuit, all qubits are measured in the computational basis. The resulting output distribution $P_\Phi$ on $\{0,1\}^n$ is
\begin{equation}\label{eq:noisy-output-distribution}
    P_\Phi(x)
    :=\bra{x}\rho_{\mathrm{out}}\ket{x}
    =\bra{x}\Phi(\rho_0)\ket{x},
    \qquad x\in\{0,1\}^n.
\end{equation}




In order to carry out this sampling task, we use the well-known sampling-to-computation reduction based on sequentially
computing marginal probabilities (see \cite{bremner_achieving,yin_polynomial-time_classical_sampling}).

For a subset $S\subseteq[n]$ and a partial outcome
$x_S\in\{0,1\}^{S}$, define the marginal probability
\begin{equation}\label{eq:output-marginal}
    P_\Phi(x_S)
    :=\Tr\!\left[
        \left(\ketbra{x_S}{x_S}\otimes\mathbb I_{\bar S}\right)
        \Phi(\rho_0)
      \right].
\end{equation}
Because the final layer of depolarizing noise has strength $p>0$, these
marginals are strictly positive, so their logarithms are well defined.

\begin{proposition}[Sampling-to-Computing, proved in Appendix \ref{app:omitted}]
\label{prop:sampling-from-log-marginals}
Suppose that, for every $S\subseteq[n]$ and
$x_S\in\{0,1\}^{S}$, one can compute $\log P_\Phi(x_S)$ in time
$T$ to additive error at most $\epsilon \leq 1$. Then one can sample in time
$O(nT)$ from a distribution $\widetilde{P}_\Phi$ satisfying a multiplicative error bound,
\begin{align} \label{eq:multiplicative}
    \forall x\in\{0,1\}^n, e^{-2n\epsilon } P_\Phi(x)  \leq \widetilde{P}_\Phi(x) \leq e^{2n\epsilon}P_\Phi(x)
\end{align}
as well as a total variation distance bound,
\begin{align} \label{eq:tvd}
    \sum_{x\in\{0,1\}^n}\frac{1}{2}|P_\Phi(x)-\widetilde{P}_\Phi(x)| \leq \frac{1}{2}(e^{2n\epsilon }-1)
\end{align}
\end{proposition}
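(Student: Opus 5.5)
We will use the standard sequential (chain-rule) sampler. Fix the qubit order $1,\dots,n$ and write $S_k=[k]$. Given a prefix $x_{S_{k-1}}$ already sampled, we call the assumed routine to get estimates $L_k(b)$ of $\log P_\Phi(x_{S_{k-1}},b)$ for $b\in\{0,1\}$, each with additive error at most $\epsilon$. We then sample $x_k=b$ with probability
\begin{equation*}
q_k(b)=\frac{e^{L_k(b)}}{e^{L_k(0)}+e^{L_k(1)}}.
\end{equation*}
The marginal of the empty prefix is $1$, and no estimate of it is needed because we normalize locally. Each step makes two oracle calls, so the total running time is $O(nT)$.

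The multiplicative bound \eqref{eq:multiplicative} comes from comparing each factor with the true conditional. Marginalization gives $P_\Phi(x_{S_{k-1}},0)+P_\Phi(x_{S_{k-1}},1)=P_\Phi(x_{S_{k-1}})$, so the true conditional is
\begin{equation*}
p_k(b)=\frac{P_\Phi(x_{S_{k-1}},b)}{P_\Phi(x_{S_{k-1}},0)+P_\Phi(x_{S_{k-1}},1)}.
\end{equation*}
Since $e^{L_k(b)}$ lies in $[e^{-\epsilon},e^{\epsilon}]\cdot P_\Phi(x_{S_{k-1}},b)$, the numerator and the denominator of $q_k(b)$ each deviate from those of $p_k(b)$ by a factor of at most $e^{\pm\epsilon}$. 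Hence $e^{-2\epsilon}p_k(b)\le q_k(b)\le e^{2\epsilon}p_k(b)$. The sampled distribution is $\widetilde P_\Phi(x)=\prod_k q_k(x_k)$, and the chain rule gives $P_\Phi(x)=\prod_k p_k(x_k)$. Multiplying the $n$ factor bounds yields \eqref{eq:multiplicative}.

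The total variation bound \eqref{eq:tvd} then follows pointwise. From \eqref{eq:multiplicative}, $|\widetilde P_\Phi(x)-P_\Phi(x)|\le \max(e^{2n\epsilon}-1,\,1-e^{-2n\epsilon})\,P_\Phi(x)=(e^{2n\epsilon}-1)P_\Phi(x)$. Summing over $x$ and halving gives the claim.

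The hypothesis $\epsilon\le 1$ plays no real role in the argument and only ensures the estimates are meaningful. The one point that needs care is that the marginals are strictly positive, which the paper has already noted because the final noise layer has $p>0$. Positivity makes the logarithms well defined and keeps the conditional probabilities well defined along every sampled prefix. Beyond that, the argument is bookkeeping, and I do not expect a real obstacle.
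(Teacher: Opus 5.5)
Your proposal is correct and follows the paper's proof essentially step for step. You use the same sequential chain-rule sampler with the two estimates normalized locally, bound numerator and denominator separately to get $e^{\pm 2\epsilon}$ per conditional, multiply over the $n$ steps, and derive the total variation bound pointwise (your remark that $\max(e^{2n\epsilon}-1,\,1-e^{-2n\epsilon})=e^{2n\epsilon}-1$ just makes explicit a small step the paper leaves implicit).
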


Consequently, pointwise relative error $\epsilon$
is obtained by computing each log-marginal to
additive error $\log(1+\epsilon)/(2n)$. Thus, we can state our main theorem,

\begin{theorem} [FPTAS for arbitrary marginals]\label{thm:formal}
    Let $\Phi$ be a noisy quantum circuit as in Definition \ref{def:noisy-quantum-circuit}, such that $d > d_{crit} = O(\frac{D}{p}\log \frac{D}{p} + \frac{3^D}{p})$. Given an error budget $0 < \epsilon \leq 1$, a deterministic algorithm computes $\eta$ satisfying $|\eta - \log P_{\Phi}(x_S)| \leq \epsilon $ for every $S\subseteq[n]$ and
$x_S\in\{0,1\}^{S}$, in runtime $\min((n/\epsilon)^{O((2d)^D)},(n/\epsilon)^{O(\frac{D^{D}}{p^{D+1}} )})$
\end{theorem}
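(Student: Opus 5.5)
We follow the route sketched in \cref{sec:techniques}: rewrite the marginal as the partition function of an abstract polymer model, prove a weight-decay bound that makes the cluster expansion converge, and then truncate the expansion.

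\emph{Step 1: polymer representation.} Write the Heisenberg-evolved observable as
\begin{equation*}
\Phi^\dagger\bigl(\ketbra{x_S}{x_S}\otimes\mathbb I_{\bar S}\bigr)=\bigotimes_{i\in S}\tfrac12\bigl(\mathbb I+(-1)^{x_i}Z_i\bigr)
\end{equation*}
pushed back through the circuit. Expand the product over $i\in S$ into Pauli-$Z$ strings. Coarse-grain the lattice into blocks of side $2d$, and group each string by the set of blocks it touches.

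Now split this set into connected components of the block adjacency graph, where adjacency includes diagonal neighbours. Distinct components are separated by at least one block of width $2d$, and the backward light cone has radius at most $d$. Therefore $\Tr[\Phi^\dagger(\cdot)\rho_0]$ factorizes over components, since $\rho_0$ is a product state. This gives
\begin{equation*}
P_\Phi(x_S)=2^{-|S|}\sum_{\{\gamma_j\}\ \mathrm{compatible}}\ \prod_j w(\gamma_j).
\end{equation*}
Here $w(\gamma)$ is the sum of $\Tr[\Phi^\dagger(P)\rho_0]$ over strings $P$ whose block support is exactly the connected set $\gamma$, and two polymers are incompatible if they overlap or are adjacent. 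The normalization $2^{-|S|}$ is handled exactly.

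\emph{Step 2: weight decay (the main obstacle).} We need $|w(\gamma)|\le (c\,e^{-\Omega(pd)}\,\mathrm{poly}(D/p))^{|\gamma|}$, with a base small enough to beat the $e^{O(3^D)}$ growth in the number of connected block sets containing a given block. That is exactly what the Koteck\'y--Preiss criterion needs.

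To get this, I would write $w(\gamma)$ by inclusion–exclusion as a signed sum of products of local projectors,
\begin{equation*}
\textstyle\sum_{A\subseteq\gamma}(-1)^{|\gamma\setminus A|}\,\Tr\bigl[\Phi^\dagger(\Pi_A)\rho_0\bigr],
\end{equation*}
where $\Pi_A$ keeps the nontrivial parts in the blocks of $A$. Equivalently, each block carries the traceless part $\Pi_b-2^{-|S_b|}\mathbb I$.

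The goal is to extract from each block a factor of order $2^{-\Omega(\text{volume})}$ times a contraction. The $\ell_2$ (normalized Hilbert--Schmidt) norm of a traceless block operator contracts by $(1-p)$ per noise layer, but only in directions not fed back by the unital gates. I would use unitality: the identity component is preserved, and the traceless part is mapped to traceless, which gives a factor $(1-p)^{d'}$ after $d'$ layers. Hypercontractivity \cite{king_hypercontractivity} then lets a few layers convert the $\ell_1$-type size of the projector into $\ell_2$ at a cost of only $2^{O(\text{volume}\cdot e^{-\Theta(p)})}$. This is why the depth must exceed about $p^{-1}\log(\text{block volume factor})$, giving the $D p^{-1}\log(D/p)$ term. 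Geometric locality is used by splitting the lattice into $3^D$ sublattices of non-interacting blocks so that the decay factors multiply independently, which accounts for the $3^D/p$ term.

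The delicate point is that the blocks are coupled through the boundary of the light cone. I would handle this by applying the contraction to a time slab of depth about $d/2$ restricted to a block interior, and bounding the boundary contribution by the surface-to-volume ratio.

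\emph{Step 3: cluster expansion and algorithm.} Given Step 2, Koteck\'y--Preiss implies that $\log Z=\sum_{\text{clusters}}\phi(X)\prod_{\gamma\in X}w(\gamma)$ converges absolutely, and that the tail over clusters of total size $>m$ is at most $n\,e^{-\Omega(m)}$. Choosing $m=O(\log(n/\epsilon))$ gives additive error $\epsilon$.

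The number of clusters of size $m$ rooted at a block is $e^{O(m3^D)}$. Each weight is computed by brute-force simulation of the light cone over $m$ blocks of volume $(2d)^D$, at cost $2^{O(m(2d)^D)}$, which gives $(n/\epsilon)^{O((2d)^D)}$.

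For large $d$ I would instead apply the construction only to the last $d_0=\Theta(d_{crit})$ layers. The earlier layers are absorbed into an input state that is compressed via Pauli-path truncation and tensor networks, following \cref{app:pauli-paths}. This replaces the block side by $O(D/p)$ and gives the second runtime bound. Finally, \cref{prop:sampling-from-log-marginals} turns these log-marginal estimates into samples.
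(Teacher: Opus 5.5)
Your Step 1, the Koteck\'y--Preiss truncation in Step 3, and the brute-force bound $(n/\epsilon)^{O((2d)^D)}$ follow the paper. Your route to the second runtime bound, however, fails.

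\textbf{The second runtime bound.} You restrict to the last $d_0=\Theta(d_{crit})$ layers, use blocks of side $O(d_0)$, and absorb the first $d-d_0$ layers into an input state $\sigma=\Phi_{1:d-d_0}(\rho_0)$. This destroys the polymer model. The factorization $\Tr[\prod_\gamma\Phi^\dagger(M_\gamma)\rho_0]=\prod_\gamma\Tr[\Phi^\dagger(M_\gamma)\rho_0]$ in Step 1 needs two things: non-adjacent blocks must have disjoint backward light cones through the \emph{entire} circuit, and the state entering those light cones must be a product state. The state $\sigma$ is correlated across the backward light cones of non-adjacent blocks; a single entangling gate in layer $d-d_0$ straddling them already does this. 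These correlations survive the last $d_0$ layers damped only by a constant $e^{-\Theta(pd_0)}$, not by anything like $\epsilon/|\mathcal J|$. So the weights do not factorize, and no Pauli-path ``compression'' of $\sigma$ restores the product structure.

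The paper (\cref{app:pauli-paths}) instead keeps blocks of side $2d$ and only approximates each $w(\gamma)$:
\begin{itemize}
\item The first and last $d^*=\lceil\frac{D}{2p}\log 4d\rceil$ layers are contracted exactly as tensor networks.
\item In the middle, only Pauli paths of total weight at most $L=\lceil\frac{2}{p}\log\frac{2|\mathcal J|}{\epsilon}\rceil$ are kept. Every middle path has weight at least $d-2d^*$, so nothing survives once $d-2d^*>L$. The nontrivial regime therefore has $d^*=O(\frac{D}{p}\log L)$, which is what gives $(n/\epsilon)^{O(D^D/p^{D+1})}$.
\item You give no control of the truncation error, and path contributions are signed. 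The paper replaces each middle noise factor by a phase $e^{i\theta}$ per non-identity site. It reruns the norm chain (hypercontractive steps unchanged, phases preserve $\ell_2$) to get $|a_\ell|\le 2^{3|\gamma|}$ for the total weight-$\ell$ coefficient. Hence $|w(\gamma)-\widetilde w(\gamma)|\le 2^{3|\gamma|}(1-p)^L/p$, and the two cluster expansions are compared by telescoping.
\end{itemize}

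\textbf{The weight-decay step.} Your Step 2 has the right ingredients but misplaces the mechanism in two ways.

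First, hypercontractivity must be used at \emph{both} ends. The paper bounds $|w(\gamma)|\le\|\Phi_{\gamma,1:d}^\dagger(M_\gamma)\|_\infty\le\|\cdot\|_{1+(4d)^D}$. It uses the last $d^*$ layers to reach $\ell_2$, contracts, then uses the first $d^*$ layers to go from $\ell_2$ to $\ell_{1+(4d)^{-D}}\le\ell_1$. The two prefactors combine to $2^{-n_\gamma(1-2/(1+(4d)^D))}$. This cancels $\|M_\gamma\|_1\le 2^{n_\gamma+|\gamma|}$ up to $2^{3|\gamma|}$, because $n_\gamma\le(4d)^D|\gamma|$. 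With only your input-side conversion, bounding $|\Tr[X\rho_0]|$ through the normalized $\ell_2$ norm costs $2^{n_\gamma/2}$. That factor is exponential in $d^D|\gamma|$, and $(1-p)^{d|\gamma|/3^D}$ cannot beat it.

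Second, the coupling between blocks needs no slab or surface-to-volume estimate; as stated, that is not a valid bound on a global $2$-norm anyway. Each block factor of $M_\gamma$ is traceless and all maps are unital. So the Schr\"odinger image of any Pauli that is trivial on the backward light cone of a block $B$ is trivial on $B$, and its overlap with $M_\gamma$ vanishes. Hence every Pauli string of the evolved operator, at every layer, acts nontrivially on the light cone of each block. Choosing $|\gamma|/3^D$ pairwise non-adjacent blocks, whose light cones are disjoint, gives minimum weight at least $|\gamma|/3^D$. Fact~\ref{fact:contractivity} then yields $(1-p)^{|\gamma|(d-2d^*)/3^D}$ for the whole polymer at once.
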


\section{Polymer model for quantum circuits}\label{sec:polymer-expansion}

Our goal is now to calculate $\log P_\Phi(x_S)$ to additive error. To do so, we first write the marginal probability in a form reminiscent of a partition function by expanding
each computational-basis projector in the Pauli basis.
\begin{align}
\label{eq:pauli-partition-function}
\Xi := 2^{|S|} P_\Phi(x_S)
    &= \Tr\!\left[
        \left(
          \bigotimes_{i\in S}
          \left(\mathbb I_i+(-1)^{x_i}Z_i\right)
        \right)
        \Phi(\rho_0)
      \right] \\
    &= \sum_{A\subseteq S}
       \Tr\!\left[Z_A\Phi(\rho_0)\right],
       \label{eq:pauli-sum}
\end{align}
where
\begin{equation}
    Z_A:=\prod_{i\in A}(-1)^{x_i}Z_i,
    \qquad Z_\varnothing:=\mathbb I,
\end{equation}
with each $Z_i$ acting as the Pauli $Z$ operator on qubit $i$ and as the identity elsewhere.

In statistical mechanics, partition functions can often be computed in the
abstract framework of polymer models (e.g. see \cite[Chapter~5.2]{friedli_velenik}). In this framework, the main strategy is to rewrite $\Xi$ into a sum of factorizable terms. In particular, $\Xi$ is written in terms of a set of polymers which we denote as $\Gamma$. The contribution of a given polymer $\gamma \in \Gamma$ is called its weight and denoted by $w(\gamma)$. We say that two polymers are compatible if their contribution to $\Xi$ factorizes. More formally, a polymer expansion is a set of polymers $\Gamma$ and weight function $w$ such that
\begin{align}
    \Xi = \sum_{\text{compatible } \Gamma' \subseteq \Gamma} \prod_{\gamma \in \Gamma'} w(\gamma)
    \label{eq:abstract-polymer-partition-function}
\end{align}

In order to arrive at the desired polymer expansion of \cref{eq:abstract-polymer-partition-function}, we now reorganize the sum in \cref{eq:pauli-sum} according to
the geometry of the circuit. We first partition the physical lattice into coarse-grained
blocks of side length $2d$ in each spatial direction, and let $\mathcal J$
be the set of blocks that intersect $S$. We consider blocks to be adjacent when they share a face, edge or corner.
Importantly, we have chosen the block size exactly large enough such that any two non-adjacent blocks have disjoint backwards lightcones. This will be crucial when factorizing polymer weights.

\begin{figure}
    \centering
    \begin{subfigure}[t]{0.49\linewidth}
        \vspace{0pt}
        \centering
        \begin{minipage}[t][3.25cm][t]{\linewidth}
            \vspace{0pt}
            \centering
            \resizebox{\linewidth}{!}{%
                \input{figures/backward-lightcone.tikz}%
            }
        \end{minipage}
        \caption{Backward light cones of two non-adjacent blocks.}
        \label{fig:disjoint-backward-lightcones-panel}
    \end{subfigure}\hfill
    \begin{subfigure}[t]{0.49\linewidth}
        \vspace{0pt}
        \centering
        \begin{minipage}[t][3.25cm][t]{\linewidth}
            \vspace{0pt}
            \centering
            \resizebox{\linewidth}{!}{%
                \input{figures/lattice-animals.tikz}%
            }
        \end{minipage}
        \caption{Compatible polymers and their backward light-cones.}
        \label{fig:lattice-animals}
    \end{subfigure}
    \caption{The coarse-grained geometry. (a) shows that the backward light-cones of any two non-adjacent blocks are disjoint. (b) depicts a collection of compatible polymers which are disjoint connected sets of blocks. The dots denote physical
    qubits within each coarse-grained block.}
    \label{fig:disjoint-backward-lightcones}
\end{figure}

For $A\subseteq S$, define its coarse support $\operatorname{Supp}_{\mathrm{cg}}(A) \subseteq \mathcal{J}$ to be the set of coarse-grained blocks containing at least one qubit of $A$. With this coarse-graining in place we can now rewrite $\Xi$ in \cref{eq:pauli-sum} as:

\begin{align}
    \Xi
      =\sum_{J\subseteq\mathcal J}
        \Tr\!\left[ \left(
          \sum_{\substack{A\subseteq S:\\
          \operatorname{Supp}_{\mathrm{cg}}(A)=J}}\Phi^\dagger(
  Z_A)\right)
          \rho_0
      \right]
      \label{eq:coarse-support-decomposition}
\end{align}

We can now leverage the fact that non-adjacent blocks are lightcone separated. In particular, the connected components
of $J$, which we denote by $\text{cc}(J)$, are non-adjacent by definition and therefore have disjoint light cones. We first rewrite our observable as a product of disjoint observables each acting on a separate connected component's lightcone.

\begin{align}
 \sum_{\substack{A\subseteq S:\\
          \operatorname{Supp}_{\mathrm{cg}}(A)=J}}\Phi^\dagger(
  Z_A) = \prod_{\gamma \in \text{cc}(J)} \Phi^\dagger\left(
  \sum_{\substack{A\subseteq S:\\
          \operatorname{Supp}_{\mathrm{cg}}(A)=\gamma}}
  Z_A
\right) = \prod_{\gamma \in \text{cc}(J)} \Phi^\dagger(M_\gamma)
\end{align}
where,
\begin{align}
    M_\gamma := \sum_{\substack{A\subseteq S:\ \,
\operatorname{Supp}_{\mathrm{cg}}(A)=\gamma}} Z_A.
\label{eq:mgamma}
\end{align}

Since each $\Phi^\dagger(M_\gamma)$ has disjoint support and $\rho_0$ is a product state, we have
\begin{align}
    \Tr\!\left[\left(\prod_{\gamma \in \text{cc}(J)} \Phi^\dagger(M_\gamma)
\right)
          \rho_0
      \right] &=
      \prod_{\gamma\in\text{cc}(J)}
      \Tr\!\left[
        \Phi^\dagger(M_\gamma)\rho_0
      \right]
\end{align}
Thus,
\begin{align}
    \Xi = \sum_{J\subseteq\mathcal J}\prod_{\gamma\in\text{cc}(J)}
      \Tr\!\left[
        \Phi^\dagger(M_\gamma)\rho_0
      \right]
\end{align}
The expression now identifies the natural polymer model. A
\emph{polymer} is a nonempty connected set
$\gamma\subseteq\mathcal J$, and its weight is
\begin{equation}\label{eq:polymer-weight}
    w(\gamma)
    := \Tr\!\left[
        \Phi^\dagger(M_\gamma)\rho_0
      \right].
\end{equation}
Let $\Gamma$ be the set of all such polymers. Two polymers are compatible
when they contain no common block and no block in one is a neighbor of a
block in the other. The connected
components of each $J\subseteq\mathcal J$ form a compatible collection,
and every compatible collection arises uniquely in this way. Hence $\Xi$
has precisely the abstract polymer expansion given in
\cref{eq:abstract-polymer-partition-function}. The choice
$J=\varnothing$ corresponds to the empty polymer collection and
contributes $1$.

\section{Cluster expansion}\label{sec:cluster-expansion-overview}

Now that we have written $\Xi$ as a sum over compatible polymers we can attempt to perform a cluster expansion, which expands $\log \Xi$ as a sum of exponentially decaying cluster contributions. We first provide a brief overview of this method following \cite{friedli_velenik}.

We write $\delta(\gamma,\gamma')=1$ when $\gamma$ and $\gamma'$ are
compatible and $\delta(\gamma,\gamma')=0$ when they are incompatible. In
particular, $\delta(\gamma,\gamma)=0$. Next we define the Ursell function:
\begin{equation}
    \varphi(\gamma_1,\ldots,\gamma_k)
    :=\frac{1}{k!}
      \sum_{\substack{\text{connected graph }G_k\\
                      \text{with }k\text{ vertices }\{1,\ldots,k\}}}
      \prod_{\substack{\text{edge }\{i,j\}\\
                       \text{in }G_k}}
      \bigl(\delta(\gamma_i,\gamma_j)-1\bigr),
      \qquad k>1.
\end{equation}
and $\varphi(\gamma_1)=1$. We will not use this expression explicitly; it is sufficient for our purpose to view $\varphi$ as some function
determined by the compatibility relation.

A multiset of polymers $C=\{\gamma_1, \dots, \gamma_k\}$ is called a \emph{cluster} if it cannot be
separated into two nonempty sets $C=\Gamma_1\sqcup\Gamma_2$ such that every
$\gamma_1\in\Gamma_1$ is compatible with every
$\gamma_2\in\Gamma_2$. Equivalently, form a graph whose vertices are the
polymers in $C$ and join two polymers when they are incompatible; then $C$
is a cluster precisely when this graph is connected. Intuitively, two polymers are incompatible if they touch in space, and so a cluster is a ``connected'' set of polymers. Note that since each cluster is multiset, the same polymer can appear multiple times.

The following
cluster expansion result states that the
logarithm of the partition function can be written as a sum over clusters:

\begin{proposition}[Theorem~5.4 and Eq.~(5.20) of \cite{friedli_velenik}]\label{prop:cluster-expansion}
Suppose there exists a function $a:\Gamma\to[0,\infty)$ such that
\begin{equation}\label{eq:cluster-convergence-condition}
    \sum_{\gamma\text{ incompatible with }\gamma'}
      |w(\gamma)|e^{a(\gamma)}
    \leq a(\gamma')
    \qquad\text{for every }\gamma'\in\Gamma.
\end{equation}
Then
\begin{equation}\label{eq:log-xi-cluster-expansion}
    \log\Xi
    =\sum_{k\geq1}
      \sum_{\gamma_1}\cdots\sum_{\gamma_k}
      \varphi(\gamma_1,\ldots,\gamma_k)
      \prod_{i=1}^{k}w(\gamma_i)
    =\sum_{\text{cluster }C}\Psi(C),
\end{equation}
where the contribution of a cluster $C=\{\gamma_1,\ldots,\gamma_k\}$, is
\begin{equation}\label{eq:cluster-contribution}
    \Psi(C)
    :=\frac{k!}{\displaystyle\prod_{\gamma\in\Gamma}n_C(\gamma)!}
      \varphi(\gamma_1,\ldots,\gamma_k)
      \prod_{i=1}^{k}w(\gamma_i).
\end{equation}
where $n_C(\gamma)$ denotes the number of times the polymer $\gamma\in\Gamma$ appears
in $C$.
Furthermore, for every $\gamma_1\in\Gamma$,
\begin{equation}\label{eq:ursell-summability-bound}
    1+\sum_{k\geq2}k
      \sum_{\gamma_2,\ldots,\gamma_k}
      |\varphi(\gamma_1,\ldots,\gamma_k)|
      \prod_{j=2}^{k}|w(\gamma_j)|
    \leq e^{a(\gamma_1)}.
\end{equation}
\end{proposition}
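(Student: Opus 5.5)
Since $\mathcal J$ is finite, $\Gamma$ is finite, so $\Xi$ is a polynomial in the weights. I will therefore treat the weights as independent complex variables $z=(z_\gamma)_{\gamma\in\Gamma}$. The proof then has three steps: (i) establish \cref{eq:log-xi-cluster-expansion} as an identity of formal power series in $z$; (ii) prove the summability bound \cref{eq:ursell-summability-bound} on the polydisc $|z_\gamma|\le |w(\gamma)|$ using \cref{eq:cluster-convergence-condition}; (iii) conclude that the series converges absolutely and equals $\log\Xi$ at $z_\gamma=w(\gamma)$. Finally, I regroup the ordered sum into clusters.

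\textbf{Step (i): the formal identity.} Write $\Xi(z)=\sum_{n\ge0}\frac{1}{n!}\sum_{\gamma_1,\dots,\gamma_n}\prod_{i<j}\delta(\gamma_i,\gamma_j)\prod_i z_{\gamma_i}$. This works because $\delta(\gamma,\gamma)=0$ kills repeated polymers, and the $n!$ orderings of a compatible set are each counted once. Expanding $\prod_{i<j}\bigl(1+(\delta(\gamma_i,\gamma_j)-1)\bigr)$ as a sum over graphs on $\{1,\dots,n\}$ and decomposing each graph into its connected components is the standard exponential formula. It gives $\Xi=\exp\bigl(\sum_{k\ge1}\sum_{\gamma_1,\dots,\gamma_k}\varphi(\gamma_1,\dots,\gamma_k)\prod_i z_{\gamma_i}\bigr)$ as formal power series. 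Since $\Xi(0)=1$, the formal logarithm is the Mayer series.

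\textbf{Step (ii): the summability bound.} This is the main obstacle. First I apply the Penrose tree-graph inequality: let $H$ be the incompatibility graph on $\{1,\dots,k\}$ with edges $\{i,j\}$ where $\delta(\gamma_i,\gamma_j)=0$. Then
\[
\Bigl|\sum_{G\subseteq H\text{ conn.}}(-1)^{|E(G)|}\Bigr|\le\#\{\text{spanning trees of }H\}.
\]
This follows by a Penrose partition of the connected spanning subgraphs into Boolean intervals $[T,R(T)]$, one per tree. Every interval with $R(T)\ne T$ sums to zero, so only the trees with $R(T)=T$ survive, and their number is at most the number of spanning trees. Hence
\[
k\,|\varphi(\gamma_1,\dots,\gamma_k)|\le \frac{1}{(k-1)!}\sum_{T\text{ tree on }[k]}\prod_{\{i,j\}\in T}\mathbf 1[\gamma_i\not\sim\gamma_j].
\]
Summing over $\gamma_2,\dots,\gamma_k$ with weights $|w(\gamma_j)|$, and rooting each tree at vertex $1$, the left side of \cref{eq:ursell-summability-bound} is bounded by a sum over rooted labelled trees. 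I then prove by induction on the depth of the trees that the truncated sums $S_N(\gamma_1)$, over trees of depth $\le N$, satisfy $S_N(\gamma_1)\le e^{a(\gamma_1)}$.

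The induction goes as follows. Each child subtree of the root contributes a factor $|w(\gamma)|S_{N-1}(\gamma)$, summed over polymers $\gamma$ incompatible with $\gamma_1$. The labels and $1/(k-1)!$ combine into $\sum_m \frac{1}{m!}(\cdot)^m=\exp(\cdot)$. So $S_N(\gamma_1)\le\exp\bigl(\sum_{\gamma\not\sim\gamma_1}|w(\gamma)|e^{a(\gamma)}\bigr)\le e^{a(\gamma_1)}$ by \cref{eq:cluster-convergence-condition}. Letting $N\to\infty$ gives \cref{eq:ursell-summability-bound}. The delicate points are the label bookkeeping in passing from ordered tuples to rooted trees, and checking that $\delta(\gamma,\gamma)=0$ allows repeated polymers consistently.

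\textbf{Step (iii) and regrouping.} Summing \cref{eq:ursell-summability-bound} over the root gives absolute convergence of $\sum_k\sum_{\gamma_1,\dots,\gamma_k}|\varphi|\prod_i|z_{\gamma_i}|$ on the closed polydisc of radii $|w(\gamma)|$. The Mayer series therefore defines an analytic function $F(z)$ there. By step (i), $e^{F}=\Xi$ near $z=0$, and by analytic continuation on the connected polydisc this holds throughout. Thus $\Xi$ is nonvanishing there, and $F$ is the branch of $\log\Xi$ with $F(0)=0$. At $z=w$, where $\Xi>0$, this is the real logarithm: along the segment $tw$, $t\in[0,1]$, $\Xi(tw)$ is real and nonzero, and $F(tw)$ is real because $\varphi$ is real and the $w(\gamma)$ are real.

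Finally, $\varphi(\gamma_1,\dots,\gamma_k)\ne0$ only when the incompatibility graph is connected, i.e.\ when the multiset $\{\gamma_1,\dots,\gamma_k\}$ is a cluster. Since $\varphi$ and the weight product are symmetric in their arguments, the ordered tuples realizing a cluster $C$ number $k!/\prod_\gamma n_C(\gamma)!$ and all contribute equally. This gives $\sum_C\Psi(C)$ with $\Psi$ as in \cref{eq:cluster-contribution}.
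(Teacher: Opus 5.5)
Your argument is correct and is essentially the standard proof of the result the paper imports from Friedli--Velenik; the paper gives no proof of its own. The ingredients match: the formal exponential formula, the Penrose tree-graph bound, an inductive tree estimate closed by the convergence condition, and an analytic argument identifying the Mayer series with $\log\Xi$. One detail needs tightening: $z=w$ lies on the boundary of your closed polydisc, so $e^{F(w)}=\Xi(w)$ does not follow from analytic continuation alone. You get it from continuity of the absolutely (hence uniformly) convergent series, or directly by rearranging the absolutely convergent expansion of $\exp(F(z))$.
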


For our model, we are able to show that the polymer weights decay exponentially in their size, which is later proven in \cref{sec:hypercontractivity}, when circuit depth exceeds the critical threshold. More formally, we show that:
\begin{align}\label{eq:polymer-weight-decay}
|w(\gamma)| \leq e^{-c|\gamma|},
\end{align}
where $|\gamma|$ here is taken to be the number of blocks in the connected set $\gamma$ and $c>0$ is any constant that can be chosen later (affecting the critical threshold by a constant factor). Combining this with \cref{prop:cluster-expansion} implies that ``large'' clusters have exponentially small contribution.
To see this, first consider the modified weight function
\begin{align} \label{eq:modified-weight}
w'(\gamma) := w(\gamma) e^{c'|\gamma|}
\end{align}
for some constant $c'$ that can be chosen such that $c>c'>0$. This satisfies \cref{eq:cluster-convergence-condition} by the following calculation. Letting $a(\gamma) = |\gamma|$,

\begin{align}
\sum_{\gamma\text{ incompatible with }\gamma'}
      |w'(\gamma)|e^{|\gamma|}
    &\leq \sum_{m=1}^\infty |\gamma'|(e3^D)^{m}e^{(-c+c'+1)m}
      \label{eq:incompatible-polymer-count-bound} \\
    &\leq |\gamma'|,
      \qquad \text{provided that }c-c'\geq 1+\log\!\left(2e3^D\right).
\end{align}
The bound in \cref{eq:incompatible-polymer-count-bound} follows because the number of size-$m$ polymers incompatible with $\gamma'$ is at most $|\gamma'|(e3^D)^{m}$ by a standard counting bound \cite[Lemma~2]{beveridge_frieze_mcdiarmid}. In order to obtain a bound on the total contribution of all clusters we introduce a couple more notions. We say that $\gamma$ \emph{contains} a block $x$ when $\gamma\ni x$, and that a cluster $C$ contains $x$ when at least one polymer in $C$ contains $x$. We define the size of a cluster by $|C|:=\sum_{\gamma\in C}|\gamma|$. We can now bound a weighted sum of cluster coefficients as:
\begin{equation}\label{eq:all-cluster-summability}
\begin{aligned}
\sum_{\text{cluster }C}|\Psi(C)|e^{c'|C|}
&\leq \sum_{x\in\mathcal J}\sum_{k\geq1}
    \sum_{\substack{\gamma_1\ni x\\ \gamma_2,\ldots,\gamma_k}}
    k\,|\varphi(\gamma_1,\ldots,\gamma_k)|
    \prod_{i=1}^{k}|w(\gamma_i)|e^{c'|\gamma_i|} \\
&= \sum_{x\in\mathcal J}\sum_{\gamma_1\ni x}|w'(\gamma_1)|
    \left[
      1+\sum_{k\geq2}k\sum_{\gamma_2,\ldots,\gamma_k}
      |\varphi(\gamma_1,\ldots,\gamma_k)|
      \prod_{i=2}^{k}|w'(\gamma_i)|
    \right] \\
&\leq \sum_{x\in\mathcal J}\sum_{\gamma_1\ni x}
    |w'(\gamma_1)|e^{a(\gamma_1)}\\
&\leq |\mathcal J|.
\end{aligned}
\end{equation}
The factor $k$ in the second line accounts for the $k$ possible choices of the polymer in the cluster that is fixed to contain $x$, and the final inequality follows from \cref{eq:ursell-summability-bound} applied to the modified weights $w'$. It now follows that,
\begin{align} 
\sum_{C, |C| > m} |\Psi(C)| &\leq e^{-c' m} |\mathcal J| \\
&\leq \epsilon,
\qquad \text{provided that }
m\geq \frac{1}{c'}\log\!\left(\frac{|\mathcal J|}{\epsilon}\right).\label{eq:final-bound}
\end{align}

Thus, we can estimate $\log\Xi$ by the sum
$\sum_{C,|C|\leq m}\Psi(C)$ for
$m=\frac{1}{c'}\log\!\left(\frac{|\mathcal J|}{\epsilon}\right)$
while only incurring additive error at most $\epsilon$.



In Appendix \ref{app:pauli-paths}, we write out the algorithm to compute $\sum_{C,|C|\leq m}\Psi(C)$ explicitly and compute the runtime for each step. The dominant cost in runtime will be the computation of each polymer weight, which can be computed by brute force in time $2^{O(m(2d)^D)}$ since there are $m(2d)^D$ qubits contained in a polymer of size $m$ ($m$ blocks of size $(2d)^D$ each). This results in a total runtime of roughly $|\mathcal J|2^{O(m(2d)^D)}$. This is polynomial in $n$ when $d,D = O(1)$ since $m=O(\log n)$ and quasipolynomial when $d$ grows with $n$ (until $d = O(\log n)$). In Appendix \ref{app:pauli-paths}, we substitute this step with a truncated Pauli path enumeration and tensor network simulation to efficiently approximate polymer weights to a sufficient precision, achieving a polynomial-time algorithm in all depth regimes above the critical threshold.

\section{Weight decay via hypercontractivity}\label{sec:hypercontractivity}
In order to complete the result, it only remains to prove \cref{eq:polymer-weight-decay}. We can do this by first phrasing this bound in terms of the operator norm:
\begin{align}
|w(\gamma)| &=\left|\Tr\!\left[
        \Phi^\dagger(M_\gamma)\rho_0
      \right]\right|
      \leq \norm{\Phi^\dagger(M_\gamma)}_\infty\norm{\rho_0}_1
      = \norm{\Phi^\dagger(M_\gamma)}_\infty.
\end{align}

In order to bound this operator norm we will use a combination of hypercontractivity and contractivity. In particular we will apply the following two results:

\begin{fact}[Hypercontractivity]\label{fact:hypercontractivity}
    Let $f$ be a linear operator acting on the space of $n$ qubits. Let $\mc U_1,\ldots,\mc U_d$ be any sequence of unital CPTP maps. For any $\Delta_1,\Delta_2\in \mathbb{R}_{\geq 0}$ s.t. $(1-p)^{2d} \Delta_1 \leq \Delta_2 \leq \Delta_1$
    \begin{align}
    \norm{\bigcirc_{i \in [d]} [\mc U_i \circ \mc N_{[n]} ](f)}_{1+ \Delta_1} \leq 2^{-n\left(\frac{1}{1+\Delta_2} - \frac{1}{1+\Delta_1}\right)}\norm{f}_{1+\Delta_2}
\end{align}
\end{fact}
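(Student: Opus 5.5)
The plan is to work with \emph{normalized} Schatten norms, $\norm{f}_{q,\mathrm{nor}} := \left(2^{-n}\Tr|f|^q\right)^{1/q} = 2^{-n/q}\norm{f}_q$. In this normalization, King's hypercontractivity theorem for the qubit depolarizing channel \cite{king_hypercontractivity} has its standard form. With $\lambda = 1-p$, for all $1\leq r\leq q$ satisfying $q-1\leq \lambda^{-2}(r-1)$,
\begin{align}
\norm{\mc N_{[n]}(f)}_{q,\mathrm{nor}} \leq \norm{f}_{r,\mathrm{nor}} .
\end{align}
Since $\mc N_{[n]}$ is a tensor product of identical single-qubit channels, this bound holds for all $n$ at once.

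\textbf{Step 1: the unital maps.} A unital CPTP map is also trace preserving, so it is doubly stochastic. Such a map is contractive in every Schatten $q$-norm: it contracts the $1$-norm because it is trace preserving, and the $\infty$-norm because it is unital and positive. Riesz--Thorin interpolation then gives contraction in every $q$-norm in between. The same statement holds for the normalized norms, since they differ from the unnormalized ones only by the fixed factor $2^{-n/q}$.

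\textbf{Step 2: a ladder of exponents.} Set $r_0 = 1+\Delta_2$. For $k=1,\ldots,d$, define
\begin{align}
r_k - 1 = \min\!\left((1-p)^{-2}(r_{k-1}-1),\ \Delta_1\right).
\end{align}
The hypothesis $(1-p)^{2d}\Delta_1\leq \Delta_2$ gives $(1-p)^{-2d}\Delta_2\geq\Delta_1$, so $r_d = 1+\Delta_1$. The degenerate case $\Delta_2=0$ forces $\Delta_1=0$, and the claim is then just Step 1.

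\textbf{Step 3: propagate through the layers.} Peel the layers of $\bigcirc_{i}[\mc U_i\circ\mc N_{[n]}]$ from the inside out. At each layer, first apply King's bound to pass from $\norm{\cdot}_{r_{k-1},\mathrm{nor}}$ to $\norm{\cdot}_{r_k,\mathrm{nor}}$ across $\mc N_{[n]}$; this is allowed because $r_k-1\leq (1-p)^{-2}(r_{k-1}-1)$. Then apply Step 1 to cross $\mc U_i$ at the fixed exponent $r_k$. Chaining the $d$ layers gives
\begin{align}
\norm{\bigcirc_{i}[\mc U_i\circ\mc N_{[n]}](f)}_{1+\Delta_1,\mathrm{nor}}\leq\norm{f}_{1+\Delta_2,\mathrm{nor}}.
\end{align}

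\textbf{Step 4: convert back to unnormalized norms.} Multiplying through by the normalization factors gives
\begin{align}
\norm{\cdot}_{1+\Delta_1}\leq 2^{n/(1+\Delta_1)-n/(1+\Delta_2)}\norm{f}_{1+\Delta_2},
\end{align}
which is exactly the claimed bound.

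The only nontrivial input is the precise form of King's theorem: non-Hermitian $f$, the exact exponent relation $q-1\leq(r-1)/\lambda^2$ for the qubit depolarizing channel, and its validity for $r$ near $1$. I would check the constants there carefully. If King's result is available only for Hermitian or positive $f$, the fallback is to split $f$ into Hermitian and anti-Hermitian parts. That costs a constant factor, which is harmless for the later weight-decay argument.
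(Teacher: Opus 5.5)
Your proposal is correct and is essentially the paper's proof: King's Theorem~1 in normalized Schatten norms, Schatten-norm contractivity of the unital CPTP layers, iteration over the $d$ layers (your exponent ladder $r_k$ just spells out the paper's ``iterating''), and a final conversion to unnormalized norms that produces the factor $2^{-n\left(\frac{1}{1+\Delta_2}-\frac{1}{1+\Delta_1}\right)}$. On your fallback, split $f$ only once at the input, not at every layer: the CPTP layers preserve Hermiticity and positivity, so the constant stays fixed instead of compounding to $2^d$.
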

\begin{fact}[Contractivity]\label{fact:contractivity}
Let $f$ be a linear operator. Let $\mc U_1,\ldots,\mc U_d$ be any sequence of unital CPTP maps.
\begin{equation}
    \norm{\bigcirc_{i\in[d]}[\mc U_i\circ \mc N_{[n]}](f)}_2
    \le(1-p)^{wd}\norm{f}_2,
\end{equation}
where $w$ is defined by taking the Pauli decomposition before each noise layer and letting $w$ be the minimum weight of any Pauli operator with non-zero coefficient in any of these decompositions. 
\end{fact}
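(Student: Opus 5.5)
The plan is to prove the bound one layer at a time and then chain the $d$ layers together. Each layer splits into two pieces: the depolarizing layer $\mc N_{[n]}$, which is diagonal in the Pauli basis, and the unital CPTP map $\mc U_i$, which I will show does not increase the Schatten $2$-norm.

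\emph{Step 1: the noise layer.} Let $g$ be the operator entering a given noise layer, and write $g=\sum_P \hat g(P)\,P$ in the Pauli basis. Orthogonality of Paulis under the Hilbert--Schmidt inner product gives
\begin{equation*}
\norm{g}_2^2 = 2^n\sum_P|\hat g(P)|^2 .
\end{equation*}
From the definition of $\mc N_i$, a single-qubit channel maps $\mathbb I\mapsto\mathbb I$ and each non-identity Pauli $\sigma\mapsto(1-p)\sigma$. Hence
\begin{equation*}
\mc N_{[n]}(P)=(1-p)^{|P|}P ,
\end{equation*}
where $|P|$ is the weight of $P$. By definition of $w$, every $P$ with $\hat g(P)\neq0$ has $|P|\ge w$. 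Therefore
\begin{equation*}
\norm{\mc N_{[n]}(g)}_2^2=2^n\sum_P(1-p)^{2|P|}|\hat g(P)|^2\le(1-p)^{2w}\norm{g}_2^2 .
\end{equation*}

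\emph{Step 2: the unital map.} I need $\norm{\mc U(h)}_2\le\norm{h}_2$ for any unital CPTP $\mc U$ and any, possibly non-Hermitian, operator $h$. Since $\mc U$ is completely positive and unital, the Kadison--Schwarz inequality gives $\mc U(h)^\dagger\mc U(h)\le\mc U(h^\dagger h)$. Taking traces and using that $\mc U$ is trace preserving,
\begin{equation*}
\norm{\mc U(h)}_2^2=\Tr[\mc U(h)^\dagger\mc U(h)]\le\Tr[\mc U(h^\dagger h)]=\Tr[h^\dagger h]=\norm{h}_2^2 .
\end{equation*}
As a fallback, the same bound follows from Riesz--Thorin interpolation between $1\!\to\!1$ contractivity (trace preservation) and $\infty\!\to\!\infty$ contractivity (unitality and positivity).

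\emph{Step 3: induction.} Define $g_0=f$ and $g_i=\mc U_i\circ\mc N_{[n]}(g_{i-1})$, so that $g_d$ is the left-hand side of the Fact. Steps 1 and 2 give $\norm{g_i}_2\le(1-p)^w\norm{g_{i-1}}_2$ for each $i$. This uses only that each operator fed into a noise layer has Pauli weight at least $w$, which is exactly how $w$ is defined as a minimum over all the pre-noise decompositions. Iterating over the $d$ layers gives $(1-p)^{wd}\norm{f}_2$.

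The only non-routine point is Step 2 for non-Hermitian $h$, and Kadison--Schwarz handles it directly because each $\mc U_i$ is unital. Everything else is bookkeeping in the Pauli basis, together with checking the order in which noise and unitary act in $\bigcirc_i[\mc U_i\circ\mc N_{[n]}]$.
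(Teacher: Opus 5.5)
Your proof is correct and follows the paper's argument essentially step for step. The Pauli-diagonal action $\mc N_{[n]}(P)=(1-p)^{|P|}P$, combined with the weight-$w$ lower bound, gives a factor $(1-p)^{w}$ per layer; the unital map does not increase the Hilbert--Schmidt norm; and you iterate over the $d$ layers. The one difference is that you prove the $2$-norm contractivity of unital CPTP maps for possibly non-Hermitian inputs (via Kadison--Schwarz and trace preservation), whereas the paper simply asserts it.
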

The general proof strategy will be to use hypercontractivity while passing from the $\ell_\infty$ norm to the $\ell_1$ norm. On the way, we stop at the $\ell_2$ norm to use contractivity. And finally, we compute the $\ell_1$ norm exactly.
\begin{align}
    \ell_\infty \leq \ell_{1+(4d)^D} \xrightarrow[\text{hypercontractivity}]{\text{noise layers}} \ell_2 \xrightarrow[\text{contractivity}]{\text{noise layers}} \ell_2 \xrightarrow[\text{hypercontractivity}]{\text{noise layers}} \ell_{1 + \frac{1}{(4d)^D}} \leq \ell_1
\end{align}

Previously, $M_\gamma$ has been implicitly defined as acting on all $n$ qubits even though it only has non-identity support within $\gamma$. Excess identities do not affect the operator norm since $\norm{O\otimes\mathbb I}_\infty=\norm{O}_\infty$; however, they do affect the other Schatten $p$-norms. Therefore, we must trim these identities before performing the hypercontractivity, and so for the rest of this section, $M_\gamma$ is restricted to only have support on the lightcone of $\gamma$.Let $n_{\gamma}$ be the number of qubits within $\gamma$ and its reverse lightcone. Let
$\Phi_{\gamma}^{\dagger}$ denote the channel acting on $n_{\gamma}$ qubits, retaining only those unital operations of
$\Phi^\dagger$ within the light cone of $\gamma$ (but keeping all noise channels). Let $\Phi_{\gamma,1:d}^{\dagger}$ denote the restriction of $\Phi_{\gamma}^{\dagger}$ to only the layers in $[1:d]$ (note in the adjoint channel, noise is applied before the unital operations in each layer).

\begin{lemma}
    $
|w(\gamma)| \leq e^{-c|\gamma|}$ when $d>d_\text{crit} = O(\frac{D}{p} \log \frac{D}{p} + \frac{3^D}{p}c)$.
\end{lemma}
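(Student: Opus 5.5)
The plan is to bound $|w(\gamma)|\le\norm{\Phi_\gamma^\dagger(M_\gamma)}_\infty$ by following the chain of norms displayed before the lemma. I split the $d$ layers of $\Phi_\gamma^\dagger$ into three consecutive segments of lengths $d_1,d_2,d_3$ with $d_1+d_2+d_3\le d$; extra layers only help, since they are contractive in every Schatten norm. Throughout I work on the $n_\gamma$ qubits of $\gamma$ and its lightcone. Because each block has $(2d)^D$ qubits and the lightcone enlarges each block by at most $d$ in every direction, $n_\gamma\le |\gamma|(4d)^D$.

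\textbf{Step 1: from $\ell_\infty$ to $\ell_2$.} I start from $\norm{X}_\infty\le\norm{X}_{1+\Delta_1}$ with $\Delta_1=(4d)^D$. I then apply Fact~\ref{fact:hypercontractivity} to the segment of $d_1$ layers with $\Delta_2=1$. This requires $(1-p)^{2d_1}(4d)^D\le 1$, i.e.\ $d_1\approx \frac{D}{2p}\log(4d)$, and it produces the factor $2^{-n_\gamma(1/2-1/(1+\Delta_1))}$. The term $2^{n_\gamma/(1+\Delta_1)}$ it leaves behind is at most $2^{|\gamma|}$ by the bound on $n_\gamma$.

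\textbf{Step 2: from $\ell_2$ to $\ell_1$.} Symmetrically, I apply Fact~\ref{fact:hypercontractivity} to the segment of $d_3$ layers closest to $M_\gamma$ (again $d_3\approx\frac{D}{2p}\log(4d)$), with $\Delta_1=1$ and $\Delta_2=1/(4d)^D$. This gives the factor $2^{-n_\gamma(1/(1+\Delta_2)-1/2)}$, and the leftover $2^{n_\gamma(1-1/(1+\Delta_2))}$ is again at most $2^{|\gamma|}$. Then I use $\norm{M_\gamma}_{1+\Delta_2}\le\norm{M_\gamma}_1$.

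To compute $\norm{M_\gamma}_1$ exactly, I note that $M_\gamma$ factorizes over blocks. On block $b$ it equals $\sum_{\varnothing\ne A_b\subseteq S\cap b}Z_{A_b}=2^{|S\cap b|}\ketbra{x_{S\cap b}}{x_{S\cap b}}-\mathbb I$, tensored with identities on the remaining lightcone qubits. Its trace norm is therefore at most $2^{n_\gamma}\cdot 2^{|\gamma|}$. Combining Steps 1 and 2, the two $2^{-n_\gamma/2}$ factors exactly cancel the $2^{n_\gamma}$. This is the role of the hypercontractive segments: they turn the otherwise useless $\ell_\infty\to\ell_1$ comparison into a loss of only $e^{O(|\gamma|)}$.

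\textbf{Step 3: the middle $\ell_2\to\ell_2$ segment.} Here I invoke Fact~\ref{fact:contractivity} over $d_2$ layers, and I need a lower bound on the minimal Pauli weight $w$ in the Heisenberg-evolved operator. I choose a subset $\gamma'\subseteq\gamma$ of pairwise non-adjacent blocks with $|\gamma'|\ge|\gamma|/3^D$. Every Pauli string in $M_\gamma$ is non-identity on every block of $\gamma$. Non-adjacent blocks have disjoint backward lightcones, and each evolved operator is a tensor product across these disjoint lightcone regions. Moreover, the adjoint of a unital channel is trace-preserving, so a traceless factor stays traceless. Hence every Pauli in every intermediate decomposition is non-identity in each of the $|\gamma'|$ disjoint regions, giving $w\ge|\gamma|/3^D$.

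Making this rigorous is the main obstacle. The operator $M_\gamma$ is a product over blocks, but adjacent blocks' lightcones overlap, so the factorization only holds after restricting attention to $\gamma'$. I would handle this by grouping the remaining blocks of $\gamma$ with neighbouring blocks of $\gamma'$ into disjoint regions, or by projecting onto the traceless part region by region. The resulting factor is $(1-p)^{d_2|\gamma|/3^D}\le e^{-pd_2|\gamma|/3^D}$.

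\textbf{Conclusion.} Altogether, $|w(\gamma)|\le e^{O(|\gamma|)}e^{-pd_2|\gamma|/3^D}$. Choosing $d_2=O(3^D(c+O(1))/p)$ gives $|w(\gamma)|\le e^{-c|\gamma|}$. The total depth needed is $d\ge \frac{D}{p}\log(4d)+O(\frac{3^D}{p}c)$. Since $\log(4d)$ appears on both sides, this condition is self-consistently satisfied once $d=O(\frac{D}{p}\log\frac{D}{p}+\frac{3^D}{p}c)$, matching $d_{\rm crit}$.
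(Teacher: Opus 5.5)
Your proposal follows the paper's proof step for step. It uses the same chain $\ell_\infty\le\ell_{1+(4d)^D}\to\ell_2\to\ell_2\to\ell_{1+(4d)^{-D}}\le\ell_1$ and the same hypercontractive segment length $d^*\approx\frac{D}{2p}\log 4d$ at both ends. It also has the same bound $\|M_\gamma\|_1\le 2^{n_\gamma+|\gamma|}$, with the two leftover factors $2^{n_\gamma/(1+(4d)^D)}$ absorbed via $n_\gamma\le(4d)^D|\gamma|$. Finally, it uses the same $|\gamma|/3^D$ pairwise non-adjacent blocks for the contractive middle segment and the same self-consistent depth condition.

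The one step that needs repair is the Pauli-weight bound in Step 3. As you suspect, the evolved operator does not factor over the lightcones of $\gamma'$. Your first fix, grouping blocks into regions, cannot work. The polymer $\gamma$ is connected and adjacent blocks have overlapping lightcones, so any grouping into regions with disjoint lightcones collapses to all of $\gamma$.

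Your observation that adjoints of unital maps are trace-preserving is the right ingredient, but apply it to partial traces rather than to tensor factors. Fix $b\in\gamma$ and write $M_\gamma=T_b\otimes M'$ with $\Tr T_b=0$. Let $X_s$ be the operator after $s$ Heisenberg layers, and let $R_b(s)$ be the set of qubits within lattice distance $s$ of $b$. For any $Y$ supported off $R_b(s)$, duality gives $\Tr[(\mathbb I_{R_b(s)}\otimes Y)X_s]=\Tr[\Phi_s(\mathbb I_{R_b(s)}\otimes Y)M_\gamma]$, where $\Phi_s$ is the corresponding forward segment. Every layer consists of unital nearest-neighbour gates and unital single-qubit noise, so the region on which the forward-evolved operator is the identity shrinks by at most one lattice step per layer. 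Hence $\Phi_s(\mathbb I_{R_b(s)}\otimes Y)=\mathbb I_b\otimes Y'$, and the trace equals $\Tr(T_b)\Tr(Y'M')=0$.

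This shows $\Tr_{R_b(s)}X_s=0$. That is equivalent to saying every Pauli string with nonzero coefficient in $X_s$ acts nontrivially on $R_b(s)$. The sets $R_b(s)$ for $b\in\gamma'$ are pairwise disjoint for $s\le d$, so the weight is at least $|\gamma'|\ge|\gamma|/3^D$. This is exactly what the paper's one-line justification (``their backward light cones are disjoint, so the Pauli string must act nontrivially on each one'') asserts. With this replacement, your proof coincides with the paper's.
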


\begin{proof}
Let
$d^*:=\left\lceil\frac{D}{2p}\log 4d \right\rceil$, which ensures that
\begin{equation}\label{eq:d-star-definition}
    (1-p)^{2d^*}(4d)^D\leq 1.
\end{equation}
\begin{align}
\norm{\Phi_{\gamma,1:d}^\dagger(M_\gamma)}_\infty
&\leq \norm{\Phi_{\gamma,1:d}^\dagger(M_\gamma)}_{1+(4d)^D}
&& \text{(monotonicity)}
\notag\\
&\leq
2^{-n_\gamma\left(\frac12-\frac{1}{1+(4d)^D}\right)}
\norm{\Phi_{\gamma,1:d-d^*}^\dagger(M_\gamma)}_2
&& \text{(hypercontractivity)}
\notag\\
&\leq
2^{-n_\gamma\left(\frac12-\frac{1}{1+(4d)^D}\right)}
(1-p)^{\frac{|\gamma|(d-2d^*)}{3^D}}
\norm{\Phi_{\gamma,1:d^*}^\dagger(M_\gamma)}_2
&& \text{(contractivity)}
\notag\\
&\leq
2^{-n_\gamma\left(
    \frac{1}{1+(4d)^{-D}}-\frac{1}{1+(4d)^D}
\right)}
\notag\\[-0.25em]
&\qquad{}\times
(1-p)^{\frac{|\gamma|(d-2d^*)}{3^D}}
\norm{M_\gamma}_{1+(4d)^{-D}}
&& \text{(hypercontractivity)}
\notag\\
&\leq
2^{-n_\gamma\left(1-\frac{2}{1+(4d)^D}\right)}
\notag\\[-0.25em]
&\qquad{}\times
(1-p)^{\frac{|\gamma|(d-2d^*)}{3^D}}
\norm{M_\gamma}_1
&& \text{(monotonicity)}.
\label{eq:weight-decay-hypercontractive-steps}
\end{align}
The contractivity step uses the fact that every Pauli string appearing at an
intermediate time has weight at least $|\gamma|/3^D$. Indeed, one can choose
at least $|\gamma|/3^D$ pairwise non-neighboring blocks of $\gamma$; their
backward light cones are disjoint, so the Pauli string must act nontrivially
on each one. Therefore the total weight during the $d-2d^*$ layers of 2-norm contraction is at least $\frac{|\gamma|(d-2d^*)}{3^D}$.

It remains only to bound the $\ell_1$ norm. The non-identity portion of $M_\gamma$ is contained in the intersection of $\gamma$ and $S$. This is given by
\begin{equation}
\begin{aligned}
\bigotimes_{B\in\gamma}
  \left[
    \bigotimes_{i\in B \cap S}
    \bigl(\mathbb I_i+(-1)^{x_i}Z_i\bigr)-\mathbb I_{B \cap S} 
  \right] = \bigotimes_{B\in\gamma}
  \left[
  2^{|B \cap S|}\ketbra{x}_{B \cap S}-\mathbb I_{B \cap S} 
  \right]
\end{aligned} \label{eq:unevolved-polymer-trace-norm}
\end{equation}
$M_\gamma$ is also padded by identities to extend to the lightcone of $\gamma$ (i.e. it acts on $n_\gamma$ qubits). It is straightforward to compute the $\ell_1$-norm entry-wise using \cref{eq:unevolved-polymer-trace-norm}, and this gives $\norm{M_{\gamma}}_1 \leq 2^{n_\gamma + |\gamma|}$.  Substituting this in  \cref{eq:weight-decay-hypercontractive-steps}
gives
\begin{align}
\norm{\Phi_{\gamma,1:d}^\dagger(M_\gamma)}_\infty
&\leq
2^{\frac{2n_\gamma}{1+(4d)^D}+|\gamma|}
(1-p)^{\frac{|\gamma|(d-2d^*)}{3^D}}
\\
&\leq
2^{3|\gamma|}(1-p)^{\frac{|\gamma|(d-2d^*)}{3^D}}
&& \text{since } n_\gamma\leq (4d)^D|\gamma|
\\
&\leq
\exp\!\left[-|\gamma|\left(
    \frac{p(d-2d^*)}{3^D}-3\log 2
\right)\right] \\
&\leq \exp(-c|\gamma|)
\label{eq:polymer-operator-norm-decay}
\end{align}

It remains to determine how large the circuit depth must be for the last
inequality to hold.
Working through the algebra, a decay rate of $c$ is obtained when
\begin{equation}
    d\geq \frac{2D}{p}\log\!\left(\frac{12D}{p}\right)
      +\frac{2\cdot 3^D}{p}\bigl(c+3\log 2\bigr) + 4.
\end{equation}
Thus, for $d>d_\text{crit} = O(\frac{D}{p} \log \frac{D}{p} + \frac{3^D}{p}c)$, the desired exponential weight decay is achieved.
\end{proof}

\begin{figure}[t]
    \centering
    \resizebox{0.8\linewidth}{!}{\input{figures/contractivity_layers.tikz}}
    \caption{Illustration of norm conversions and applications of contractivity / hypercontractivity at different layers of the circuit.}
    \label{fig:contractivity_layers}
\end{figure}

The entire path is illustrated in Figure~\ref{fig:contractivity_layers}. Thus, we obtain our main acccuracy bound by setting $c > 2+\log (2e3^D)$ and following lines \cref{eq:polymer-weight-decay} to \cref{eq:final-bound}, using $c' = 1$ in \cref{eq:modified-weight},
\begin{corollary} \label{corollary:accuracy}
    $|\log\Xi - \sum_{C,|C|\leq m}\Psi(C)| \leq \epsilon$ for
$m=\lceil \log\!\left(|\mc J|/\epsilon\right)\rceil$ when $d>d_\text{crit} = O(\frac{D}{p} \log \frac{D}{p} + \frac{3^D}{p})$.
\end{corollary}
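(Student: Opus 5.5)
The plan is to combine the weight-decay lemma just proved with the truncation argument of \cref{sec:cluster-expansion-overview}, that is, equations \cref{eq:polymer-weight-decay} through \cref{eq:final-bound}, choosing the constants to fit. I will fix $c' = 1$ in the modified weight \cref{eq:modified-weight} and then choose the decay rate $c$ large enough that the Kotecký--Preiss-type condition \cref{eq:cluster-convergence-condition} holds.

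\textbf{Step 1.} Fix $c := 2 + \log(2e3^D)$, or any constant at least this large. Then $c - c' \ge 1 + \log(2e3^D)$. By the preceding lemma, whenever $d > d_{\mathrm{crit}} = O(\frac{D}{p}\log\frac{D}{p} + \frac{3^D}{p}c)$ we have $|w(\gamma)| \le e^{-c|\gamma|}$ for every polymer. Since $c = O(D)$, the term $\frac{3^D}{p}c$ is $O(\frac{D 3^D}{p})$. I will either absorb the factor $D$ into the $O(\cdot)$ constant, treating $D$ as fixed, or, more carefully, note that the lemma's explicit bound is linear in $c$. Either way this gives the stated threshold $O(\frac{D}{p}\log\frac{D}{p} + \frac{3^D}{p})$ for constant $D$.

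\textbf{Step 2.} Verify \cref{eq:cluster-convergence-condition} for $w'(\gamma) = w(\gamma)e^{|\gamma|}$ with $a(\gamma) = |\gamma|$. Use the counting bound of at most $|\gamma'|(e3^D)^m$ polymers of size $m$ incompatible with $\gamma'$. This gives
\[
\sum_m |\gamma'|\,(e3^D e^{-c+2})^m \le |\gamma'|\sum_m 2^{-m} = |\gamma'|.
\]
\cref{prop:cluster-expansion} then applies to the weights $w'$. In particular the cluster expansion of $\log \Xi$ converges absolutely and equals $\sum_C \Psi(C)$, and the bound \cref{eq:ursell-summability-bound} holds with the weights $w'$.

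\textbf{Step 3.} Run the chain \cref{eq:all-cluster-summability}, rooting each cluster at a block $x \in \mathcal J$, to get $\sum_C |\Psi(C)|e^{|C|} \le |\mathcal J|$. From this, $\sum_{|C|>m} |\Psi(C)| \le e^{-m}|\mathcal J| \le \epsilon$ once $m \ge \log(|\mathcal J|/\epsilon)$, so the ceiling choice of $m$ suffices. Since $\log \Xi$ is the full absolutely convergent sum, the truncation error is exactly this tail.

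The main obstacle is Step 3's final inequality: it needs $\sum_{\gamma_1 \ni x}|w'(\gamma_1)|e^{|\gamma_1|} \le 1$ for each fixed $x$, not merely $\le |\gamma'|$. I will handle this by noting that polymers containing $x$ are exactly those incompatible with the single-block polymer $\{x\}$, or by counting them directly as at most $(e3^D)^m$ of size $m$. The same geometric series bound then gives at most $1$.

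A secondary concern is that $\log\Xi$ differs from $\log P_\Phi(x_S)$ by the exactly known constant $|S|\log 2$. This is irrelevant to the corollary as stated.
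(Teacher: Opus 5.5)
Your proposal is correct and follows the paper's argument essentially line for line. You take $c'=1$ and $c\ge 2+\log(2e3^D)$, invoke the weight-decay lemma, verify the convergence condition with $a(\gamma)=|\gamma|$, and bound the tail by $\sum_{|C|>m}|\Psi(C)|\le e^{-m}|\mathcal J|$; your use of the singleton polymer $\{x\}$ to get $\sum_{\gamma\ni x}|w'(\gamma)|e^{|\gamma|}\le 1$ is exactly the justification the paper gives, in its correctness lemma, for the last step of the cluster-summability chain. One small caveat on your Step 1: since $c=\Theta(D)$, the threshold really carries a $\frac{D\,3^D}{p}$ term, so the stated form holds only when $D$ is treated as a constant (as the paper also implicitly assumes), and linearity of the lemma's bound in $c$ does not by itself remove that factor.
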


\section{Acknowledgements}
C.Y. thanks Xun Gao for valuable discussions that helped motivate this work. C.Y. is partially supported by the National Science Foundation under Award No.~2016245 and by the Stanford Q-FARM Bloch Postdoctoral Fellowship in Quantum Science and Engineering. This material is based upon work supported by the U.S. Department of Energy, Office of Science, Accelerated Research in Quantum Computing, Fundamental Algorithmic Research toward Quantum Utility (FAR-Qu). J.R. is supported by Department of Navy Award No. N000142612399 issued by the Office of Naval Research. M.J.G acknowledges support from NSF QLCI grant OMA-2120757, NSF QLCI grant OMA-2553574, and NSF NQVL:Design:ORAQL grant 2533041. We thank Yi-Kai Liu, Thiago Bergamaschi, Su-un Lee, Dominik Hangleiter, Zhi-Yuan Wei, Daniel Malz and Alexey Gorshkov for helpful discussions. 

\paragraph{AI Usage Acknowledgement.} The authors have posed and thought about this open problem since 2023. They developed the main reduction to cluster expansion without generative AI tools. AI (specifically ChatGPT Sol 5.6) was used for a specific step of the proof that evaded the authors for several months: the weight decay of \cref{sec:hypercontractivity}. The prompt was the missing lemma statement along with the the cue `I think we need to use hypercontractivity.' Generative AI was also used to flush out the quasipolynomial-to-polynomial time reduction in \cref{app:pauli-paths} and for general reviewing for clarity and errors. The authors have verified the proof themselves and take full responsibility for claims made.

\printbibliography

\appendix

\section{Polynomial time algorithm at all higher depths}\label{app:pauli-paths}

The runtime $|\mathcal J|\,2^{O(m(2d)^D)}$ of \cref{sec:cluster-expansion-overview} comes entirely from computing each polymer weight $w(\gamma)$ by brute force over the light cone of $\gamma$. It turns out that via a combination of Pauli path techniques and tensor network techniques, we can compute a sufficient estimate of $\omega(\gamma)$ in polynomial time for all depths. The main intuition is that: by using brute force simulation during the middle layers where 2-norm contraction occurs, we are keeping track of Pauli paths of weight that may far exceed $\frac{|\gamma|(d-2d^*)}{3^D}$, but we should really be able to get away with truncating a large portion of these paths. The subtlety is that our analysis techniques for proving accuracy required us to perform hypercontractive steps of the initial and final $d^*$ layers, which means we cannot truncate Pauli paths during those portions, and instead need to simulate the entire evolution within them. Since $d^*$ can also scale with $n$ (but much slower than $d$ does), brute force simulation runs into the same quasi-polynomial time issue again, but we can fix it here by using tensor networks. We write out the full algorithm below for completeness, followed by a proof that runtime is polynomial and a proof that the approximation error is bounded. 

\begin{algorithm}[H]
\caption{FPTAS for marginals}
\label{alg:fptas}
\DontPrintSemicolon

\KwIn{
The classical description of a noisy geometrically local circuit $\Phi$ as in \Cref{def:noisy-quantum-circuit}; an error parameter $\epsilon$; a set $S \subseteq [n]$ and bitstring $x \in \{0,1\}^{|S|}$
}
\KwOut{
A number $\eta$ s.t. $|\log P_\Phi(x_S) - \eta| \leq \epsilon$
}

\BlankLine

Set $m \gets \lceil \log \frac{2|\mathcal{J}|}{\epsilon} \rceil$ and $L \gets \lceil \frac{2}{p}\log \frac{2|\mathcal{J}|}{\epsilon} \rceil$

Enumerate all connected subsets of $\mathcal{J}$ of size at most $m$ as $polymers =[\gamma_1,\gamma_2\ldots]$

\For{$\gamma$ in $polymers$}{
    Set $\widetilde{w}(\gamma) \gets 0$ and $d^* \gets \lceil \frac{D}{2p} \log 4d \rceil $

    \For{$\ell = \lceil[(d-2d^*)|\gamma| /3^D\rceil$ \KwTo $L$}{
    Enumerate every legal Pauli path of total weight $\ell$ from immediately before $\Phi_{\gamma,d^*+1:d-d^*}^\dagger$ till immediately after, in the form $[P_{d^*},P_{d^*+1}\ldots P_{d-d^*}]$, where each Pauli operator acts on $n_\gamma$ qubits.

    \For{each $[P_{d^*},P_{d^*+1}\ldots P_{d-d^*}]$}{

        Compute $\alpha_{out} = 2^{-n_{\gamma}}\Tr[M_{\gamma}\Phi_{\gamma,1:d^*}(P_{d^*})]$ and $\alpha_{in} = \Tr[\Phi_{\gamma,d-d^*+1:d}^\dagger(P_{d-d^*})\rho_0]$ using tensor networks.
    
        Compute $\alpha_{mid} = \prod_{t = d^*+1}^{d-d^*}2^{-n_{\gamma}}\Tr(P_{t} \Phi_{\gamma,t:t}^\dagger(P_{t-1}))$ using Pauli propagation.
    
        Set $\widetilde{w}(\gamma) \gets \widetilde{w}(\gamma) + \alpha_{out}\alpha_{mid} \alpha_{in}$
        }
    }
}

Enumerate all connected multisets of elements of $polymers$, where the sum of the sizes of each element is at most $m$, as $clusters = [C_1,C_2\ldots]$\

Set $\eta \gets -|S|\log 2$

\For{$C$ in $clusters$}{
Compute the Ursell function $\varphi$ for $C$

Compute the cluster contribution $\Psi(C)$ using \cref{eq:cluster-contribution} and the previously computed values of $\varphi$ and $\widetilde{w}(\gamma) \forall \gamma \in C$.

Set $\eta \gets \eta + \Psi(C)$
}
\Return{$\eta$}\;
\end{algorithm}

\begin{lemma}[Efficiency] \label{lemma:runtime}
    Algorithm \ref{alg:fptas} runs in time $(n/\epsilon)^{O(\frac{D^{D}}{p^{D+1}})}$.
\end{lemma}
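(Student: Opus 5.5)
The plan is to bound separately the cost of the three phases of Algorithm~\ref{alg:fptas}: enumerating polymers, computing each approximate weight $\widetilde w(\gamma)$, and enumerating clusters. I then check that each cost is $(n/\epsilon)^{O(D^D/p^{D+1})}$. The number of blocks satisfies $|\mathcal J|\le n$, and $m=\lceil\log(2|\mathcal J|/\epsilon)\rceil=O(\log(n/\epsilon))$.

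\textbf{Polymers and clusters.} By the lattice-animal counting bound already used in \cref{eq:incompatible-polymer-count-bound} (\cite[Lemma~2]{beveridge_frieze_mcdiarmid}), the number of connected subsets of size at most $m$ containing a fixed block is at most $(e3^D)^m$. Hence there are at most $|\mathcal J|(e3^D)^m=(n/\epsilon)^{O(D)}$ polymers, and they can be listed by a standard search.

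Clusters of total size at most $m$ are connected multisets of polymers. Collapsing a cluster to its union gives a connected set of at most $m$ blocks, which comes with a bounded number of decompositions into polymers: at most $2^{O(m)}$ multiset structures times $(e3^D)^{O(m)}$ choices. So the number of clusters is again $(n/\epsilon)^{O(D)}$.

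For each cluster the Ursell function $\varphi$ has at most $m$ arguments. It can be computed in $2^{O(m^2)}$ time by summing over connected graphs, or better in $e^{O(m)}$ via the Penrose tree bound or inclusion--exclusion over subsets. Summing $\Psi(C)$ via \cref{eq:cluster-contribution} is then polynomial. I would verify that the $e^{O(m)}$ version is available, since $2^{O(m^2)}$ would be quasi-polynomial.

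\textbf{Weights via Pauli paths.} Fix $\gamma$. The first observation is that the inner loop over $\ell$ is empty unless $\lceil (d-2d^*)|\gamma|/3^D\rceil\le L$. Therefore either $\widetilde w(\gamma)=0$ at no cost, or
\[
d-2d^*\le 3^D L = O\!\left(\tfrac{3^D}{p}\log\tfrac{n}{\epsilon}\right).
\]
Combined with $d^*=\lceil\frac{D}{2p}\log 4d\rceil$, this forces $d=O(\frac{3^D}{p}\log\frac{n}{\epsilon}+\frac{D}{p}\log\frac{D}{p})$ whenever work is actually done. This is what removes the apparent dependence on large $d$.

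A legal path $[P_{d^*},\dots,P_{d-d^*}]$ of total weight $\ell$ is determined as follows. First choose a composition of $\ell$ into the per-layer weights $\ell_t\ge1$ (at most $2^{\ell}$ choices, and $d-2d^*\le\ell$ layers). Then, for each layer, choose the support and the nonidentity Paulis. Locality makes the support choices cheap: $P_t$ must overlap the one-step lightcone of $P_{t-1}$, so each new nonidentity site can be charged to a neighbor of the previous support, giving $O(3^D)$ choices per unit of weight rather than $n_\gamma$. I would use this to bound the number of paths by $(C\,3^D)^{O(L)}=(n/\epsilon)^{O(D/p)}$. The only nonlocal choice is the placement of $P_{d^*}$, and it must lie inside the region of $\gamma$, which has $n_\gamma=(n/\epsilon)^{O(1)}$ qubits by the bound on $d$ above. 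Each $\alpha_{mid}$ is a product of $d$ local Pauli transfer coefficients, so it costs $\mathrm{poly}(L,d)$.

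\textbf{The boundary tensor networks (main obstacle).} Computing $\alpha_{in}$ and $\alpha_{out}$ is the step I expect to be hardest to make polynomial. Since $P_{d^*}$ and $P_{d-d^*}$ have weight at most $L$, their supports under $d^*$ layers spread only to a lightcone of at most $L(2d^*+1)^D$ qubits. Everything outside this lightcone cancels, because the channels are unital and $\rho_0$ and $M_\gamma$ are product-structured.

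One could contract the network exactly by brute force over that lightcone, at cost $4^{O(L (d^*)^D)}$. That is too costly here, since $(d^*)^D$ brings in $(\frac{D}{p}\log d)^D$. Instead I would contract each connected piece of the lightcone as a tensor network with treewidth bounded by the cross-section $O((L d^*)^{D-1})$-type quantities, or alternatively note that the exponent is $O(L\cdot (D/p)^D\log^D d)$.

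Using $L=O(\frac1p\log\frac n\epsilon)$ and the restriction on $d$ derived above, I would try to show that the $\log d$ factors can be absorbed, giving an exponent $O(\frac{D^D}{p^{D+1}}\log\frac n\epsilon)$ and hence cost $(n/\epsilon)^{O(D^D/p^{D+1})}$. Bounding $d^*$ by $O(\frac Dp\log\frac Dp)$ plus a term handled by monotonicity of the depth argument is the delicate point, and I would revisit the choice of $d^*$ there if needed.

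Multiplying the polymer count, the path count, and the per-path cost then gives the stated $(n/\epsilon)^{O(D^D/p^{D+1})}$ bound.
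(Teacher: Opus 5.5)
Your decomposition into phases matches the paper's, but two steps do not give a polynomial bound as you set them up. The first is the Pauli-path count. You anchor the enumeration at $P_{d^*}$ and treat its placement as a $\mathrm{poly}(n_\gamma)$ factor. However, $P_{d^*}$ is a weight-$|P_{d^*}|$ string whose support is generally disconnected, since it must meet at least $|\gamma|/3^D$ pairwise separated light cones. Choosing it therefore costs up to $\binom{n_\gamma}{|P_{d^*}|}3^{|P_{d^*}|}$, and the only constraint you have is $|P_{d^*}|\le L-(d-2d^*)$. At intermediate depths, e.g.\ $d-2d^*\approx L/2$, $|P_{d^*}|$ may be of order $L$ while $n_\gamma/|P_{d^*}|\gtrsim L^{D-1}$. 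For $D\ge 2$ your argument then only yields $\exp(\Theta(L\log L))$ paths, which is quasi-polynomial in $n/\epsilon$. The paper instead anchors at the \emph{minimum-weight} Pauli of the path. Its weight $k$ satisfies $|\gamma|/3^D\le k\le L/(d-2d^*)$, hence $n_\gamma/k\le(12d)^D$. The anchor therefore costs at most $d\,(3e(12d)^D)^k=\exp(O(DL\log d/d+L))=e^{O(L)}$ above threshold. Your local-extension argument, run forward and backward from that anchor, then contributes another $e^{O(L)}$.

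The second gap is the boundary tensor-network step, which you leave open; neither of your suggestions closes it. Any contraction whose exponent is linear in $L$, such as $O(L(\frac Dp)^D\log^D d)$, fails. When work is done $d$ can be of order $L$, so $\log d\approx\log\log(n/\epsilon)$, and this only gives $(n/\epsilon)^{O(\frac{D^D}{p^{D+1}}(\log\log(n/\epsilon))^D)}$. Changing $d^*$ is not available, because the algorithm fixes $d^*=\lceil\frac{D}{2p}\log 4d\rceil$ and correctness needs $(1-p)^{2d^*}(4d)^D\le 1$. The cross-section quantity $(Ld^*)^{D-1}$ is not sublinear in $L$ for $D\ge 2$.

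The missing ingredient is a contraction whose exponent is sublinear in the number of tensors. The network has $N=O(L(4d^*)^{D+1})$ tensors on a $(D+1)$-dimensional lattice. Separator-based contraction \cite{wahl_strelchuk_tensor_network_contraction}, derandomized via geometric separators, runs in $\exp(O(N^{D/(D+1)}))=\exp(O(L^{D/(D+1)}(4d^*)^D))$. Since $d-2d^*\le L$ forces $d^*=O(\frac{D}{p}\log L)$, the saved factor $L^{1/(D+1)}$ absorbs $\log^D L$. This gives $\exp(O((D/p)^D L))=(n/\epsilon)^{O(D^D/p^{D+1})}$.

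Two smaller points. Your cluster count should use an encoding such as the paper's rooted plane trees on block occurrences, so you do not pay a separate root choice for every polymer. The $2^{O(m)}$ Ursell-function computation you need is \cite[Lemma~2.6]{helmuth_perkins_regts}.
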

\begin{proof}
    We list the runtime costs of each step. 
    
    Step 2 costs $O(|\mathcal{J}|(e3)^{Dm}) = (n/\epsilon)^{O(D)}$ due to a standard tree-counting argument \cite{beveridge_frieze_mcdiarmid}. 
    
    Step 6 can be accomplished by first assigning the Pauli operator of minimum weight during the path. If the minimum weight is $k $, then there are $d$ locations it could be and ${n_\gamma \choose k}3^k \leq \left(3e\frac{n_{\gamma}}{k}\right)^k$ ways to assign it. Since every Pauli path has weight $\geq d-2d^*$, the Pauli path weight limit $L$ is trivial unless $d-2d^*\leq L$. Next, $\frac{n_{\gamma}}{k} \leq \frac{|\gamma|(4d)^D}{|\gamma|/3^D} = (12d)^D$. Substituting this in and using the fact that $k \leq L/(d-2d^*)$, there are $\exp(O(DL\log d/d +L))$ possible assignments of the minimum weight Pauli. Once we assign it, enumerating its path in both the forward and backward direction takes cost $\exp(L)$ (since branching can only happen at non-identity locations). Thus the overall cost is $(n/\epsilon)^{O(\frac{D}{p})}$.
    
    Step 8 involves classically simulating a circuit on at most $O(L(4d^*)^D)$ qubits since $P_{d^*}$ and $P_{d-d^*}$ have weight $O(L)$ and we only need to simulate the lightcone of depth $d^*$. With brute force simulation, this is still quasipolynomial when $d$ grows with $n$ since $d^* = O(\frac{D}{p}\log d)$ and $L= O(\frac{1}{p}\log n/\epsilon)$. We use tensor networks to obtain the improvement. We write the circuit as a tensor network with $O(L (4d^*)^{D+1})$ tensors in $D+1$ dimensions (including time). This tensor network can be contracted in time $\exp(O(L (4d^*)^{D+1})^{\frac{D}{(D+1)}}) = \exp(O(L^{\frac{D}{(D+1)}}(4d^*)^D))$ by \cite[Theorem~1]{wahl_strelchuk_tensor_network_contraction}\footnote{This is technically a randomized algorithm that applies to more general graphs than lattices (which would yield an FPRAS instead of an FPTAS), but it is possible to turn it into a deterministic algorithm for lattices using \cite{eppstein_miller_teng}}. 
    Since $d -2d^* \leq L$, we have 
    $d^* = O(\frac{D}{p}\log d)\leq O(\frac{D}{p}\log L)$. Substituting this in, we get a runtime of $\exp(O((\frac{D}{p})^DL)) = (n/\epsilon)^{O(\frac{D^{D}}{p^{D+1}})}$.

    Step 11 has runtime proportional to the number of clusters, which we bound as follows. A cluster $C$ of size $\ell$ can be encoded by a rooted plane tree
on its $\ell$ block occurrences: take a spanning tree inside each polymer and
a spanning tree of the incompatibility graph between the polymers.  The root
block has at most $|\mathcal J|$ choices.  Each tree edge has at most $3^D$
choices for the neighboring (or identical) block, and one binary label records
whether the edge lies within a polymer or connects two incompatible polymers.
Since there are at most $4^{\ell-1}$ rooted plane trees on $\ell$ vertices,
\begin{equation}\label{eq:cluster-count-bound}
\begin{aligned}
\bigl|\{C:C\text{ is a cluster and }|C|\leq m\}\bigr|
&\leq |\mathcal J|\sum_{\ell=1}^{m}4^{\ell-1}(3^D)^{\ell-1}2^{\ell-1} \\
&\leq |\mathcal J|(8\cdot 3^D)^m.
\end{aligned}
\end{equation}
Thus, the runtime of this step is $(n/\epsilon)^{O(D)}$

Step 14 costs $2^{O(m)} = O(n/\epsilon)^{O(1)}$ due to \cite[Lemma~2.6]{helmuth_perkins_regts}, as mentioned in the main text.

Thus, the dominant cost is the tensor network simulation for each Pauli path, resulting in runtime of $(n/\epsilon)^{O(\frac{1}{p}\cdot (\frac{D}{p})^D)}$

\end{proof}

\begin{lemma}[Correctness] \label{lemma:correctness}
    The output of Algorithm \ref{alg:fptas} is correct when $d > d_{crit} = O(\frac{D}{p}\log \frac{D}{p}+ \frac{3^D}{p})$.
\end{lemma}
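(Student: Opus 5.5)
We split the error of Algorithm \ref{alg:fptas} into two parts. The first is the \emph{truncation error} from keeping only clusters with $|C|\le m$. The second is the \emph{weight-approximation error} from replacing $w(\gamma)$ by $\widetilde w(\gamma)$. We budget $\epsilon/2$ to each, which is why $m$ and $L$ are set with $2|\mathcal J|/\epsilon$. Since $\log P_\Phi(x_S)=\log\Xi-|S|\log 2$ and $\eta$ is initialized to $-|S|\log 2$, it suffices to show
\[
\Bigl|\log\Xi-\sum_{C,|C|\le m}\widetilde\Psi(C)\Bigr|\le\epsilon ,
\]
where $\widetilde\Psi$ denotes $\Psi$ computed with the weights $\widetilde w$. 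Corollary \ref{corollary:accuracy}, applied with $\epsilon/2$, immediately handles the truncation part: $|\log\Xi-\sum_{|C|\le m}\Psi(C)|\le\epsilon/2$ once $d>d_{crit}$.

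For the weight-approximation part, we first write $w(\gamma)$ exactly as a sum over Pauli paths through the middle block of layers $d^*+1,\dots,d-d^*$. We insert Pauli resolutions of the identity at the two cuts, so that
\[
w(\gamma)=\sum_{\text{paths}}\alpha_{out}\,\alpha_{mid}\,\alpha_{in}.
\]
The algorithm keeps exactly the paths of total weight at most $L$. Paths of weight $\ell$ carry a factor $(1-p)^\ell$ from the noise. We bound the discarded tail by redoing the chain \cref{eq:weight-decay-hypercontractive-steps} with the $\ell_2$ contractivity step restricted to the high-weight part. Concretely, let $\Pi_{>L}$ project onto path contributions of weight $>L$. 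The middle evolution restricted to such paths still satisfies an $\ell_2$ bound, contracting by $(1-p)^{L}\le e^{-pL}$ in place of $(1-p)^{|\gamma|(d-2d^*)/3^D}$. The hypercontractive bookends are unchanged because they act on the full first and last $d^*$ layers, which the algorithm simulates exactly. Combining this with $\norm{M_\gamma}_1\le 2^{n_\gamma+|\gamma|}$ and the hypercontractive prefactor $2^{2n_\gamma/(1+(4d)^D)}\le 2^{2|\gamma|}$ gives
\[
|w(\gamma)-\widetilde w(\gamma)|\le 2^{3|\gamma|}\min\!\bigl(e^{-pL},(1-p)^{|\gamma|(d-2d^*)/3^D}\bigr)\le \delta\, e^{-c''|\gamma|}.
\]
Here $\delta=(\epsilon/2|\mathcal J|)^{2}$ by the choice of $L$. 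We obtain this by splitting $e^{-pL}$ as a geometric mean of $e^{-pL}$ and the guaranteed weight decay, so that both the smallness $\delta$ and decay in $|\gamma|$ survive. This gives two facts: $|\widetilde w(\gamma)|\le e^{-c|\gamma|}$ still holds after adjusting constants in $d_{crit}$, and the difference itself decays.

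Next we propagate the weight errors to $\sum_{|C|\le m}\Psi(C)$. Both $w$ and $\widetilde w$ satisfy the convergence condition \eqref{eq:cluster-convergence-condition} with modified weights $w'$. For a cluster $C=\{\gamma_1,\dots,\gamma_k\}$ we telescope
\[
\prod_i\widetilde w(\gamma_i)-\prod_i w(\gamma_i)
\]
into $k$ terms, each containing one difference factor $|\widetilde w-w|(\gamma_j)\le\delta e^{-c''|\gamma_j|}$ and the remaining factors bounded by $e^{-c|\gamma_i|}$. We then repeat the summation \eqref{eq:all-cluster-summability}, rooting at the polymer carrying the difference factor. The extra factor $k$ is absorbed by $k\le|C|$ and a slightly smaller $c'$. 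The result is a total error of at most $\delta\,|\mathcal J|\,O(1)\le\epsilon/2$.

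The main obstacle is the tail estimate for truncated Pauli paths. Contractivity (Fact \ref{fact:contractivity}) is stated for minimum weight, not for a projection onto high-weight paths. A projection onto the set of paths is not a channel composition, so it is not obvious that the $\ell_2$ bound survives. We would first try to use orthogonality of Pauli strings layer by layer. The middle map in the Pauli–Liouville representation is a product of transfer matrices in which each noise layer is diagonal with entries $(1-p)^{|P|}$, and each unital 2-qubit gate is an $\ell_2$-contraction on operator space. Then $\norm{\cdot}_2$ of the sum over paths with accumulated weight $>L$ can be bounded by introducing a weight-tracking variable: we multiply each noise factor by $z^{|P|}$ with $z=(1-p)^{-1/2}$ and apply Markov's inequality, $\mathbf 1[\ell>L]\le z^{\ell-L}$. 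The deformed middle map still contracts in $\ell_2$ at rate $(1-p)^{1/2}$ per unit weight. This yields the required $e^{-pL/2}$ tail, which is why $L$ carries the factor $2/p$.
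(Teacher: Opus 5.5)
Your overall architecture matches the paper's. You handle the truncation error via Corollary~\ref{corollary:accuracy} at $\epsilon/2$, bound $|w(\gamma)-\widetilde w(\gamma)|$ with a geometric-mean split so that both smallness in $L$ and decay in $|\gamma|$ survive, then telescope over each cluster and sum with \eqref{eq:ursell-summability-bound}. The gap is exactly the step you flag as the main obstacle, and your proposed fix does not work.

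The pointwise bound $\mathbf{1}[\ell>L]\le z^{\ell-L}$ can only be summed against nonnegative weights. Here the path amplitudes $\alpha_{out}\alpha_{mid}\alpha_{in}$ have arbitrary signs. So do the sector totals $a_\ell$, the signed sum of all paths of total weight $\ell$, with $w(\gamma)=\sum_\ell a_\ell(1-p)^\ell$. Deforming the noise by $z^{|P|}$ only controls the deformed total $\sum_\ell a_\ell(1-p)^\ell z^\ell$. There is no inequality $\bigl|\sum_{\ell>L}a_\ell(1-p)^\ell\bigr|\le z^{-L}\bigl|\sum_\ell a_\ell(1-p)^\ell z^\ell\bigr|$, because the low-weight terms on the right can cancel the rest.

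Taking absolute values path by path does not help either. The sum $\sum_{\text{paths}}|\alpha_{out}\alpha_{mid}\alpha_{in}|$ can grow exponentially in the number of gates, and no $\ell_2$ or hypercontractive estimate controls it. More fundamentally, real deformations only show that $r\mapsto\sum_\ell a_\ell r^\ell$ is bounded by $2^{3|\gamma|}$ on a real interval. For a polynomial of degree up to $n_\gamma(d-2d^*)$, that does not control its coefficients or its tail; Chebyshev polynomials are the standard obstruction.

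The ingredient you are missing is to deform by a phase rather than a real factor. Replace each middle noise layer by the map multiplying every non-identity single-qubit Pauli by $e^{i\theta}$. This map is diagonal and unitary in the Pauli basis, hence an isometry in the $2$-norm. The chain \eqref{eq:weight-decay-hypercontractive-steps}, with the same hypercontractive bookends, then gives $|F(\theta)|\le 2^{3|\gamma|}$ for all $\theta$, where $F(\theta)=\sum_\ell a_\ell e^{i\ell\theta}$. Fourier orthogonality isolates each sector:
\[
|a_\ell|=\Bigl|\frac{1}{2\pi}\int_0^{2\pi}F(\theta)e^{-i\ell\theta}\,d\theta\Bigr|\le 2^{3|\gamma|}.
\]
Only then do you apply the triangle inequality, over $\ell$ rather than over paths: $|w(\gamma)-\widetilde w(\gamma)|\le\sum_{\ell>L}(1-p)^\ell|a_\ell|\le 2^{3|\gamma|}(1-p)^L/p$. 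A route closer to your instinct, which the paper mentions in a footnote, is to carry the accumulated weight as an orthogonal register. Then the Markov-type weighting acts on the nonnegative sector norms $\|X_k\|_2^2$ instead of on signed amplitudes.

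Two bookkeeping slips also need fixing.
\begin{itemize}
\item A geometric-mean split of $\min\bigl(e^{-pL},(1-p)^{|\gamma|(d-2d^*)/3^D}\bigr)$ leaves smallness $e^{-pL/2}\le\epsilon/(2|\mathcal J|)$, not $\delta=(\epsilon/2|\mathcal J|)^2$.
\item If your tail were only $e^{-pL/2}$, as your last paragraph says, the split would leave $e^{-pL/4}$. For the algorithm's $L$ this is only $\sqrt{\epsilon/(2|\mathcal J|)}$, which does not yield total error $\epsilon/2$.
\end{itemize}
It is the full $(1-p)^L$ tail, with the $1/p$ absorbed into $d_{crit}$, that makes $L=\lceil\frac{2}{p}\log\frac{2|\mathcal J|}{\epsilon}\rceil$ sufficient.
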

\begin{proof}
    First, if brute force computation was used to evaluate $w(\gamma)$ (rather than steps 5-10 computing the estimate $\widetilde{w}(\gamma)$), then the estimate would achieve an accuracy of $\epsilon/2$ via \cref{corollary:accuracy}. 

    Next, we show that $\widetilde{w}(\gamma)$ achieves an additional $\epsilon/2$ error. The main intuition is that we should expect $|w(\gamma) - \widetilde{w}(\gamma)|$ to contract exponentially with $L$ during the middle layers where 2-norm contraction occurs, since it contains only Pauli paths of weight $\ell > L$ (rather than the loose lower bound of weight $\ell \geq \frac{|\gamma|(d-2d^*)}{3^D}$ used to bound $w(\gamma)$). However, in order to apply the hypercontraction and contraction chain of \cref{eq:weight-decay-hypercontractive-steps}, one needs to argue that ignoring all Pauli paths of weight $\ell \leq L$ will not increase the $\ell_2$-norm, which is complicated by the fact that Pauli paths can be weighted by signed coefficients causing cancellations. We use complex analysis to solve this issue cleanly\footnote{We expect that there are other methods, e.g. grouping Pauli paths by their accumulated weight over time and recombining them at the end, but our proof is likely simpler.}. 
    
For $\theta \in [0,2\pi]$, let $F(\theta)$ denote the value of $w(\gamma)$ when the action of each noise
channel $\mc N^{p}$ in $\Phi_{\gamma,d^*+1:d-d^*}^{\dagger}$ is replaced with a $\mc N^{1-e^{i\theta}}$ linear map (which acts on a nonidentity Pauli operator by multiplication by $e^{i\theta}$). We can express this quantity as a polynomial in $e^{i\theta}$
\[
F(\theta)=\sum_{\ell}a_\ell (e^{i\theta})^\ell
\]
Each $a_\ell$ corresponds to the signed total contribution of all weight-$\ell$ Pauli paths to $w(\gamma)$, and the noise-induced decay coefficient of $(1-p)^{\ell}$ has been replaced by $(e^{i\theta})^\ell$. Note, by Fourier orthogonality, $a_\ell = \frac{1}{2\pi}\int_0^{2\pi}
F(\theta)e^{-i\theta \ell}\,d\theta$. Next, the hypercontraction and contraction chain of \cref{eq:weight-decay-hypercontractive-steps} can be easily modified to bound $F(\theta)$ for all $\theta$: the hypercontractive steps carry over exactly while the phases preserve $\ell_2$-norm. This gives,
\begin{equation} \label{eq:r=1-bound}
    |a_\ell| \leq \frac{1}{2\pi}\int_0^{2\pi}
|F(\theta)|\,d\theta\leq \sup_\theta|F(\theta)|\leq 2^{3|\gamma|}
\end{equation}
Now, we can bound
\begin{align}
    |\tilde{w}(\gamma)| &\leq \left|\sum_{\ell > \frac{|\gamma|(d-2d^*)}{3^D}}^{L}  (1-p)^\ell a_\ell\right| \leq \frac{(1-p)^{\frac{|\gamma|(d-2d^*)}{3^D}}}{p}2^{3|\gamma|}\\
    |w(\gamma) - \widetilde{w}(\gamma)| &\leq \left|\sum_{\ell > L}^{n_{\gamma}(d-2d^*)}  (1-p)^\ell a_\ell\right| \leq \frac{(1-p)^L}{p}2^{3|\gamma|} \\
\end{align}

Using the minimum possible path weight, whether or not any
paths remain after truncation, we obtain
\[
|w(\gamma)-\widetilde w(\gamma)|
\leq \frac{2^{3|\gamma|}}{p}
(1-p)^{\max\{L,|\gamma|(d-2d^*)/3^D\}}
\leq (1-p)^{L/2}\frac{2^{3|\gamma|}}{p}
(1-p)^{|\gamma|(d-2d^*)/(2\cdot3^D)}.
\]
Reapplying the algebra at the end of \cref{sec:hypercontractivity}, we have 
 $|\widetilde{w}(\gamma)|\leq e^{-c|\gamma|}$ and $|w-\widetilde{w}(\gamma)| \leq  e^{-c|\gamma|}(1-p)^{L/2} $ when $d > d_{crit} = O(\frac{D}{p}\log \frac{D}{p}+ \frac{3^D}{p}c)$. The first inequality implies that the cluster expansion of $\widetilde{\Xi}$ (the polymer partition function
with weights $\widetilde{w}(\gamma)$) converges absolutely. We use the second inequality to replace the clusters of $\Xi$ and $\widetilde{\Xi}$ one at a time using a telescoping bound, 
\[
    \left|
        \prod_{i=1}^k w(\gamma_i)
        -\prod_{i=1}^k \widetilde{w}(\gamma_i)
    \right|
    \le
    k (1-p)^{L/2}\prod_{i=1}^k e^{-c|\gamma_i|}.
\]
Subtracting the two cluster expansions therefore gives
\begin{align}
    |\log\Xi-\log\widetilde{\Xi}|
    &\le (1-p)^{L/2}
       \sum_{k\ge1}\sum_{\gamma_1,\ldots,\gamma_k}
       k|\varphi(\gamma_1,\ldots,\gamma_k)|
       \prod_{i=1}^k e^{-c|\gamma_i|}
       \notag\\
    &\le (1-p)^{L/2}
       \sum_\gamma e^{-c|\gamma|}e^{|\gamma|}
       \notag\\
    &\le (1-p)^{L/2}|\mathcal J|
    \label{eq:log-partition-truncation-error}
\end{align}
The second inequality is
\cref{eq:ursell-summability-bound} applied to $e^{-c|\gamma|}$.
For the last inequality, every polymer contains a block,
and the convergence condition applied to the singleton
polymer $\{x\}$ gives
\[
    \sum_\gamma e^{-c|\gamma|}e^{|\gamma|}
    \le
    \sum_{x\in\mathcal J}
    \sum_{\gamma\ni x}e^{-c|\gamma|}e^{|\gamma|}
    \le|\mathcal J|.
\]
Thus taking
$L=\left\lceil\frac{2}{p}
\log\frac{2|\mathcal J|}{\epsilon}\right\rceil$
suffices to approximate $\log\Xi$ by $\log\widetilde{\Xi}$
to additive error at most $\epsilon/2$ (via a triangle inequality).
\end{proof}

\begin{proof}[Proof of \cref{thm:formal}]
    The full algorithm is specified in Algorithm \ref{alg:fptas}, the runtime is bounded in Lemma \ref{lemma:runtime}, and the accuracy is bounded in Lemma \ref{lemma:correctness}.
\end{proof}

\section{Omitted proofs}
\label{app:omitted}
\begin{proof}[Proof of Proposition \ref{prop:sampling-from-log-marginals}]
Order the qubits as $1,\ldots,n$ and generate the output bits
sequentially. Suppose that $x_S$, where $S=\{1,\ldots,i-1\}$, has
already been sampled. For each $b\in\{0,1\}$, compute an estimate
$\eta_b$ of
\[
\log P_\Phi(x_{S\cup\{i\}}),
\qquad x_i=b,
\]
such that
\[
\bigl|\eta_b-\log P_\Phi(x_{S\cup\{i\}})\bigr|
\leq\epsilon.
\]
Define the approximate conditional distribution by
\[
\widetilde{P}_\Phi(x_i=b\mid x_S)
=
\frac{e^{\eta_b}}
     {e^{\eta_0}+e^{\eta_1}}.
\]
Because exponentiating an additive $\epsilon$-approximation produces a
multiplicative $e^{\pm\epsilon}$-approximation, we have
\[
e^{-\epsilon}P_\Phi(x_{S\cup\{i\}})
\leq e^{\eta_b}
\leq e^\epsilon P_\Phi(x_{S\cup\{i\}}).
\]
The same bounds hold after summing over $b\in\{0,1\}$:
\[
e^{-\epsilon}P_\Phi(x_S)
\leq e^{\eta_0}+e^{\eta_1}
\leq e^\epsilon P_\Phi(x_S).
\]
Taking the ratio of these bounds gives
\[
e^{-2\epsilon}P_\Phi(x_i=b\mid x_S)
\leq \widetilde{P}_\Phi(x_i=b\mid x_S)
\leq e^{2\epsilon}P_\Phi(x_i=b\mid x_S).
\]
Finally, by the chain rule,
\[
P_\Phi(x)
=
\prod_{i=1}^n
P_\Phi\bigl(x_i\mid x_{\{1,\ldots,i-1\}}\bigr)
\]
and
\[
\widetilde{P}_\Phi(x)
=
\prod_{i=1}^n
\widetilde{P}_\Phi\bigl(x_i\mid x_{\{1,\ldots,i-1\}}\bigr).
\]
Multiplying the conditional bounds over all $n$ steps therefore yields
\[
e^{-2n\epsilon}P_\Phi(x)
\leq \widetilde{P}_\Phi(x)
\leq e^{2n\epsilon}P_\Phi(x).
\]
Each step requires only two marginal computations, so the total running
time is $O(nT)$. In particular, the multiplicative bound also implies
\[
\sum_{x\in\{0,1\}^n}\frac{1}{2}\left|P_\Phi(x)-\widetilde{P}_\Phi(x)\right|
\leq\frac{1}{2}
\bigl(e^{2n\epsilon}-1\bigr)
\sum_{x\in\{0,1\}^n}P_\Phi(x)
=
\frac{1}{2}(e^{2n\epsilon}-1) 
\]
\end{proof}

\begin{proof}[Proof of Fact \ref{fact:hypercontractivity}]
Quantum hypercontractivity \cite[Theorem~1]{king_hypercontractivity} is typically described using the \textit{normalized} Schatten norm $\|X\|_{q,\tau}:=2^{-n/q}\|X\|_q$. It states that 
\[
    \|\mathcal N_{[n]}^{p}(X)\|_{1+\delta',\tau}
    \leq \|X\|_{1+\delta,\tau}
    \qquad\text{whenever}\qquad
    (1-p)^2\delta'\leq\delta.
\]
Converting the above inequality from normalized to unnormalized Schatten norm, then using the fact that unital CPTP maps are contractive in every Schatten norm, and finally iterating over $d$ layers, proves the hypercontractive bound.
\end{proof}

\begin{proof}[Proof of Fact \ref{fact:contractivity}]
Let $f_0=f$ and $ f_i=(\mathcal U_i\circ\mathcal N_{[n]})(f_{i-1})$.
Write the Pauli decomposition immediately before the $i$th noise
layer as
\[
    f_{i-1}=\sum_P \alpha_P P.
\]
By the definition of $w$, every $P$ with $\alpha_P\neq0$ satisfies
$|P|\geq w$. Since
$\mathcal N_{[n]}(P)=(1-p)^{|P|}P$, orthogonality of the Pauli basis
and contractivity of unital maps in the Hilbert--Schmidt norm give
\[
\begin{aligned}
    \|f_i\|_2^2
    &\leq \|\mathcal N_{[n]}(f_{i-1})\|_2^2 \\
    &=\sum_P(1-p)^{2|P|}|\alpha_P|^2\|P\|_2^2 \\
    &\le(1-p)^{2w}\|f_{i-1}\|_2^2.
\end{aligned}
\]
Iterating over the $d$ layers proves the contractive bound.
\end{proof}

\end{document}

%% file: figures/proof-strategy-flowchart.tikz
\begingroup
\definecolor{flowink}{HTML}{17324D}
\definecolor{flowmuted}{HTML}{536B82}
\definecolor{flowaccent}{HTML}{2478B5}
\definecolor{flowaccentlight}{HTML}{EAF5FC}
\definecolor{flowcardfill}{HTML}{F8FBFD}
\definecolor{flowline}{HTML}{AFC7D8}

\tikzset{
  flow title/.style={
    font=\sffamily\bfseries\fontsize{16}{19}\selectfont,
    text=flowink,
    align=center,
    text width=8.9cm
  },
  flow subtitle/.style={
    font=\sffamily\fontsize{12}{14.5}\selectfont,
    text=flowmuted,
    align=center,
    text width=8.9cm
  },
  flow math band/.style={
    rounded corners=2.5pt,
    fill=flowaccentlight,
    draw=flowline,
    line width=0.55pt,
    minimum width=9.2cm,
    minimum height=1.25cm,
    inner xsep=4mm,
    inner ysep=2.6mm,
    font=\fontsize{15}{18}\selectfont,
    text=flowink,
    align=center
  },
  flow card/.style={
    rounded corners=6pt,
    fill=flowcardfill,
    draw=flowline,
    line width=0.8pt,
    inner xsep=4.5mm,
    inner ysep=4mm
  },
  flow branch title/.style={
    font=\sffamily\bfseries\fontsize{14}{17}\selectfont,
    text=flowink,
    align=center,
    text width=6.2cm,
    minimum height=0.75cm
  },
  flow branch subtitle/.style={
    font=\sffamily\fontsize{11.5}{14}\selectfont,
    text=flowmuted,
    align=center,
    text width=6.2cm,
    minimum height=1.05cm
  },
  flow branch band/.style={
    rounded corners=2.5pt,
    fill=flowaccentlight,
    draw=flowline,
    line width=0.55pt,
    minimum width=6.6cm,
    text width=6.2cm,
    minimum height=1.05cm,
    inner xsep=2mm,
    inner ysep=2.2mm,
    font=\sffamily\bfseries\fontsize{10.5}{13}\selectfont,
    text=flowaccent,
    align=center
  },
  flow arrow/.style={
    -{Stealth[length=3.2mm,width=2.4mm]},
    draw=flowaccent,
    line width=1.15pt,
    rounded corners=3pt
  },
  flow line/.style={
    draw=flowaccent,
    line width=1.15pt,
    rounded corners=3pt
  }
}

\begin{tikzpicture}
  \node[flow title] (sample-title) {Approximate sampling};
  \node[flow subtitle, below=0.5mm of sample-title]
    (sample-subtitle) {Draw from a distribution close to the circuit output};
  \node[flow math band, below=2.2mm of sample-subtitle]
    (sample-math) {$\widetilde P(x)=(1\pm\varepsilon)P(x)$};
  \begin{scope}[on background layer]
    \node[flow card, fit=(sample-title)(sample-subtitle)(sample-math)] (sample) {};
  \end{scope}

  \node[flow title, below=15mm of sample]
    (marginal-title) {Sampling-to-computing reduction};
  \node[flow subtitle, below=0.5mm of marginal-title]
    (marginal-subtitle) {Compute the log-marginals};
  \node[flow math band, below=2.2mm of marginal-subtitle]
    (marginal-math) {$\log P(x_S)\;\pm\;\varepsilon$};
  \begin{scope}[on background layer]
    \node[flow card, fit=(marginal-title)(marginal-subtitle)(marginal-math)] (marginal) {};
  \end{scope}

  \node[flow title, below=15mm of marginal]
    (cluster-title) {Truncated cluster expansion};
  \node[flow subtitle, below=0.5mm of cluster-title]
    (cluster-subtitle) {Express the log-marginal using only small clusters};
  \node[flow math band, below=2.2mm of cluster-subtitle]
    (cluster-math) {$\displaystyle
      \log P(x_S)\;\approx\!
      \sum_{\substack{C:\,|C|\le O(\!\log n\!)}}\!\Psi(C)
      \;\pm\;\varepsilon$};
  \begin{scope}[on background layer]
    \node[flow card, fit=(cluster-title)(cluster-subtitle)(cluster-math)] (cluster) {};
  \end{scope}

  \coordinate (fork) at ($(cluster.south)+(0,-12mm)$);

  \node[flow branch title, anchor=north] at ($(fork)+(-4.25cm,-16mm)$)
    (kp-title) {Prove convergence};
  \node[flow branch subtitle, below=0.6mm of kp-title]
    (kp-subtitle) {Verify the Koteck\'y--Preiss condition};
  \node[flow branch band, below=2.2mm of kp-subtitle]
    (kp-band) {via hypercontractivity};
  \begin{scope}[on background layer]
    \node[flow card, fit=(kp-title)(kp-subtitle)(kp-band)] (kp) {};
  \end{scope}

  \node[flow branch title, anchor=north] at ($(fork)+(4.25cm,-16mm)$)
    (pauli-title) {Compute clusters $\Psi(C)$};
  \node[flow branch subtitle, below=0.6mm of pauli-title]
    (pauli-subtitle) {using brute-force or};
  \node[flow branch band, below=2.2mm of pauli-subtitle]
    (pauli-band) {via tensor network $+$ Pauli paths};
  \begin{scope}[on background layer]
    \node[flow card, fit=(pauli-title)(pauli-subtitle)(pauli-band)] (pauli) {};
  \end{scope}

  \draw[flow arrow] (sample.south) -- (marginal.north);
  \draw[flow arrow] (marginal.south) -- (cluster.north);
  \draw[flow line] (cluster.south) -- (fork);
  \draw[flow arrow] (fork) -| (kp.north);
  \draw[flow arrow] (fork) -| (pauli.north);
\end{tikzpicture}
\endgroup

%% file: figures/backward-lightcone.tikz
\begin{tikzpicture}[
    x={(1cm,0cm)},
    y={(0.40cm,0.34cm)},
    z={(0cm,0.95cm)},
    line cap=round,
    line join=round
]
    \definecolor{spacetimegreen}{HTML}{8FD5B5}
    \definecolor{coneblue}{HTML}{2878B5}
    \definecolor{coneorange}{HTML}{D9822B}
    \colorlet{coneblend}{coneblue!50!coneorange}
    \definecolor{blockred}{HTML}{D62828}

    \path[fill=gray!14,draw=gray!60,line width=0.7pt]
        (0,0,0) -- (7,0,0) -- (7,0,1.24) -- (0,0,1.24) -- cycle;
    \path[fill=gray!24,draw=gray!60,line width=0.7pt]
        (7,0,0) -- (7,5,0) -- (7,5,1.24) -- (7,0,1.24) -- cycle;

    \path[fill=gray!5,draw=gray!55,line width=0.7pt]
        (0,0,0) -- (7,0,0) -- (7,5,0) -- (0,5,0) -- cycle;
    \foreach \xx in {1,...,6}
        \draw[gray!48,opacity=0.23,line width=0.42pt]
            (\xx,0,0) -- (\xx,5,0);
    \foreach \yy in {1,...,4}
        \draw[gray!48,opacity=0.23,line width=0.42pt]
            (0,\yy,0) -- (7,\yy,0);
    \draw[gray!60,dashed,line width=0.65pt]
        (0,5,0) -- (7,5,0);
    \draw[gray!60,dashed,line width=0.65pt]
        (0,5,0) -- (0,5,1.24);
    \path[fill=coneblue!18,draw=none]
        (1.5,1.5,0) -- (3.44,1.5,0) -- (3.44,3.5,0) -- (1.5,3.5,0) -- cycle;
    \path[fill=coneblue!28,draw=none]
        (1.5,1.5,0) -- (3.44,1.5,0) -- (3,2,1.24) -- (2,2,1.24) -- cycle;
    \path[fill=coneblue!34,draw=none]
        (3.44,1.5,0) -- (3.44,3.5,0) -- (3,3,1.24) -- (3,2,1.24) -- cycle;
    \path[fill=coneblue!22,draw=none]
        (3.44,3.5,0) -- (1.5,3.5,0) -- (2,3,1.24) -- (3,3,1.24) -- cycle;
    \path[fill=coneblue!16,draw=none]
        (1.5,3.5,0) -- (1.5,1.5,0) -- (2,2,1.24) -- (2,3,1.24) -- cycle;

    \path[fill=coneorange!18,draw=none]
        (3.56,1.5,0) -- (5.5,1.5,0) -- (5.5,3.5,0) -- (3.56,3.5,0) -- cycle;
    \path[fill=coneorange!28,draw=none]
        (3.56,1.5,0) -- (5.5,1.5,0) -- (5,2,1.24) -- (4,2,1.24) -- cycle;
    \path[fill=coneorange!34,draw=none]
        (5.5,1.5,0) -- (5.5,3.5,0) -- (5,3,1.24) -- (5,2,1.24) -- cycle;
    \path[fill=coneorange!22,draw=none]
        (5.5,3.5,0) -- (3.56,3.5,0) -- (4,3,1.24) -- (5,3,1.24) -- cycle;
    \path[fill=coneorange!16,draw=none]
        (3.56,3.5,0) -- (3.56,1.5,0) -- (4,2,1.24) -- (4,3,1.24) -- cycle;

    \path[fill=coneblend!34,draw=none]
        (3.44,2.003,0) -- (3.44,3.5,0) --
        (3.2995,3.3404,0.3959) -- cycle;

    \path[fill=spacetimegreen!82,fill opacity=0.62,
          draw=spacetimegreen!45!black,draw opacity=1,line width=0.8pt]
        (0,0,1.24) -- (7,0,1.24) -- (7,5,1.24) -- (0,5,1.24) -- cycle;

    \foreach \xx in {1,...,6}
        \draw[spacetimegreen!35!black,opacity=0.58,line width=0.82pt]
            (\xx,0,1.24) -- (\xx,5,1.24);
    \foreach \yy in {1,...,4}
        \draw[spacetimegreen!35!black,opacity=0.58,line width=0.82pt]
            (0,\yy,1.24) -- (7,\yy,1.24);

    \draw[draw=coneblue!80!black,line width=0.7pt]
        (1.5,3.5,0) -- (1.5,1.5,0) -- (3.44,1.5,0) -- (3.44,3.5,0);
    \draw[draw=coneblue!80!black,line width=0.7pt]
        (1.5,3.5,0) -- (3.44,3.5,0);
    \draw[draw=coneblue!70!black,dashed,line width=0.65pt]
        (2,2,1.24) -- (1.5,1.5,0)
        (3,2,1.24) -- (3.44,1.5,0);
    \draw[draw=coneblue!65!black,dashed,line width=0.65pt]
        (2,3,1.24) -- (1.5,3.5,0)
        (3,3,1.24) -- (3.44,3.5,0);

    \draw[draw=coneorange!75!black,line width=0.7pt]
        (3.56,3.5,0) -- (3.56,1.5,0) -- (5.5,1.5,0) -- (5.5,3.5,0);
    \draw[draw=coneorange!75!black,line width=0.7pt]
        (3.56,3.5,0) -- (5.5,3.5,0);
    \draw[draw=coneorange!68!black,dashed,line width=0.65pt]
        (4,2,1.24) -- (3.56,1.5,0)
        (5,2,1.24) -- (5.5,1.5,0);
    \draw[draw=coneorange!65!black,dashed,line width=0.65pt]
        (4,3,1.24) -- (3.56,3.5,0)
        (5,3,1.24) -- (5.5,3.5,0);

    \path[fill=blockred,draw=blockred!55!black,line width=0.9pt]
        (2,2,1.24) -- (3,2,1.24) -- (3,3,1.24) -- (2,3,1.24) -- cycle;
    \path[fill=blockred,draw=blockred!55!black,line width=0.9pt]
        (4,2,1.24) -- (5,2,1.24) -- (5,3,1.24) -- (4,3,1.24) -- cycle;

    \foreach \bx in {0,...,6}
        \foreach \by in {0,...,4}
            \foreach \dx in {0.22,0.50,0.78}
                \foreach \dy in {0.22,0.50,0.78}
                    \filldraw[fill=black!72,draw=white,line width=0.12pt]
                        ({\bx+\dx},{\by+\dy},1.24) circle[radius=0.026cm];

    \draw[gray!60,line width=0.75pt]
        (7,0,0) -- (7,0,1.24);

\end{tikzpicture}

%% file: figures/lattice-animals.tikz
\begin{tikzpicture}[
    x={(0.62cm,0cm)},
    y={(0.248cm,0.211cm)},
    z={(0cm,0.59cm)},
    line cap=round,
    line join=round
]
    \definecolor{spacetimegreen}{HTML}{8FD5B5}
    \definecolor{animalblue}{HTML}{3478B8}
    \definecolor{animalorange}{HTML}{E28A2B}
    \definecolor{animalpurple}{HTML}{8A63B8}
    \definecolor{animalred}{HTML}{C94F5C}

    \path[fill=gray!13,draw=gray!55,line width=0.65pt]
        (0,0,0) -- (9,0,0) -- (9,0,1) -- (0,0,1) -- cycle;
    \path[fill=gray!22,draw=gray!55,line width=0.65pt]
        (9,0,0) -- (9,8,0) -- (9,8,1) -- (9,0,1) -- cycle;
    \path[fill=gray!4,draw=gray!50,line width=0.65pt]
        (0,0,0) -- (9,0,0) -- (9,8,0) -- (0,8,0) -- cycle;

    \def\animalcone#1#2#3#4#5{%
        \path[fill=#1!17,draw=#1!58!black,line width=0.42pt]
            ({#2-0.18},{#3-0.18},0) -- ({#4+0.18},{#3-0.18},0) --
            ({#4+0.18},{#5+0.18},0) -- ({#2-0.18},{#5+0.18},0) -- cycle;
        \path[fill=#1!27,draw=#1!62!black,line width=0.42pt]
            ({#2-0.18},{#3-0.18},0) -- ({#4+0.18},{#3-0.18},0) --
            (#4,#3,1) -- (#2,#3,1) -- cycle;
        \path[fill=#1!31,draw=#1!62!black,line width=0.42pt]
            ({#4+0.18},{#3-0.18},0) -- ({#4+0.18},{#5+0.18},0) --
            (#4,#5,1) -- (#4,#3,1) -- cycle;
        \path[fill=#1!20,draw=#1!58!black,line width=0.42pt]
            ({#4+0.18},{#5+0.18},0) -- ({#2-0.18},{#5+0.18},0) --
            (#2,#5,1) -- (#4,#5,1) -- cycle;
        \path[fill=#1!15,draw=#1!58!black,line width=0.42pt]
            ({#2-0.18},{#5+0.18},0) -- ({#2-0.18},{#3-0.18},0) --
            (#2,#3,1) -- (#2,#5,1) -- cycle;
    }

    \animalcone{animalblue}{1}{1}{3}{4}
    \animalcone{animalorange}{5}{1}{8}{4}
    \animalcone{animalred}{1}{5}{3}{7}
    \animalcone{animalpurple}{4}{5}{7}{7}

    \path[fill=spacetimegreen!82,fill opacity=0.62,
          draw=spacetimegreen!42!black,draw opacity=1,line width=0.8pt]
        (0,0,1) -- (9,0,1) -- (9,8,1) -- (0,8,1) -- cycle;
    \foreach \x in {1,...,8}
        \draw[spacetimegreen!35!black,opacity=0.58,line width=0.82pt]
            (\x,0,1) -- (\x,8,1);
    \foreach \y in {1,...,7}
        \draw[spacetimegreen!35!black,opacity=0.58,line width=0.82pt]
            (0,\y,1) -- (9,\y,1);

    \foreach \x/\y in {1/1,2/1,1/2,1/3,2/3}
        \path[fill=animalblue!82,draw=animalblue!55!black,line width=0.45pt]
            (\x,\y,1) -- ++(1,0,0) -- ++(0,1,0) -- ++(-1,0,0) -- cycle;
    \foreach \x/\y in {5/1,5/2,6/2,7/2,7/3}
        \path[fill=animalorange!84,draw=animalorange!58!black,line width=0.45pt]
            (\x,\y,1) -- ++(1,0,0) -- ++(0,1,0) -- ++(-1,0,0) -- cycle;
    \foreach \x/\y in {4/6,5/6,6/6,5/5,6/5}
        \path[fill=animalpurple!82,draw=animalpurple!55!black,line width=0.45pt]
            (\x,\y,1) -- ++(1,0,0) -- ++(0,1,0) -- ++(-1,0,0) -- cycle;
    \foreach \x/\y in {1/6,2/6,1/5}
        \path[fill=animalred!82,draw=animalred!55!black,line width=0.45pt]
            (\x,\y,1) -- ++(1,0,0) -- ++(0,1,0) -- ++(-1,0,0) -- cycle;

    \foreach \bx in {0,...,8}
        \foreach \by in {0,...,7}
            \foreach \dx in {0.22,0.50,0.78}
                \foreach \dy in {0.22,0.50,0.78}
                    \filldraw[fill=black!72,draw=white,line width=0.10pt]
                        ({\bx+\dx},{\by+\dy},1) circle[radius=0.024cm];

    \draw[gray!58,line width=0.65pt] (9,0,0) -- (9,0,1);
\end{tikzpicture}

%% file: figures/contractivity_layers.tikz
\begin{tikzpicture}[x=0.88cm,y=0.88cm,>=Latex,
 gate/.style={draw=wire!75,fill=white,rounded corners=1pt,line width=.5pt,minimum height=3mm},
 norm/.style={fill=white,inner sep=3pt,font=\large},
 brace/.style={decorate,decoration={brace,amplitude=5pt},line width=.7pt},
 every node/.style={font=\small}]
\fill[hc!9,rounded corners=3pt] (-.35,4.15) rectangle (7.35,2.55);
\fill[ct!9,rounded corners=3pt] (-.35,10.25) rectangle (7.35,4.15);
\fill[hc!9,rounded corners=3pt] (-.35,11.85) rectangle (7.35,10.25);
\foreach \x in {0,...,7} {\draw[wire!65,line width=.55pt] (\x,11.85)--(\x,2.55);}
\foreach \y in {3,4.75,6.3,7.85,9.4,10.75} {
 \foreach \x in {0,2,4,6} {\draw[gate] (\x-.13,\y-.14) rectangle (\x+1.13,\y+.14);}}
\foreach \y in {3.7,5.5,7.05,8.6,9.95,11.45} {
 \foreach \x in {1,3,5} {\draw[gate] (\x-.13,\y-.14) rectangle (\x+1.13,\y+.14);}}
\foreach \y in {3.35,3.95,5.1,5.9,6.65,7.45,8.2,9,9.65,11.1,11.7} {
 \foreach \x in {0,...,7} {\fill[wire!65] (\x,\y) circle (1.1pt);}}
\foreach \y in {2.55,4.15,10.25,11.85} {\draw[wire!35,dashed] (-.4,\y)--(8.3,\y);}
\draw[->,wire,thick] (-2.1,2.55)--(-2.1,11.85)
 node[midway,left=5pt,rotate=90,anchor=south] {time};
\draw[brace] (-.7,2.55)--(-.7,4.15) node[midway,left=8pt] {$d^*$};
\draw[brace] (-.7,4.15)--(-.7,10.25) node[midway,left=8pt,rotate=90,anchor=south] {$d-2d^*$};
\draw[brace] (-.7,10.25)--(-.7,11.85) node[midway,left=8pt] {$d^*$};
\node[anchor=north] at (3.5,2.45) {$\Phi_{\gamma,1:d}^{\dagger}(M_\gamma)$};
\node[anchor=south] at (3.5,12) {$M_\gamma$};
\node[norm] (inf) at (9.15,0.95) {$\|\cdot\|_\infty$};
\node[norm] (q) at (9.15,2.55) {$\|\cdot\|_{1+(4d)^D}$};
\node[norm] (twoA) at (9.15,4.15) {$\|\cdot\|_2$};
\node[norm] (twoB) at (9.15,10.25) {$\|\cdot\|_2$};
\node[norm] (r) at (9.15,11.85) {$\|\cdot\|_{1+(4d)^{-D}}$};
\node[norm] (one) at (9.15,12.75) {$\|M_\gamma\|_1$};
\draw[->,wire] (inf.north)--(q.south) node[midway,right=11pt,font=\scriptsize] {monotonicity};
\draw[->,hc,thick] (q)--(twoA);
\draw[->,ct,thick] (twoA)--(twoB);
\draw[->,hc,thick] (twoB)--(r);
\draw[->,wire] (r)--(one) node[midway,right=20pt,font=\scriptsize] {monotonicity};
\node[anchor=west,text=hc,align=left] at (10.25,3.35) {Hypercontractivity\\[3pt]$\displaystyle 2^{-n_\gamma(\frac12-\frac{1}{1+(4d)^D})}$};
\node[anchor=west,text=ct,align=left] at (10.25,7.2) {$2$-norm contraction\\[5pt]$\displaystyle (1-p)^{|\gamma|(d-2d^*)/(3^D)}$};
\node[anchor=west,text=hc,align=left] at (10.25,11.05) {Hypercontractivity\\[3pt]$\displaystyle 2^{-n_\gamma(\frac{1}{1+(4d)^{-D}}-\frac12)}$};
\node[anchor=west] at (10.6,12.75) {$\leq 2^{n_\gamma+|\gamma|}$};
\end{tikzpicture}

%% file: joel.bib
@misc{scalet_entanglement_sudden_death,
      title={Spatial Entanglement Sudden Death in Spin Chains at All Temperatures}, 
      author={Samuel O. Scalet},
      year={2026},
      eprint={2602.20694},
      archivePrefix={arXiv},
      primaryClass={quant-ph},
      url={https://arxiv.org/abs/2602.20694}, 
}

@misc{bakshi_spin_chains,
      title={Entanglement in quantum spin chains is strictly finite at any temperature}, 
      author={Ainesh Bakshi and Soonwon Choi and Saúl Pilatowsky-Cameo},
      year={2026},
      eprint={2602.13386},
      archivePrefix={arXiv},
      primaryClass={quant-ph},
      url={https://arxiv.org/abs/2602.13386}, 
}

@misc{shravan_iqp_damping,
      title={Efficient simulation of noisy IQP circuits with amplitude-damping noise}, 
      author={Shravan Shravan and Mohsin Raza and Ariel Shlosberg},
      year={2026},
      eprint={2604.05036},
      archivePrefix={arXiv},
      primaryClass={quant-ph},
      url={https://arxiv.org/abs/2604.05036}, 
}

@misc{wei_rajakumar_nelson_noise-induced_trunctation,
      title={Noise-induced contraction of MPO truncation errors in noisy random circuits and Lindbladian dynamics}, 
      author={Zhi-Yuan Wei and Joel Rajakumar and Jon Nelson and Daniel Malz and Michael J. Gullans and Alexey V. Gorshkov},
      year={2026},
      eprint={2603.20400},
      archivePrefix={arXiv},
      primaryClass={quant-ph},
      url={https://arxiv.org/abs/2603.20400}, 
}

@misc{lee_ghosh_approx_markov,
      title={Classical simulation of noisy random circuits from exponential decay of correlation}, 
      author={Su-un Lee and Soumik Ghosh and Changhun Oh and Kyungjoo Noh and Bill Fefferman and Liang Jiang},
      year={2025},
      eprint={2510.06328},
      archivePrefix={arXiv},
      primaryClass={quant-ph},
      url={https://arxiv.org/abs/2510.06328}, 
}

@misc{zhang_lee_approx_markov,
      title={Classically Sampling Noisy Quantum Circuits in Quasi-Polynomial Time under Approximate Markovianity}, 
      author={Yifan F. Zhang and Su-un Lee and Liang Jiang and Sarang Gopalakrishnan},
      year={2025},
      eprint={2510.06324},
      archivePrefix={arXiv},
      primaryClass={quant-ph},
      url={https://arxiv.org/abs/2510.06324}, 
}

@misc{nelson_rajakumar_limitations,
      title={Limitations of Noisy Geometrically Local Quantum Circuits}, 
      author={Jon Nelson and Joel Rajakumar and Michael J. Gullans},
      year={2025},
      eprint={2510.06346},
      archivePrefix={arXiv},
      primaryClass={quant-ph},
      url={https://arxiv.org/abs/2510.06346}, 
}

@misc{anshu_liu_nguyen_pattison_loglogn_iqp,
author={A. Anshu and Y. Liu and Q. Nguyen and C. Pattison},
note = {Simons talk available at: https://simons.berkeley.edu/talks/quynh-nguyen-harvard-university-2025-05-28},
year = {2025}
}

@misc{park_oh_mps_noisy_iqp,
      title={Matrix product state approach to lossy boson sampling and noisy IQP sampling}, 
      author={Sojeong Park and Changhun Oh},
      year={2025},
      eprint={2510.24137},
      archivePrefix={arXiv},
      primaryClass={quant-ph},
      url={https://arxiv.org/abs/2510.24137}, 
}

@misc{zlokapa_syk,
      title={SYK thermal expectations are classically easy at any temperature}, 
      author={Alexander Zlokapa and Bobak T. Kiani},
      year={2026},
      eprint={2602.22619},
      archivePrefix={arXiv},
      primaryClass={quant-ph},
      url={https://arxiv.org/abs/2602.22619}, 
}

@article{wei_nelwon_rajakumar_2d_clifford,
  title = {Measurement-Induced Entanglement in Noisy 2D Random Circuits},
  author = {Wei, Zhi-Yuan and Nelson, Jon and Rajakumar, Joel and Cruz, Esther and Gorshkov, Alexey V. and Gullans, Michael J. and Malz, Daniel},
  journal = {PRX Quantum},
  volume = {7},
  number = {3},
  pages = {033023},
  year = {2026},
  month = aug,
  doi = {10.1103/yfzm-5rr6},
  url = {https://doi.org/10.1103/yfzm-5rr6}
}

@article{garcia_cirac_trivedi_pauli,
  title = {Pauli Path Simulations of Noisy Quantum Circuits beyond Average Case},
  author = {Gonz{\'a}lez-Garc{\'i}a, Guillermo and Cirac, J. Ignacio and Trivedi, Rahul},
  journal = {Quantum},
  volume = {9},
  pages = {1730},
  year = {2025},
  month = may,
  doi = {10.22331/q-2025-05-05-1730},
  url = {https://doi.org/10.22331/q-2025-05-05-1730}
}

@article{schuster_polynomial,
  title = {A Polynomial-Time Classical Algorithm for Noisy Quantum Circuits},
  author = {Schuster, Thomas and Yin, Chao and Gao, Xun and Yao, Norman Y.},
  journal = {Physical Review X},
  volume = {15},
  number = {4},
  pages = {041018},
  year = {2025},
  month = nov,
  doi = {10.1103/xct1-7kf2},
  url = {https://doi.org/10.1103/xct1-7kf2}
}

@article{angrisani_schmidhuber_rudolph_cerezo_holmes_huang,
  title = {Classically Estimating Observables of Noiseless Quantum Circuits},
  author = {Angrisani, Armando and Schmidhuber, Alexander and Rudolph, Manuel S. and Cerezo, M. and Holmes, Zo{\"e} and Huang, Hsin-Yuan},
  journal = {Physical Review Letters},
  volume = {135},
  number = {17},
  pages = {170602},
  year = {2025},
  month = oct,
  doi = {10.1103/lh6x-7rc3},
  url = {https://doi.org/10.1103/lh6x-7rc3}
}

@article{angrisani_mele_simulating,
  title = {Simulating Quantum Circuits with Arbitrary Local Noise Using Pauli Propagation},
  author = {Angrisani, Armando and Mele, Antonio A. and Rudolph, Manuel S. and Cerezo, M. and Holmes, Zo{\"e}},
  journal = {PRX Quantum},
  volume = {7},
  number = {2},
  pages = {020313},
  year = {2026},
  month = apr,
  doi = {10.1103/fb28-wlv2},
  url = {https://doi.org/10.1103/fb28-wlv2}
}

@article{mele_noise-induced,
  title = {Noise-Induced Shallow Circuits and the Absence of Barren Plateaus},
  author = {Mele, Antonio Anna and Angrisani, Armando and Ghosh, Soumik and Khatri, Sumeet and Eisert, Jens and Stilck Fran{\c{c}}a, Daniel and Quek, Yihui},
  journal = {Nature Physics},
  volume = {22},
  number = {5},
  pages = {751--756},
  year = {2026},
  doi = {10.1038/s41567-026-03245-z},
  url = {https://doi.org/10.1038/s41567-026-03245-z}
}

@article{shtanko_sharma_nonunital_1D,
  title = {Complexity of Local Quantum Circuits under Nonunital Noise},
  author = {Shtanko, Oles and Sharma, Kunal},
  journal = {PRX Quantum},
  volume = {6},
  number = {3},
  pages = {030347},
  year = {2025},
  month = sep,
  doi = {10.1103/xwz2-wfr4},
  url = {https://doi.org/10.1103/xwz2-wfr4}
}

@article{quantinuum_arbitrary_geometries,
  title = {The Computational Power of Random Quantum Circuits in Arbitrary Geometries},
  author = {DeCross, Matthew and Haghshenas, Reza and Liu, Minzhao and Rinaldi, Enrico and Gray, Johnnie and Alexeev, Yuri and Baldwin, Charles H. and Bartolotta, John P. and Bohn, Matthew and Chertkov, Eli and Cline, Julia and Colina, Jonhas and DelVento, Davide and Dreiling, Joan M. and Foltz, Cameron and Gaebler, John P. and Gatterman, Thomas M. and Gilbreth, Christopher N. and Giles, Joshua and Gresh, Dan and Hall, Alex and Hankin, Aaron and Hansen, Azure and Hewitt, Nathan and Hoffman, Ian and Holliman, Craig and Hutson, Ross B. and Jacobs, Trent and Johansen, Jacob and Lee, Patricia J. and Lehman, Elliot and Lucchetti, Dominic and Lykov, Danylo and Madjarov, Ivaylo S. and Mathewson, Brian and Mayer, Karl and Mills, Michael and Niroula, Pradeep and Pino, Juan M. and Roman, Conrad and Schecter, Michael and Siegfried, Peter E. and Tiemann, Bruce G. and Volin, Curtis and Walker, James and Shaydulin, Ruslan and Pistoia, Marco and Moses, Steven A. and Hayes, David and Neyenhuis, Brian and Stutz, Russell P. and Foss-Feig, Michael},
  journal = {Physical Review X},
  volume = {15},
  number = {2},
  pages = {021052},
  year = {2025},
  month = may,
  doi = {10.1103/PhysRevX.15.021052},
  url = {https://doi.org/10.1103/PhysRevX.15.021052}
}

@article{rajakumar_watson,
   title={Gibbs Sampling gives Quantum Advantage at Constant Temperatures with O(1)-Local Hamiltonians},
   volume={10},
   ISSN={2521-327X},
   url={http://dx.doi.org/10.22331/q-2026-01-22-1981},
   DOI={10.22331/q-2026-01-22-1981},
   journal={Quantum},
   publisher={Verein zur Forderung des Open Access Publizierens in den Quantenwissenschaften},
   author={Rajakumar, Joel and Watson, James D.},
   year={2026},
   month=Jan, pages={1981} }

@inbook{nelson_rajakumar_naturally,
author = {Jon Nelson and Joel Rajakumar and Dominik Hangleiter and Michael J. Gullans},
title = {Polynomial-Time Classical Simulation of Noisy Quantum Circuits with Naturally Fault-Tolerant Gates},
booktitle = {Proceedings of the 2026 Annual ACM-SIAM Symposium on Discrete Algorithms (SODA)},
chapter = {},
pages = {1309-1331},
doi = {10.1137/1.9781611978971.50},
URL = {https://epubs.siam.org/doi/abs/10.1137/1.9781611978971.50},
eprint = {https://epubs.siam.org/doi/pdf/10.1137/1.9781611978971.50}
}

@Article{chiu_continuous_operation_3000-qubit_system,
author={Chiu, Neng-Chun
and Trapp, Elias C.
and Guo, Jinen
and Abobeih, Mohamed H.
and Stewart, Luke M.
and Hollerith, Simon
and Stroganov, Pavel L.
and Kalinowski, Marcin
and Geim, Alexandra A.
and Evered, Simon J.
and Li, Sophie H.
and Lyu, Xingjian
and Peters, Lisa M.
and Bluvstein, Dolev
and Wang, Tout T.
and Greiner, Markus
and Vuleti{\'{c}}, Vladan
and Lukin, Mikhail D.},
title={Continuous operation of a coherent 3,000-qubit system},
journal={Nature},
year={2025},
month={Oct},
day={01},
volume={646},
number={8087},
pages={1075-1080},
issn={1476-4687},
doi={10.1038/s41586-025-09596-6},
url={https://doi.org/10.1038/s41586-025-09596-6}
}

@article{shor_factoring_1997,
  author  = {Shor, Peter W.},
  title   = {Polynomial-Time Algorithms for Prime Factorization and Discrete Logarithms on a Quantum Computer},
  journal = {SIAM Journal on Computing},
  volume  = {26},
  number  = {5},
  pages   = {1484--1509},
  year    = {1997},
  doi     = {10.1137/S0097539795293172}
}

@article{lloyd_quantum_simulators_1996,
  author  = {Lloyd, Seth},
  title   = {Universal Quantum Simulators},
  journal = {Science},
  volume  = {273},
  number  = {5278},
  pages   = {1073--1078},
  year    = {1996},
  doi     = {10.1126/science.273.5278.1073}
}

@article{knill_noisy_devices_2005,
  author  = {Knill, Emanuel},
  title   = {Quantum computing with realistically noisy devices},
  journal = {Nature},
  volume  = {434},
  number  = {7029},
  pages   = {39--44},
  year    = {2005},
  doi     = {10.1038/nature03350}
}

@article{geher_reset_2025,
  author  = {Geh{\'e}r, Gy{\"o}rgy P. and Jastrzebski, Marcin and Campbell, Earl T. and Crawford, Ophelia},
  title   = {To reset, or not to reset---that is the question},
  journal = {npj Quantum Information},
  volume  = {11},
  pages   = {39},
  year    = {2025},
  doi     = {10.1038/s41534-025-00998-y}
}

@article{decross_qubit_reuse_2023,
  author  = {DeCross, Matthew and Chertkov, Eli and Kohagen, Megan and Foss-Feig, Michael},
  title   = {Qubit-Reuse Compilation with Mid-Circuit Measurement and Reset},
  journal = {Physical Review X},
  volume  = {13},
  pages   = {041057},
  year    = {2023},
  doi     = {10.1103/PhysRevX.13.041057}
}

@article{mann_efficient_algorithms_approximating,
  title={Efficient algorithms for approximating quantum partition functions},
  author={Mann, Ryan L and Helmuth, Tyler},
  journal={Journal of Mathematical Physics},
  volume={62},
  number={2},
  year={2021},
  publisher={AIP Publishing},
  doi={https://doi.org/10.22331/q-2023-10-25-1155}
}

@article{yin_polynomial-time_classical_sampling,
  title={Polynomial-time classical sampling of high-temperature quantum Gibbs states},
  author={Yin, Chao and Lucas, Andrew},
  journal={arXiv preprint arXiv:2305.18514},
  year={2023},
  doi={https://doi.org/10.48550/arXiv.2305.18514}
}

@inproceedings{harrow_classical_algorithms_correlation,
  title={Classical algorithms, correlation decay, and complex zeros of partition functions of quantum many-body systems},
  author={Harrow, Aram W and Mehraban, Saeed and Soleimanifar, Mehdi},
  booktitle={Proceedings of the 52nd Annual ACM SIGACT Symposium on Theory of Computing},
  pages={378--386},
  year={2020},
  doi={https://doi.org/10.1145/3357713.3384322}
}

@article{nonunital_fefferman_ghosh_soumik_gullans,
  title = {Effect of Nonunital Noise on Random-Circuit Sampling},
  author = {Fefferman, Bill and Ghosh, Soumik and Gullans, Michael and Kuroiwa, Kohdai and Sharma, Kunal},
  journal = {PRX Quantum},
  volume = {5},
  issue = {3},
  pages = {030317},
  numpages = {41},
  year = {2024},
  month = {Jul},
  publisher = {American Physical Society},
  doi = {10.1103/PRXQuantum.5.030317},
  url = {https://link.aps.org/doi/10.1103/PRXQuantum.5.030317}
}

@article{raussendorf_bravyi_harrington_cluster_states,
  title = {Long-range quantum entanglement in noisy cluster states},
  author = {Raussendorf, Robert and Bravyi, Sergey and Harrington, Jim},
  journal = {Phys. Rev. A},
  volume = {71},
  issue = {6},
  pages = {062313},
  numpages = {6},
  year = {2005},
  month = {Jun},
  publisher = {American Physical Society},
  doi = {10.1103/PhysRevA.71.062313},
  url = {https://link.aps.org/doi/10.1103/PhysRevA.71.062313}
}

@article{fontana,
	Author = {Fontana, Enrico and Rudolph, Manuel S. and Duncan, Ross and Rungger, Ivan and C{\^\i}rstoiu, Cristina},
	Journal = {npj Quantum Information},
	Number = {1},
	Pages = {84},
	Title = {Classical simulations of noisy variational quantum circuits},
	Volume = {11},
	Year = {2025}}

@article{watts_gosset_liu_soleimanifar,
  title = {Quantum Advantage from Measurement-Induced Entanglement in Random Shallow Circuits},
  author = {Bene Watts, Adam and Gosset, David and Liu, Yinchen and Soleimanifar, Mehdi},
  journal = {PRX Quantum},
  volume = {6},
  issue = {1},
  pages = {010356},
  numpages = {25},
  year = {2025},
  month = {Mar},
  publisher = {American Physical Society},
  doi = {10.1103/PRXQuantum.6.010356},
  url = {https://link.aps.org/doi/10.1103/PRXQuantum.6.010356}
}

@article{hangleiter_fault-tolerant,
  title = {Fault-Tolerant Compiling of Classically Hard Instantaneous Quantum Polynomial Circuits on Hypercubes},
  author = {Hangleiter, Dominik and Kalinowski, Marcin and Bluvstein, Dolev and Cain, Madelyn and Maskara, Nishad and Gao, Xun and Kubica, Aleksander and Lukin, Mikhail D. and Gullans, Michael J.},
  journal = {PRX Quantum},
  volume = {6},
  issue = {2},
  pages = {020338},
  numpages = {50},
  year = {2025},
  month = {May},
  publisher = {American Physical Society},
  doi = {10.1103/PRXQuantum.6.020338},
  url = {https://link.aps.org/doi/10.1103/PRXQuantum.6.020338}
}

@article{bravyi_gosset_koenig_tomamichel,
   title={Quantum advantage with noisy shallow circuits},
   volume={16},
   ISSN={1745-2481},
   url={http://dx.doi.org/10.1038/s41567-020-0948-z},
   DOI={10.1038/s41567-020-0948-z},
   number={10},
   journal={Nature Physics},
   publisher={Springer Science and Business Media LLC},
   author={Bravyi, Sergey and Gosset, David and König, Robert and Tomamichel, Marco},
   year={2020},
   month=jul, pages={1040–1045} }

@InProceedings{bergamaschi_liu_single_shot,
  author =	{Bergamaschi, Thiago and Liu, Yunchao},
  title =	{{On Fault Tolerant Single-Shot Logical State Preparation and Robust Long-Range Entanglement}},
  booktitle =	{16th Innovations in Theoretical Computer Science Conference (ITCS 2025)},
  pages =	{16:1--16:9},
  series =	{Leibniz International Proceedings in Informatics (LIPIcs)},
  ISBN =	{978-3-95977-361-4},
  ISSN =	{1868-8969},
  year =	{2025},
  volume =	{325},
  editor =	{Meka, Raghu},
  publisher =	{Schloss Dagstuhl -- Leibniz-Zentrum f{\"u}r Informatik},
  address =	{Dagstuhl, Germany},
  URN =		{urn:nbn:de:0030-drops-226444},
  doi =		{10.4230/LIPIcs.ITCS.2025.16}
}

@article{martinez_simulation,
  title = {Efficient Simulation of Parametrized Quantum Circuits under Nonunital Noise through Pauli Backpropagation},
  author = {Martinez, Victor and Angrisani, Armando and Pankovets, Ekaterina and Fawzi, Omar and Stilck Fran{\c c}a, Daniel},
  journal = {Phys. Rev. Lett.},
  volume = {134},
  issue = {25},
  pages = {250602},
  numpages = {7},
  year = {2025},
  month = {Jun},
  publisher = {American Physical Society},
  doi = {10.1103/j1gg-s6zb},
  url = {https://link.aps.org/doi/10.1103/j1gg-s6zb}
}

@article{oh_classical_simulability_linear_optical,
	Author = {Oh, Changhun},
	Da = {2025/07/29},
	Doi = {10.1038/s41534-025-01041-w},
	Id = {Oh2025},
	Isbn = {2056-6387},
	Journal = {npj Quantum Information},
	Number = {1},
	Pages = {126},
	Title = {Classical simulability of constant-depth linear-optical circuits with noise},
	Ty = {JOUR},
	Url = {https://doi.org/10.1038/s41534-025-01041-w},
	Volume = {11},
	Year = {2025}}

@Article{ibm_evidence_for_utility,
author={Kim, Youngseok
and Eddins, Andrew
and Anand, Sajant
and Wei, Ken Xuan
and van den Berg, Ewout
and Rosenblatt, Sami
and Nayfeh, Hasan
and Wu, Yantao
and Zaletel, Michael
and Temme, Kristan
and Kandala, Abhinav},
title={Evidence for the utility of quantum computing before fault tolerance},
journal={Nature},
year={2023},
month={Jun},
day={01},
volume={618},
number={7965},
pages={500-505},
issn={1476-4687},
doi={10.1038/s41586-023-06096-3},
url={https://doi.org/10.1038/s41586-023-06096-3}
}

@Article{google_threshold,
author= {Acharya, Rajeev and Abanin, Dmitry A.
  and Aghababaie-Beni, Laleh and Aleiner, Igor
  and Andersen, Trond I. and others},
title={Quantum error correction below the surface code threshold},
journal={Nature},
year={2025},
month={Feb},
day={01},
volume={638},
number={8052},
pages={920-926},
issn={1476-4687},
doi={10.1038/s41586-024-08449-y},
url={https://doi.org/10.1038/s41586-024-08449-y}
}

@Article{google_phase_transitions,
author={Morvan, A.
and Villalonga, B.
and Mi, X.
and Mandr{\`a}, S.
and Bengtsson, A.
and Klimov, P. V.
and Chen, Z.
and Hong, S.
and Erickson, C.
and Drozdov, I. K.
and Chau, J.
and Laun, G.
and Movassagh, R.
and Asfaw, A.
and Brand{\~a}o, L. T. A. N.
and Peralta, R.
and Abanin, D.
and Acharya, R.
and Allen, R.
and Andersen, T. I.
and Anderson, K.
and Ansmann, M.
and Arute, F.
and Arya, K.
and Atalaya, J.
and Bardin, J. C.
and Bilmes, A.
and Bortoli, G.
and Bourassa, A.
and Bovaird, J.
and Brill, L.
and Broughton, M.
and Buckley, B. B.
and Buell, D. A.
and Burger, T.
and Burkett, B.
and Bushnell, N.
and Campero, J.
and Chang, H.-S.
and Chiaro, B.
and Chik, D.
and Chou, C.
and Cogan, J.
and Collins, R.
and Conner, P.
and Courtney, W.
and Crook, A. L.
and Curtin, B.
and Debroy, D. M.
and Barba, A. Del Toro
and Demura, S.
and Paolo, A. Di
and Dunsworth, A.
and Faoro, L.
and Farhi, E.
and Fatemi, R.
and Ferreira, V. S.
and Burgos, L. Flores
and Forati, E.
and Fowler, A. G.
and Foxen, B.
and Garcia, G.
and Genois, {\'E}
and Giang, W.
and Gidney, C.
and Gilboa, D.
and Giustina, M.
and Gosula, R.
and Dau, A. Grajales
and Gross, J. A.
and Habegger, S.
and Hamilton, M. C.
and Hansen, M.
and Harrigan, M. P.
and Harrington, S. D.
and Heu, P.
and Hoffmann, M. R.
and Huang, T.
and Huff, A.
and Huggins, W. J.
and Ioffe, L. B.
and Isakov, S. V.
and Iveland, J.
and Jeffrey, E.
and Jiang, Z.
and Jones, C.
and Juhas, P.
and Kafri, D.
and Khattar, T.
and Khezri, M.
and Kieferov{\'a}, M.
and Kim, S.
and Kitaev, A.
and Klots, A. R.
and Korotkov, A. N.
and Kostritsa, F.
and Kreikebaum, J. M.
and Landhuis, D.
and Laptev, P.
and Lau, K.-M.
and Laws, L.
and Lee, J.
and Lee, K. W.
and Lensky, Y. D.
and Lester, B. J.
and Lill, A. T.
and Liu, W.
and Livingston, W. P.
and Locharla, A.
and Malone, F. D.
and Martin, O.
and Martin, S.
and McClean, J. R.
and McEwen, M.
and Miao, K. C.
and Mieszala, A.
and Montazeri, S.
and Mruczkiewicz, W.
and Naaman, O.
and Neeley, M.
and Neill, C.
and Nersisyan, A.
and Newman, M.
and Ng, J. H.
and Nguyen, A.
and Nguyen, M.
and Niu, M. Yuezhen
and O'Brien, T. E.
and Omonije, S.
and Opremcak, A.
and Petukhov, A.
and Potter, R.
and Pryadko, L. P.
and Quintana, C.
and Rhodes, D. M.
and Rocque, C.
and Rosenberg, E.
and Rubin, N. C.
and Saei, N.
and Sank, D.
and Sankaragomathi, K.
and Satzinger, K. J.
and Schurkus, H. F.
and Schuster, C.
and Shearn, M. J.
and Shorter, A.
and Shutty, N.
and Shvarts, V.
and Sivak, V.
and Skruzny, J.
and Smith, W. C.
and Somma, R. D.
and Sterling, G.
and Strain, D.
and Szalay, M.
and Thor, D.
and Torres, A.
and Vidal, G.
and Heidweiller, C. Vollgraff
and White, T.
and Woo, B. W. K.
and Xing, C.
and Yao, Z. J.
and Yeh, P.
and Yoo, J.
and Young, G.
and Zalcman, A.
and Zhang, Y.
and Zhu, N.
and Zobrist, N.
and Rieffel, E. G.
and Biswas, R.
and Babbush, R.
and Bacon, D.
and Hilton, J.
and Lucero, E.
and Neven, H.
and Megrant, A.
and Kelly, J.
and Roushan, P.
and Aleiner, I.
and Smelyanskiy, V.
and Kechedzhi, K.
and Chen, Y.
and Boixo, S.},
title={Phase transitions in random circuit sampling},
journal={Nature},
year={2024},
month={Oct},
day={01},
volume={634},
number={8033},
pages={328-333},
issn={1476-4687},
doi={10.1038/s41586-024-07998-6},
url={https://doi.org/10.1038/s41586-024-07998-6}
}

@article{seddon_magic_monotones,
  title = {Quantifying Quantum Speedups: Improved Classical Simulation From Tighter Magic Monotones},
  author = {Seddon, James R. and Regula, Bartosz and Pashayan, Hakop and Ouyang, Yingkai and Campbell, Earl T.},
  journal = {PRX Quantum},
  volume = {2},
  issue = {1},
  pages = {010345},
  numpages = {42},
  year = {2021},
  month = {Mar},
  publisher = {American Physical Society},
  doi = {10.1103/PRXQuantum.2.010345},
  url = {https://link.aps.org/doi/10.1103/PRXQuantum.2.010345}
}

@article{bouland_fefferman_nirkhe_vazirani_complexity_rcs,
	Author = {Bouland, Adam and Fefferman, Bill and Nirkhe, Chinmay and Vazirani, Umesh},
	Journal = {Nature Physics},
	Number = {2},
	Pages = {159--163},
	Title = {On the complexity and verification of quantum random circuit sampling},
	Volume = {15},
	Year = {2019}}

@article{terhal_divincenzo,
    author = "Terhal, Barbara M. and DiVincenzo, David P.",
    title = "{Adaptive Quantum Computation, Constant Depth Quantum Circuits and Arthur-Merlin Games}",
    eprint = "quant-ph/0205133",
    archivePrefix = "arXiv",
    doi = "10.26421/QIC4.2-5",
    journal = "Quant. Inf. Comput.",
    volume = "4",
    number = "2",
    pages = "134--145",
    year = "2004"
}

@article{de-palma_marvian_rouze_stilck-franca_optimal_transport,
  title = {Limitations of Variational Quantum Algorithms: A Quantum Optimal Transport Approach},
  author = {De Palma, Giacomo and Marvian, Milad and Rouz\'e, Cambyse and Fran{\c c}a, Daniel Stilck},
  journal = {PRX Quantum},
  volume = {4},
  issue = {1},
  pages = {010309},
  numpages = {30},
  year = {2023},
  month = {Jan},
  publisher = {American Physical Society},
  doi = {10.1103/PRXQuantum.4.010309},
  url = {https://link.aps.org/doi/10.1103/PRXQuantum.4.010309}
}

@INPROCEEDINGS{bakshi_high-temperature,
  author={Bakshi, Ainesh and Liu, Allen and Moitra, Ankur and Tang, Ewin},
  booktitle={2024 IEEE 65th Annual Symposium on Foundations of Computer Science (FOCS)}, 
  title={High-Temperature Gibbs States are Unentangled and Efficiently Preparable}, 
  year={2024},
  volume={},
  number={},
  pages={1027-1036},
  doi={10.1109/FOCS61266.2024.00068}}

@inbook{rajakumar_watson_liu,
author = {Joel Rajakumar and James D. Watson and Yi-Kai Liu},
title = {Polynomial-Time Classical Simulation of Noisy IQP Circuits with Constant Depth},
booktitle = {Proceedings of the 2025 Annual ACM-SIAM Symposium on Discrete Algorithms (SODA)},
chapter = {},
pages = {1037-1056},
doi = {10.1137/1.9781611978322.30},
URL = {https://epubs.siam.org/doi/abs/10.1137/1.9781611978322.30},
eprint = {https://epubs.siam.org/doi/pdf/10.1137/1.9781611978322.30},
}

@article{bluvstein_logical,
	Author = {Bluvstein, Dolev and Evered, Simon J. and Geim, Alexandra A. and Li, Sophie H. and Zhou, Hengyun and Manovitz, Tom and Ebadi, Sepehr and Cain, Madelyn and Kalinowski, Marcin and Hangleiter, Dominik and Bonilla Ataides, J. Pablo and Maskara, Nishad and Cong, Iris and Gao, Xun and Sales Rodriguez, Pedro and Karolyshyn, Thomas and Semeghini, Giulia and Gullans, Michael J. and Greiner, Markus and Vuleti{\'c}, Vladan and Lukin, Mikhail D.},
	Journal = {Nature},
	Number = {7997},
	Pages = {58--65},
	Title = {Logical quantum processor based on reconfigurable atom arrays},
	Volume = {626},
	Year = {2024}}

@article{bremner_jozsa_shepherd,
author = {Bremner, Michael J.  and Jozsa, Richard  and Shepherd, Dan J. },
title = {Classical simulation of commuting quantum computations implies collapse of the polynomial hierarchy},
journal = {Proceedings of the Royal Society A: Mathematical, Physical and Engineering Sciences},
volume = {467},
number = {2126},
pages = {459-472},
year = {2011},
doi = {10.1098/rspa.2010.0301},

URL = {https://royalsocietypublishing.org/doi/abs/10.1098/rspa.2010.0301},
eprint = {https://royalsocietypublishing.org/doi/pdf/10.1098/rspa.2010.0301}
,
}

@article{hangleiter_anticoncentration,
  doi = {10.22331/q-2018-05-22-65},
  url = {https://doi.org/10.22331/q-2018-05-22-65},
  title = {Anticoncentration theorems for schemes showing a quantum speedup},
  author = {Hangleiter, Dominik and Bermejo-Vega, Juan and Schwarz, Martin and Eisert, Jens},
  journal = {{Quantum}},
  issn = {2521-327X},
  publisher = {{Verein zur F{\"{o}}rderung des Open Access Publizierens in den Quantenwissenschaften}},
  volume = {2},
  pages = {65},
  month = may,
  year = {2018}
}

@article{bermejo-vega_hangleiter_architectures,
  title = {Architectures for Quantum Simulation Showing a Quantum Speedup},
  author = {Bermejo-Vega, Juan and Hangleiter, Dominik and Schwarz, Martin and Raussendorf, Robert and Eisert, Jens},
  journal = {Phys. Rev. X},
  volume = {8},
  issue = {2},
  pages = {021010},
  numpages = {22},
  year = {2018},
  month = {Apr},
  publisher = {American Physical Society},
  doi = {10.1103/PhysRevX.8.021010},
  url = {https://link.aps.org/doi/10.1103/PhysRevX.8.021010}
}

@INPROCEEDINGS {bergamaschi_chen_liu,
author = { Bergamaschi, Thiago and Chen, Chi-Fang and Liu, Yunchao },
booktitle = { 2024 IEEE 65th Annual Symposium on Foundations of Computer Science (FOCS) },
title = {{ Quantum Computational Advantage with Constant-Temperature Gibbs Sampling }},
year = {2024},
volume = {},
ISSN = {},
pages = {1063-1085},
doi = {10.1109/FOCS61266.2024.00071},
url = {https://doi.ieeecomputersociety.org/10.1109/FOCS61266.2024.00071},
publisher = {IEEE Computer Society},
address = {Los Alamitos, CA, USA},
month =Oct}

@inproceedings{aaronson_arkhipov,
author = {Aaronson, Scott and Arkhipov, Alex},
title = {The computational complexity of linear optics},
year = {2011},
isbn = {9781450306911},
publisher = {Association for Computing Machinery},
address = {New York, NY, USA},
url = {https://doi.org/10.1145/1993636.1993682},
doi = {10.1145/1993636.1993682},
booktitle = {Proceedings of the Forty-Third Annual ACM Symposium on Theory of Computing},
pages = {333–342},
numpages = {10},
location = {San Jose, California, USA},
series = {STOC '11}
}

@misc{ben-or_gottesman_refrigerator,
      title={Quantum Refrigerator}, 
      author={Michael Ben-Or and Daniel Gottesman and Avinatan Hassidim},
      year={2013},
      eprint={1301.1995},
      archivePrefix={arXiv},
      primaryClass={quant-ph},
      url={https://arxiv.org/abs/1301.1995}, 
}

@misc{fujii_noise_threshold,
      title={Noise Threshold of Quantum Supremacy}, 
      author={Keisuke Fujii},
      year={2016},
      eprint={1610.03632},
      archivePrefix={arXiv},
      primaryClass={quant-ph},
      url={https://arxiv.org/abs/1610.03632}, 
}

@INPROCEEDINGS {bouland_fefferman_landau_liu_noise,
author = { Bouland, Adam and Fefferman, Bill and Landau, Zeph and Liu, Yunchao },
booktitle = { 2021 IEEE 62nd Annual Symposium on Foundations of Computer Science (FOCS) },
title = {{ Noise and the Frontier of Quantum Supremacy }},
year = {2022},
volume = {},
ISSN = {},
pages = {1308-1317},
doi = {10.1109/FOCS52979.2021.00127},
url = {https://doi.ieeecomputersociety.org/10.1109/FOCS52979.2021.00127},
publisher = {IEEE Computer Society},
address = {Los Alamitos, CA, USA},
month =Feb}

@phdthesis{yunchao_liu_thesis,
    Author= {Liu, Yunchao},
    Title= {Shallow Quantum Circuits: Algorithms, Complexity, and Fault Tolerance},
    School= {EECS Department, University of California, Berkeley},
    Year= {2024},
    Month= {Aug},
    Url= {http://www2.eecs.berkeley.edu/Pubs/TechRpts/2024/EECS-2024-176.html},
    Number= {UCB/EECS-2024-176},
}

@article{muller-hermes_stilck-franca_wolf_relative_entropy_convergence,
    title={Relative entropy convergence for depolarizing channels},
   volume={57},
   ISSN={1089-7658},
   url={http://dx.doi.org/10.1063/1.4939560},
   DOI={10.1063/1.4939560},
   number={2},
   journal={Journal of Mathematical Physics},
   publisher={AIP Publishing},
   author={Müller-Hermes, Alexander and Stilck França, Daniel and Wolf, Michael M.},
   year={2016},
   month=jan
}

@article{razborov_upper_bounds,
author = {Razborov, Alexander A.},
title = {An upper bound on the threshold quantum decoherence rate},
year = {2004},
issue_date = {May 2004},
publisher = {Rinton Press, Incorporated},
address = {Paramus, NJ},
volume = {4},
number = {3},
issn = {1533-7146},
journal = {Quantum Info. Comput.},
month = may,
pages = {222–228},
numpages = {7}
}

@INPROCEEDINGS{burhman_upper_bounds,
  author={Buhrman, Harry and Cleve, Richard and Laurent, Monique and Linden, Noah and Schrijver, Alexander and Unger, Falk},
  booktitle={2006 47th Annual IEEE Symposium on Foundations of Computer Science (FOCS'06)}, 
  title={New Limits on Fault-Tolerant Quantum Computation}, 
  year={2006},
  volume={},
  number={},
  pages={411-419},
  doi={10.1109/FOCS.2006.50}}

@InProceedings{kempe_upper_bounds,
author="Kempe, Julia
and Regev, Oded
and Unger, Falk
and de Wolf, Ronald",
editor="Aceto, Luca
and Damg{\aa}rd, Ivan
and Goldberg, Leslie Ann
and Halld{\'o}rsson, Magn{\'u}s M.
and Ing{\'o}lfsd{\'o}ttir, Anna
and Walukiewicz, Igor",
title="Upper Bounds on the Noise Threshold for Fault-Tolerant Quantum Computing",
booktitle="Automata, Languages and Programming",
year="2008",
publisher="Springer Berlin Heidelberg",
address="Berlin, Heidelberg",
pages="845--856",
}

@article{plenio_upper_bounds,
doi = {10.1088/1367-2630/12/3/033012},
url = {https://dx.doi.org/10.1088/1367-2630/12/3/033012},
year = {2010},
month = {mar},
publisher = {},
volume = {12},
number = {3},
pages = {033012},
author = {Plenio, M B and Virmani, S},
title = {Upper bounds on fault tolerance thresholds of noisy Clifford-based quantum computers},
journal = {New Journal of Physics}
}

@InProceedings{takahashi_ct-ecs,
  author =	{Takahashi, Yasuhiro and Takeuchi, Yuki and Tani, Seiichiro},
  title =	{{Classically Simulating Quantum Circuits with Local Depolarizing Noise}},
  booktitle =	{45th International Symposium on Mathematical Foundations of Computer Science (MFCS 2020)},
  pages =	{83:1--83:13},
  series =	{Leibniz International Proceedings in Informatics (LIPIcs)},
  ISBN =	{978-3-95977-159-7},
  ISSN =	{1868-8969},
  year =	{2020},
  volume =	{170},
  editor =	{Esparza, Javier and Kr\'{a}l', Daniel},
  publisher =	{Schloss Dagstuhl -- Leibniz-Zentrum f{\"u}r Informatik},
  address =	{Dagstuhl, Germany},
  URL =		{https://drops.dagstuhl.de/entities/document/10.4230/LIPIcs.MFCS.2020.83},
  URN =		{urn:nbn:de:0030-drops-127533},
  doi =		{10.4230/LIPIcs.MFCS.2020.83}
}

@article{stilck-franca_garcia-patron,
	Author = {Stilck Fran{\c c}a, Daniel and Garc{\'\i}a-Patr{\'o}n, Raul},
	Da = {2021/11/01},
	Doi = {10.1038/s41567-021-01356-3},
	Id = {Stilck Fran{\c c}a2021},
	Isbn = {1745-2481},
	Journal = {Nature Physics},
	Number = {11},
	Pages = {1221--1227},
	Title = {Limitations of optimization algorithms on noisy quantum devices},
	Ty = {JOUR},
	Url = {https://doi.org/10.1038/s41567-021-01356-3},
	Volume = {17},
	Year = {2021}}

@article{napp,
  title = {Efficient Classical Simulation of Random Shallow 2D Quantum Circuits},
  author = {Napp, John C. and La Placa, Rolando L. and Dalzell, Alexander M. and Brand\~ao, Fernando G. S. L. and Harrow, Aram W.},
  journal = {Phys. Rev. X},
  volume = {12},
  issue = {2},
  pages = {021021},
  numpages = {32},
  year = {2022},
  month = {Apr},
  publisher = {American Physical Society},
  doi = {10.1103/PhysRevX.12.021021},
  url = {https://link.aps.org/doi/10.1103/PhysRevX.12.021021}
}

@article{cheng_ippoliti,
  title = {Efficient Sampling of Noisy Shallow Circuits Via Monitored Unraveling},
  author = {Cheng, Zihan and Ippoliti, Matteo},
  journal = {PRX Quantum},
  volume = {4},
  issue = {4},
  pages = {040326},
  numpages = {22},
  year = {2023},
  month = {Nov},
  publisher = {American Physical Society},
  doi = {10.1103/PRXQuantum.4.040326},
  url = {https://link.aps.org/doi/10.1103/PRXQuantum.4.040326}
}

@misc{gao_duan,
      title={Efficient classical simulation of noisy quantum computation}, 
      author={Xun Gao and Luming Duan},
      year={2018},
      eprint={1810.03176},
      archivePrefix={arXiv},
      primaryClass={quant-ph},
      url={https://arxiv.org/abs/1810.03176}, 
}

@article{bremner_achieving,
  doi = {10.22331/q-2017-04-25-8},
  url = {https://doi.org/10.22331/q-2017-04-25-8},
  title = {Achieving quantum supremacy with sparse and noisy commuting quantum computations},
  author = {Bremner, Michael J. and Montanaro, Ashley and Shepherd, Dan J.},
  journal = {{Quantum}},
  issn = {2521-327X},
  publisher = {{Verein zur F{\"{o}}rderung des Open Access Publizierens in den Quantenwissenschaften}},
  volume = {1},
  pages = {8},
  month = apr,
  year = {2017}
}

@article{bremner_average,
  title = {Average-Case Complexity Versus Approximate Simulation of Commuting Quantum Computations},
  author = {Bremner, Michael J. and Montanaro, Ashley and Shepherd, Dan J.},
  journal = {Phys. Rev. Lett.},
  volume = {117},
  issue = {8},
  pages = {080501},
  numpages = {5},
  year = {2016},
  month = {Aug},
  publisher = {American Physical Society},
  doi = {10.1103/PhysRevLett.117.080501},
  url = {https://link.aps.org/doi/10.1103/PhysRevLett.117.080501}
}

@article{fujii_tamate,
	Author = {Fujii, Keisuke and Tamate, Shuhei},
	Da = {2016/05/18},
	Doi = {10.1038/srep25598},
	Id = {Fujii2016},
	Isbn = {2045-2322},
	Journal = {Scientific Reports},
	Number = {1},
	Pages = {25598},
	Title = {Computational quantum-classical boundary of noisy commuting quantum circuits},
	Ty = {JOUR},
	Url = {https://doi.org/10.1038/srep25598},
	Volume = {6},
	Year = {2016}}

@INPROCEEDINGS{aharonov_polynomial_1996,
  author={Aharonov, D. and Ben-Or, M.},
  booktitle={Proceedings of 37th Conference on Foundations of Computer Science}, 
  title={Polynomial simulations of decohered quantum computers}, 
  year={1996},
  volume={},
  number={},
  pages={46-55},
  doi={10.1109/SFCS.1996.548463}}

@article{aharonov_quantum_to_classical,
  title = {Quantum to classical phase transition in noisy quantum computers},
  author = {Aharonov, Dorit},
  journal = {Phys. Rev. A},
  volume = {62},
  issue = {6},
  pages = {062311},
  numpages = {13},
  year = {2000},
  month = {Nov},
  publisher = {American Physical Society},
  doi = {10.1103/PhysRevA.62.062311},
  url = {https://link.aps.org/doi/10.1103/PhysRevA.62.062311}
}

@inproceedings{aharonov_polynomial_2023,
author = {Aharonov, Dorit and Gao, Xun and Landau, Zeph and Liu, Yunchao and Vazirani, Umesh},
title = {A Polynomial-Time Classical Algorithm for Noisy Random Circuit Sampling},
year = {2023},
isbn = {9781450399135},
publisher = {Association for Computing Machinery},
address = {New York, NY, USA},
url = {https://doi.org/10.1145/3564246.3585234},
doi = {10.1145/3564246.3585234},
booktitle = {Proceedings of the 55th Annual ACM Symposium on Theory of Computing},
pages = {945–957},
numpages = {13},
location = {Orlando, FL, USA},
series = {STOC 2023}
}

@misc{aharonov_limitations,
      title={Limitations of Noisy Reversible Computation}, 
      author={D. Aharonov and M. Ben-Or and R. Impagliazzo and N. Nisan},
      year={1996},
      eprint={quant-ph/9611028},
      archivePrefix={arXiv},
      primaryClass={quant-ph},
      url={https://arxiv.org/abs/quant-ph/9611028}, 
}

@article{aharonov_ben-or,
author = {Aharonov, Dorit and Ben-Or, Michael},
title = {Fault-Tolerant Quantum Computation with Constant Error Rate},
journal = {SIAM Journal on Computing},
volume = {38},
number = {4},
pages = {1207-1282},
year = {2008},
doi = {10.1137/S0097539799359385},
URL = { https://doi.org/10.1137/S0097539799359385
},
eprint = { https://doi.org/10.1137/S0097539799359385
}
,
}

@article{beveridge_frieze_mcdiarmid,
  title = {Random Minimum Length Spanning Trees in Regular Graphs},
  author = {Beveridge, Andrew and Frieze, Alan M. and McDiarmid, Colin},
  journal = {Combinatorica},
  volume = {18},
  number = {3},
  pages = {311--333},
  year = {1998},
  doi = {10.1007/PL00009825},
  url = {https://doi.org/10.1007/PL00009825}
}

@book{friedli_velenik,
  title = {Statistical Mechanics of Lattice Systems: A Concrete Mathematical Introduction},
  author = {Friedli, Sacha and Velenik, Yvan},
  publisher = {Cambridge University Press},
  year = {2017},
  isbn = {9781107184824},
  doi = {10.1017/9781316882603},
  url = {https://doi.org/10.1017/9781316882603}
}

@article{helmuth_perkins_regts,
  title = {Algorithmic {Pirogov--Sinai} Theory},
  author = {Helmuth, Tyler and Perkins, Will and Regts, Guus},
  journal = {Probability Theory and Related Fields},
  volume = {176},
  number = {3--4},
  pages = {851--895},
  year = {2020},
  doi = {10.1007/s00440-019-00928-y},
  url = {https://doi.org/10.1007/s00440-019-00928-y}
}

@misc{liu_mcginley_schuster_gosset_scrooge,
  title = {Conditional Dependence and {Scrooge} Ensembles in Shallow Random Quantum Circuits},
  author = {Yinchen Liu and Max McGinley and Thomas Schuster and David Gosset},
  year = {2026},
  eprint = {2608.12255},
  archivePrefix = {arXiv},
  primaryClass = {quant-ph},
  url = {https://arxiv.org/abs/2608.12255}
}

@misc{putterman_zlokapa_cotler_thermal_states,
  title = {When Quantum Thermal States Look Classical},
  author = {Harald Putterman and Alexander Zlokapa and Jordan Cotler},
  year = {2026},
  eprint = {2607.28536},
  archivePrefix = {arXiv},
  primaryClass = {quant-ph},
  url = {https://arxiv.org/abs/2607.28536}
}

@article{king_hypercontractivity,
  title = {Hypercontractivity for Semigroups of Unital Qubit Channels},
  author = {King, Christopher},
  journal = {Communications in Mathematical Physics},
  volume = {328},
  number = {1},
  pages = {285--301},
  year = {2014},
  doi = {10.1007/s00220-014-1982-4},
  url = {https://doi.org/10.1007/s00220-014-1982-4}
}

@misc{shao_zhao_cheng_liu_tensor_network,
  title = {Complexity of Tensor Network Simulation for Noisy Quantum Circuits},
  author = {Yuguo Shao and Zishuo Zhao and Song Cheng and Zhengwei Liu},
  year = {2026},
  eprint = {2606.00474},
  archivePrefix = {arXiv},
  primaryClass = {quant-ph},
  url = {https://arxiv.org/abs/2606.00474}
}

@article{trivedi_cirac_simulability_transitions,
  title = {Transitions in Computational Complexity of Continuous-Time Local Open Quantum Dynamics},
  author = {Trivedi, Rahul and Cirac, J. Ignacio},
  journal = {Physical Review Letters},
  volume = {129},
  number = {26},
  pages = {260405},
  year = {2022},
  doi = {10.1103/PhysRevLett.129.260405},
  url = {https://doi.org/10.1103/PhysRevLett.129.260405}
}

@article{wahl_strelchuk_tensor_network_contraction,
  title = {Simulating Quantum Circuits Using Efficient Tensor Network Contraction Algorithms with Subexponential Upper Bound},
  author = {Wahl, Thorsten B. and Strelchuk, Sergii},
  journal = {Physical Review Letters},
  volume = {131},
  number = {18},
  pages = {180601},
  year = {2023},
  doi = {10.1103/PhysRevLett.131.180601},
  url = {https://doi.org/10.1103/PhysRevLett.131.180601}
}

@article{eppstein_miller_teng,
  title = {A Deterministic Linear Time Algorithm for Geometric Separators and Its Applications},
  author = {Eppstein, David and Miller, Gary L. and Teng, Shang-Hua},
  journal = {Fundamenta Informaticae},
  volume = {22},
  number = {4},
  pages = {309--330},
  year = {1995},
  doi = {10.3233/FI-1995-2241},
  url = {https://doi.org/10.3233/FI-1995-2241}
}

@article{kotecky_preiss_1986,
  title = {Cluster Expansion for Abstract Polymer Models},
  author = {Koteck{\'y}, Roman and Preiss, David},
  journal = {Communications in Mathematical Physics},
  volume = {103},
  number = {3},
  pages = {491--498},
  year = {1986},
  doi = {10.1007/BF01211762},
  url = {https://doi.org/10.1007/BF01211762}
}

@article{netocny_redig_2004,
  title = {Large Deviations for Quantum Spin Systems},
  author = {Neto{\v{c}}n{\'y}, Karel and Redig, Frank},
  journal = {Journal of Statistical Physics},
  volume = {117},
  number = {3--4},
  pages = {521--547},
  year = {2004},
  doi = {10.1007/s10955-004-3452-4},
  url = {https://doi.org/10.1007/s10955-004-3452-4}
}

@article{mann_minko_2024,
  title = {Algorithmic Cluster Expansions for Quantum Problems},
  author = {Mann, Ryan L. and Minko, Romy M.},
  journal = {PRX Quantum},
  volume = {5},
  number = {1},
  pages = {010305},
  year = {2024},
  doi = {10.1103/PRXQuantum.5.010305},
  url = {https://doi.org/10.1103/PRXQuantum.5.010305}
}

@misc{chen_rouze_chen_fermions_2025,
  title = {Convergence of the Cumulant Expansion and Polynomial-Time Algorithm for Weakly Interacting Fermions},
  author = {Chen, Hongrui and Rouz{\'e}, Cambyse and Chen, Jielun and Jiang, Jiaqing and Scalet, Samuel O. and Zhan, Yongtao and Chan, Garnet Kin-Lic and Ying, Lexing and Tong, Yu},
  year = {2025},
  eprint = {2512.12010},
  archivePrefix = {arXiv},
  primaryClass = {quant-ph},
  doi = {10.48550/arXiv.2512.12010},
  url = {https://arxiv.org/abs/2512.12010}
}

@misc{zlokapa_quasipolynomial_syk_2026,
  title = {A Rigorous Quasipolynomial-Time Classical Algorithm for {SYK} Thermal Expectations},
  author = {Zlokapa, Alexander},
  year = {2026},
  eprint = {2604.21089},
  archivePrefix = {arXiv},
  primaryClass = {quant-ph},
  doi = {10.48550/arXiv.2604.21089},
  url = {https://arxiv.org/abs/2604.21089}
}

@article{dennis_topological_quantum_memory_2002,
  title = {Topological Quantum Memory},
  author = {Dennis, Eric and Kitaev, Alexei and Landahl, Andrew and Preskill, John},
  journal = {Journal of Mathematical Physics},
  volume = {43},
  number = {9},
  pages = {4452--4505},
  year = {2002},
  doi = {10.1063/1.1499754},
  url = {https://doi.org/10.1063/1.1499754}
}

@article{alicki_thermal_stability_2010,
  title = {On Thermal Stability of Topological Qubit in {Kitaev's} {4D} Model},
  author = {Alicki, R. and Horodecki, M. and Horodecki, P. and Horodecki, R.},
  journal = {Open Systems \& Information Dynamics},
  volume = {17},
  number = {1},
  pages = {1--20},
  year = {2010},
  doi = {10.1142/S1230161210000023},
  url = {https://doi.org/10.1142/S1230161210000023}
}

@article{midha2026beyond,
  title={Beyond belief propagation: Cluster-corrected tensor network contraction with exponential convergence},
  author={Midha, Siddhant and Zhang, Yifan F},
  journal={PRX Quantum},
  volume={7},
  number={3},
  pages={033010},
  year={2026},
  publisher={APS}
}

@article{midha2026algorithmic,
  title={Algorithmic Locality via Provable Convergence in Quantum Tensor Networks},
  author={Midha, Siddhant and Zhang, Yifan F and Malz, Daniel and Abanin, Dmitry A and Gopalakrishnan, Sarang},
  journal={arXiv preprint arXiv:2604.21919},
  year={2026}
}

@article{CE_Mayer1937,
  author = {Mayer, Joseph E.},
  title = {The Statistical Mechanics of Condensing Systems. {I}},
  journal = {The Journal of Chemical Physics},
  volume = {5},
  number = {1},
  pages = {67--73},
  year = {1937},
  doi = {10.1063/1.1749933},
}

@article{CE_shorttimeLRB23,
  title = {Classical Simulation of Short-Time Quantum Dynamics},
  author = {Wild, Dominik S. and Alhambra, \'Alvaro M.},
  journal = {PRX Quantum},
  volume = {4},
  issue = {2},
  pages = {020340},
  numpages = {22},
  year = {2023},
  month = {Jun},
  publisher = {American Physical Society},
  doi = {10.1103/PRXQuantum.4.020340},
  url = {https://link.aps.org/doi/10.1103/PRXQuantum.4.020340}
}

@article{CE_metastab25,
  title = {Theory of Metastable States in Many-Body Quantum Systems},
  author = {Yin, Chao and Surace, Federica M. and Lucas, Andrew},
  journal = {Phys. Rev. X},
  volume = {15},
  issue = {1},
  pages = {011064},
  numpages = {52},
  year = {2025},
  month = {Mar},
  publisher = {American Physical Society},
  doi = {10.1103/PhysRevX.15.011064},
  url = {https://link.aps.org/doi/10.1103/PhysRevX.15.011064},
}

@article{CE_ldpcgap25,
  title = {Low-Density Parity-Check Codes as Stable Phases of Quantum Matter},
  author = {Yin, Chao and Lucas, Andrew},
  journal = {PRX Quantum},
  volume = {6},
  issue = {3},
  pages = {030329},
  numpages = {31},
  year = {2025},
  month = {Aug},
  publisher = {American Physical Society},
  doi = {10.1103/361k-nj4b},
  url = {https://link.aps.org/doi/10.1103/361k-nj4b},
}
